\def\th@plain{%
	\thm@notefont{}
	\itshape 
}
\def\th@definition{%
	\thm@notefont{}
	\normalfont 
}
\newcommand{\poly}{\mathrm{poly}}
\DeclareMathOperator{\dtw}{DTW}
\DeclareMathOperator{\dtep}{DTEP}
\newcommand{\wt}[1]{\widetilde{#1}}
\newcommand{\wb}[1]{\overline{#1}}
\DeclareMathOperator{\sk}{sk}
\DeclareMathOperator{\RR}{\mathbb{R}}
\DeclareMathOperator{\EE}{\mathbb{E}}
\newcommand{\overbar}[1]{\mkern 1.5mu\overline{\mkern-1.5mu#1\mkern-1.5mu}\mkern 1.5mu}
 \newcommand{\E}{\mathbb{E}}
\newcommand{\ed}{\operatorname{ed}}
\newtheorem{thm}{Theorem}[section]
\newtheorem{defn}[thm]{Definition}
\newtheorem{lem}[thm]{Lemma}
\newtheorem{prop}[thm]{Proposition}
\newtheorem{cor}[thm]{Corollary}
\newtheorem{theorem}[thm]{Theorem}
\newtheorem{definition}[thm]{Definition}
\newtheorem{lemma}[thm]{Lemma}
\theoremstyle{remark}
\title{The One-Way Communication Complexity of Dynamic Time Warping Distance}
\author{Vladimir Braverman\\Johns Hopkins University \\ \texttt{vova@cs.jhu.edu} \and Moses Charikar\\Stanford University \\ \texttt{moses@cs.stanford.edu} \and William Kuszmaul\\Stanford University \\\texttt{kuszmaul@cs.stanford.edu}\and David P. Woodruff\\ Carnegie Mellon University \\ \texttt{dwoodruf@cs.cmu.edu} \and Lin F. Yang\\Princeton University \\ \texttt{lin.yang@princeton.edu}}
\date{}
\begin{document}
\maketitle

\begin{abstract}
	We resolve the randomized one-way communication complexity of Dynamic
Time Warping (DTW) distance.  We show that there is an efficient
one-way communication protocol using $\widetilde{O}(n/\alpha)$ bits for
the problem of computing an $\alpha$-approximation for DTW between
strings $x$ and $y$ of length $n$, and we prove a lower bound of
$\Omega(n / \alpha)$ bits for the same problem. Our communication
protocol works for strings over an arbitrary metric of polynomial size
and aspect ratio, and we optimize the logarithmic factors depending on
properties of the underlying metric, such as when the points are
low-dimensional integer vectors equipped with various metrics or have
bounded doubling dimension. We also
consider linear sketches of DTW, showing that such sketches must have
size $\Omega(n)$.

 
\end{abstract}
	\thispagestyle{empty}
	\newpage
	\setcounter{page}{1}

\section{Introduction}\label{sec:intro}

The Dynamic Time Warping (DTW) distance is a widely used 
distance measure
between time series. 
It is particularly flexible in dealing with temporal sequences that vary in speed.
To measure the distance between two sequences, each sequence is ``warped" non-linearly in the time dimension (i.e., portions of each sequence are stretched by varying amounts) and the warped sequences are compared by summing up distances between corresponding elements.
DTW was popularized in the
 speech recognition
community by Sakoe and Chiba \cite{dtwband}. It
 was introduced in
the data mining community for mining time series by 
 Berndt and
Clifford \cite{bc94}. 
 It has many applications include phone
authentication \cite{dtwapp1}, signature
verification \cite{dtwapp2},
 speech recognition \cite{dtwapp3},
bioinformatics \cite{dtwapp4},
 cardiac medicine \cite{dtwapp5}, and
song identification
\cite{dtwapp6}. Several techniques and heuristics have been developed
to speed up natural dynamic programming algorithms for it
\cite{h75,dtwband,kp99,kp00,k02,BUWK15,PFWNCK16}. We refer the reader
also to Section 2 of \cite{heuristicapproaches} for more references.

Distance measures on sequences and time series have been extensively studied in the literature. 
Given two sequences $x = x_1, x_2, \ldots, x_m$ and $y = y_1, y_2, \ldots, y_n$ of points in $\mathbb{R}^d$ (or a metric space), one seeks to ``match the points up'' as closely as possible. 
One way of doing this is to 
define a ``correspondence'' $(\bar{x}, \bar{y})$ between $x, y$ by considering expansions of $x$ and $y$ to produce sequences of equal length, i.e., we duplicate each point $x_i$ some number $m_i$ times (to produce $\bar{x}$) and each point $y_j$ some $n_j$ times (to produce $\bar{y}$), so that $\sum_{i=1}^m m_i = \sum_{i=1}^n n_i$. 
Now, we define a vector $z$ with $z_i = d(\bar{x}_i, \bar{y}_i)$, for some underlying distance function $d$ and choose the correspondence which minimizes a certain function of $z$.
For example, minimizing $\sum z_i$ leads to the Dynamic Time Warping distance.
Minimizing $\max_i z_i$ leads to the discrete Fr\'echet distance.
The edit distance between strings can be similarly cast in this framework.
One unusual aspect of DTW (in contrast to its close cousins, edit distance and Fr\'echet distance) is that it does not satisfy the triangle inequality.

Edit distance and Fr\'echet distance have received a lot of attention in the theory community.
Fundamental questions such as exact and approximation algorithms, nearest neighbor search, sketching, and communication complexity have been intensively studied.
However, there are relatively few results about DTW.
Similar to edit distance, DTW can be computed by a quadratic-time dynamic program. 
Recently, it was shown that there is no strongly subquadratic-time algorithm for DTW unless the Strong Exponential Time Hypothesis is false \cite{DTWhard, DTWhard2}; approximation algorithms for DTW were obtained under certain assumptions about properties of the input strings \cite{DTWapprox1, DTWapprox2}; and slightly subquadratic algorithms for DTW have also been obtained \cite{DTWsubquadratic}. 
$\dtw$ was studied in the context of LSH \cite{ds17} and nearest neighbor search \cite{syf05,ep17}.
To the best of our knowledge, until now, there has been no study of the communication complexity of this basic distance measure on sequences. 

In this paper, we study the {\em one-way communication complexity} of DTW.
For a distance measure $d: \mathcal{X} \times \mathcal{X} \rightarrow \mathbb{R}^{\geq 0}$ such as DTW,
the goal in the one-way communication model is to define a randomized function $S$ and an estimation procedure $E$ so that for any $x, y \in \mathcal{X}$, given $S(x)$ and $y$, the output $E(S(x), y) \approx d(x,y)$ with large probability. 
There are various notions of approximation, but a natural one is that $d(x,y) \leq E(S(x), y) < \alpha d(x,y)$ for an approximation factor $\alpha > 1$.
The challenge is to understand how large $S(x)$ needs to be (for sequences of length $n$) in order to obtain approximation factor $\alpha$.
A closely related notion is that of sketching, where the estimation procedure takes $S(x)$ and $S(y)$ and we require that $E(S(x), S(y)) \approx d(x,y)$ with large probability.
This one-way communication complexity question has been studied previously for edit distance, in the context of document exchange \cite{bz16, belazzougui2015efficient, irmak2005improved}.
This model captures a number of applications, e.g., lower bounds in it apply to data stream algorithms and to sketching protocols. Upper bounds in it are appropriate for nearest neighbor search; indeed, the natural thing to do here is a lookup table, so a (one-way) sketch of size $b$ bits creates a table of size $2^b$ (see e.g., \cite{ADIW}).
One-way communication 
is one of the simplest and most natural settings in which one can study communication complexity,
and it has rich connections to areas such as information theory, coding theory, on-line computing, and learning
theory \cite{KNR}.


\subsection{Our Results}
Our main result is a tight $\widetilde{\Theta}(n/\alpha)$ bound, up to
logarithmic factors, on the one-way communication complexity of
computing an $\alpha$-approximation to $\dtw$. The results are
discussed in more detail below.

We present a communication protocol using $\widetilde{\Theta}(n/\alpha)$
bits which works for $\dtw$ on any underlying metric space of
polynomial size and aspect ratio (Theorem
\ref{thmDTWsketchgeneral}). We optimize the logarithmic factors in 
the important case when the points are natural numbers and the
distance $d(a,b) = |a-b|$, as
well as more generally when the points are low-dimensional integer
vectors equipped with various metrics (Theorem \ref{thmmainthmgeneralized}); we also optimize for the
important case where the underlying metric has small doubling
dimension (Theorem \ref{thmmainthmgeneralized}). At the cost of an
extra logarithmic factor in complexity, all of our protocols are also
time-efficient, in that Alice and Bob each run in polynomial time.

Next, we turn to lower bounds. Our communication protocol is
non-linear, and we show that in general linear sketches must have size
$\Omega(n)$ (Theorem \ref{thmlinear}). Moreover, we prove that our
upper bounds are within a polylogarithmic factor of tight,
establishing a randomized one-way communication lower bound of
$\Omega(n/\alpha)$ for any underlying metric space of size at least
three, for one-way communication algorithms which succeed with
constant success probability (Theorem \ref{thmmainlow}). We optimize
this in several ways: (1) when the underlying metric is generalized
Hamming space over a point set of polynomial size $n^{1 + \Omega(1)}$,
we improve the lower bound to $\Omega(n/\alpha \cdot \log n)$ for
algorithms which succeed with probability $1-1/n$, and show this is
optimal (Theorem \ref{thmgenhamcomplexity}); (2) for the natural
numbers, we improve the lower bound to $\Omega(n/\alpha \cdot
\log(\min(\alpha, |\Sigma|)))$ for algorithms which succeed with
probability at least $1-1/\min(\alpha, |\Sigma|)$ (Theorem
\ref{thmintlower}).  We note that our lower bound of $\Omega(n /
\alpha)$ applies even to approximating DTW in the low distance regime
(i.e., distinguishing $\dtw(x,y) \leq 1$ versus $\dtw(x,y) > \alpha$
with constant probability), and that in this regime the edit distance
admits a much smaller sketching complexity \cite{bz16, edsketch}. To
the best of our knowledge, our result provides the first separation
between the $\dtw$ and the edit distance.

We summarize our results in Table~\ref{tab:res}.
\begin{table}[htb!]
	\renewcommand{\arraystretch}{1.2}%
	\centering
	\begin{tabular*}{\textwidth}{c @{\extracolsep{\fill}}cccc}
		\hline\hline
		Model & Metric Space & Communication Bounds & Theorem\\
		\hline
		\multirow{6}{*}{One-way} & Finite & $O(n/\alpha\cdot\log \alpha\cdot\log^3 n)$$^*$ &   \ref{thmDTWsketchgeneral}
		\\
		& Natural Numbers & $O(n/\alpha\cdot \log \alpha\cdot \log^2 n \cdot \log\log\log n)$$^*$ &  \ref{cor:414}\\
		& $\ell_p^d$ & $O_{p, d}(n/\alpha\cdot \log \alpha\cdot \log^2 n \cdot \log\log\log n)$$^*$ &\ref{cor:414}\\
		& doubling constant $\lambda$ &   $O({\log \lambda \cdot n/\alpha}\cdot \log \alpha
  \cdot\log^2 n \cdot \log \log \log n)$$^*$&\ref{cor:doubling}\\
		 & Finite&  $\Omega(n/\alpha)$& \ref{thmmainlow}\\
		 & Generalized Hamming&  $\Theta(n/\alpha\cdot \log n)$$^\dagger$& \ref{thmgenhamcomplexity}\\
		 & Natural Numbers&  $\Omega(n/\alpha\cdot \log \min(\alpha, |\Sigma|))$$^\dagger$& \ref{thmintlower}\\
		Linear Sketch & Finite&  $\Omega(n)$& \ref{thmlinear}\\
		\hline
	\end{tabular*}
	\caption{Summary of results on computing $\alpha$ multiplicative approximation of DTW$_n$ over a metric space $\Sigma$ with aspect ratio $\poly (n)$, and for the finite metric upper bounds the size of the metric is also $\poly(n)$. $^*$These upper bounds are also time efficient. Inefficient protocols can remove an additional $\log \alpha$ factor in the communication complexity. $^\dagger$These lower bounds hold for protocols that are correct with probability $1-1/n$ or $1-1/\min(\alpha, |\Sigma|)$. The doubling constant of a metric
$(\Sigma, d)$ is $\lambda$ if for all $x \in \Sigma$ and $r > 0$, the
ball $B(x, 2r)$ can be covered by $\lambda$ balls of radius $r$. \label{tab:res}}
 
\end{table}
 The layout of the paper is as follows: We present
preliminaries in Section \ref{secprelim}. We give a detailed overview
of our techniques and results in Section~\ref{sec:overview}. In
Section \ref{seccomm} we present our upper bounds on the one-way
communication complexity of DTW. Finally, in Section \ref{sec:lb}, we
presetn our lower bounds.

\section{Preliminaries}\label{secprelim}

As a convention, we say an event occurs with \emph{high probability}
if it happens with probability at least $1 - \frac{1}{\poly(n)}$ for a
polynomial of our choice. Throughout the paper, we use $(\Sigma, d)$
to denote a finite metric space. We denote by $\Sigma^n$ the set of
strings of length $n$ over $\Sigma$ and by $\Sigma^{\le n}$ the set of
strings of length at most $n$ over $\Sigma$. An important property of
$\Sigma$ will be its \emph{aspect ratio}, which is defined as the
ratio between the diameter of $\Sigma$ and the smallest distance
between distinct points in $\Sigma$.

\paragraph{Dynamic Time Warping Distance} We study the
\emph{dynamic warping distance} (DTW) of strings $x, y\in \Sigma^{\le
  n}$.  Before we formally define the DTW distance, we first introduce
the notion of an \emph{expansion} of a string.
\begin{definition}
	The \emph{runs} of a string $x\in \Sigma^{\le n}$ are the maximal
        substrings consisting of a single repeated letter. Any string
        obtained from $x$ by extending $x$'s runs is an
        \emph{expansion} of $x$.
\end{definition}
For example, the runs of $aabbbccd$ are $aa$, $bbb$, $cc$, and
$d$. Given a string $x$, we can \emph{extend} a run in $x$ by further
duplicating the letter which populates the run. For example, the
second run in $aabbbccd$ can be extended to obtain $aabbbbccd$, and we
say the latter string is an expansion of the first.

Using the notion of an expansion, we can now define dynamic time warping.
\begin{definition}
Consider two strings $x, y\in \Sigma^{\le n}$. A
\emph{correspondence}\footnote{A related concept, \emph{traversal}, is
  sometimes used in the literature. A traversal can be viewed as the
  the set of matching edges of a correspondence.}  between $x$ and $y$
is a pair $(\overline{x}, \overline{y})$ of equal-length expansions of
$x$ and $y$. The \emph{edges} in a correspondence are the pairs of
letters $(\overline{x}_i, \overline{y}_i)$, and the \emph{cost} of an
edge is given by $d(\overline{x}_i, \overline{y}_i)$. The \emph{cost}
of a correspondence is the sum $\sum_i d(\overline{x}_i,
\overline{y}_i)$ of the costs of the edges between the two
expansions. A correspondence between $x$ and $y$ is said to be
\emph{optimal} if it has the minimum attainable cost, and the
resulting cost is called the \emph{dynamic time warping distance}
$\dtw(x, y)$.
\end{definition}

When discussing a correspondence $(\overline{x}, \overline{y})$, the following terms will be useful.

\begin{definition}
A run in $\wb{x}$ \emph{overlaps} a run in $\wb{y}$ if there is an
edge between them.  A letter $x_i$ is \emph{matched} to a letter $y_j$
if the extended run containing $x_i$ overlaps the extended run
containing $y_j$.
\end{definition}

Note that any minimum-length optimal correspondence between strings
$x, y \in \Sigma^{\le n}$ will be of length at most $2n$. In
particular if in an optimal correspondence a run $r_1$ in $x$ and a
run $r_2$ in $y$ have both been extended and overlap by at least one
letter, then there is a shorter optimal correspondence in which the
length of each run is reduced by one. Thus any minimum-length optimal
correspondence has the property that every edge $(\overline{x}_i,
\overline{y}_i)$ contains at least one letter from a run that has not
been extended, thereby limiting the length of the correspondence to at
most $2n$.

DTW can be defined over an arbitrary metric space $(\Sigma, d)$, and
is also well-defined when $d$ is a distance function not satisfying
the triangle inequality.

Throughout our proofs, we will often refer to DTW over generalized
Hamming space, denoted by $\dtw_0(x, y)$. As a convention, regardless
of what metric space the strings $x$ and $y$ are initially taken over,
$\dtw_0(x, y)$ is defined to be the DTW-distance between $x$ and $y$
obtained by redefining the distance function $d(\cdot, \cdot)$ to
return $1$ on distinct inputs.

\paragraph{One-Way Communication Complexity}

In this paper, we focus on the one-way communication model. In this
model, Alice is given an input $x$, Bob is given an input $y$, and Bob
wishes to recover a valid solution to a problem with some solution-set
$f(x, y) \subseteq \mathbb{R}$. (For convenience, we will refer to the
problem by its solution set $f(x, y)$.) Alice is permitted to send Bob
a single message $\sk(x)$, which may be computed in a randomized
fashion using arbitrarily many public random bits. Bob must then use
Alice's message $\sk(x)$ in order to compute some $F(\sk(x), y)$,
which he returns as his proposed solution to $f(x, y)$.

The pair $(\sk, F)$ is a \emph{$p$-accurate one-way
  communication protocol} for the problem $f(\cdot, \cdot)$ if for all
$x$ and $y$, the probability $\Pr[F(\sk(x), y) \in f(x, y)]$ that Bob
returns a correct answer to $f(x, y)$ is at least $p$. The protocol is
said to have \emph{bit complexity} at most $m$ if Alice's message
$\sk(x)$ is guaranteed not to exceed $m$ in length. Moreover, the
protocol is said to be \emph{efficient} if both $\sk$ and $F$ can be
evaluated in time polynomial in the length of $x$ and $y$.

Fix a parameter $p \in (0,1]$, the randomized \emph{one-way
    communication complexity} $\mathrm{CC}_{p}(f)$ of the problem $f$
  is the minimum attainable bit complexity of a $p$-accurate one-way
  communication protocol for $f$. 
The focus of this paper is on the one-way communication complexity of
the $\alpha$-DTW problem, defined as follows:
\begin{definition}[$\alpha$-DTW]
  The $\alpha\text{-DTW}(\Sigma^{\le n})$ problem is parameterized
  by an approximation parameter $1 \le \alpha \le n$. The inputs are a
  string $x \in \Sigma^{\le n}$ and a string $y \in \Sigma^{\le
    n}$. The goal is recover an $\alpha$-approximation for $\dtw(x,
  y)$. In particular, the set of valid solutions is
  \[\{t \mid \dtw(x, y) \le t < \alpha \cdot \dtw(x, y)\}.\]
\end{definition}

One can also consider the decision version of this problem, in which
one wishes to distinguish between distances at most $r$ and distances
at greater than $r\alpha$:
\begin{definition}[DTEP]
  The Decision Threshold Estimation Problem
  $\dtep_{r}^\alpha(\Sigma^{\le n})$, is paramaterized by a positive
  threshold $r > 0$ and an approximation parameter $1 \le \alpha \le n$. The
  inputs to the problem are a string $x \in \Sigma^{\le n}$ and a
  string $y \in \Sigma^{\le n}$. An output of $0$ is a valid solution
  if $\dtw(x, y) \le r\alpha$, and an output of $1$ is a valid
  solution of $\dtw(x, y) > r$.
\end{definition}

Notice that any algorithm for $\alpha$-DTW immediately gives an
solution for $\dtep_{r}^{\alpha}$ for any $r>0$. Conversely, any lower
bound for the communication complexity of $\dtep$ gives a lower bound
for the communication complexity of $\alpha$-DTW. For both of the
above two definitions, we may omit the sequence space $\Sigma^{\le n}$
if it is clear from the context.

\section{Technical Overview}\label{sec:overview}

In this section, we present the statements and proof overviews of our
main results.

\paragraph{Complexity Upper Bounds:}

Our starting point is the following: suppose that $x, y \in \Sigma^n$
for a metric space $\Sigma$ of polynomial size and aspect ratio, and
further that the distances between points are always either $0$ or at
least $1$. Alice and Bob wish to construct a $2/3$-accurate one-way
protocol for $\alpha$-DTW.  \\\\ {\it Collapsing Repeated Points.}
Consider the strings $c(x)$ and $c(y)$, formed by reducing each run of
length greater than one in $x$ and $y$ to the same run of length
one. If we define $l$ to be the length of the longest run in $x$ or
$y$, then $\dtw(x, y) \leq l \cdot \dtw(c(x),c(y))$. Indeed, any
correspondence $(\overbar{c(x)}, \overbar{c(y)})$ between $c(x)$ and
$c(y)$ gives rise to a correspondence $(\bar{x}, \bar{y})$ between $x$
and $y$ obtained by duplicating each coordinate in $\overbar{c(x)}$
and $\overbar{c(y)}$ a total of $l$ times. Moreover, since any
correspondence $(\bar{x}, \bar{y})$ between $x$ and $y$ is also a
correspondence between $c(x)$ and $c(y)$, it follows that $\dtw(c(x),
c(y)) \leq \dtw(x,y)$.  \\\\ {\it Inefficient Protocol via Hashing.}
Suppose Alice and Bob are guaranteed that $\dtw(x, y) \leq n/\alpha$,
and that the maximum run-length $l$ satisfies $l < \alpha$. Then it
suffices for Alice and Bob to compute $\dtw(c(x), c(y))$; and for this
it suffices for Bob to be able to reconstruct $c(x)$. The claim is
that from a random hash of $c(x)$ of length $O(n/\alpha \log n)$ bits,
given $c(y)$, Bob can reconstruct $c(x)$. Indeed, given that
$\dtw(c(x), c(y)) \le n / \alpha$, and given that the runs in $c(x)$
and $c(y)$ are all of length one, one can verify that there must be an
optimal correspondence $(\wb{c(x)}, \wb{c(y)})$ between $c(x)$ and
$c(y)$ such that $\wb{c(y)}$ is obtained from $c(y)$ by extending at
most $n / \alpha$ runs. Since there are $n^{O(n / \alpha)}$ ways to
choose which runs in $c(y)$ are extended, and since there are then
$n^{O(n / \alpha)}$ ways to choose the new lengths to which those runs
are extended, it follows that there are only $n^{O(n / \alpha)}$
options for $\wb{c(y)}$. Moreover, because $\overline{c(x)}$ and
$\overline{c(y)}$ differ in at most $n / \alpha$ positions, for a
given option of $\overline{c(y)}$ there are only $n^{O(n / \alpha)}
\cdot |\Sigma|^{O(n / \alpha)} = n^{O(n / \alpha)}$ options for
$\wb{c(x)}$ and thus for $c(x)$. Since starting from $c(y)$, there are
only $n^{O(n / \alpha)}$ options for $c(x)$, meaning that a $O(n /
\alpha \log n)$-bit hash allows Bob to recover $c(x)$ with high
probability.  \\\\ {\it Efficiency via Edit Distance Sketch.}  In
addition to requiring that $\dtw(x, y) \le n / \alpha$ and $l <
\alpha$, the above protocol is inefficient since Bob needs to
enumerate over all possibilities of $c(x)$ and compute the hash value
of each. Exploiting the fact that $c(x)$ and $c(y)$ contain only runs
of length one, we prove that $\dtw(c(x),c(y))$ is within a constant
factor of the edit distance between $c(x)$ and $c(y)$. This means that
Alice can instead invoke the edit-distance communication protocol of
\cite{edsketch} of size $O(n/\alpha \log n \log \alpha)$, which allows
Bob to efficiently recover $c(x)$ using the fact that the edit
distance between $c(x)$ and $c(y)$ is $O(n/\alpha)$.  \\\\ {\it
  Handling Heavy Hitters.}  The arguments presented so far require
that $x$ and $y$ contain no runs of length greater than $\alpha$. We
call such runs \emph{heavy hitters}. To remove this restriction, a key
observation is that there can be at most $n/\alpha$ heavy
hitters. Therefore Alice can communicate to Bob precisely which runs
are heavy hitters in $x$ using $O(n/\alpha \log n)$ bits. The players
then proceed as before: Alice collapses her input $x$ to $c(x)$ by
removing consecutive duplicates, and Bob collapses his input $y$ to
$c(y)$ by removing consecutive duplicates.  We still have $\dtw(c(x),
c(y)) \leq \dtw(x,y)$ since any correspondence between $x$ and $y$ is
a correspondence between $c(x)$ and $c(y)$. Thus, as before, Bob can
reconstruct $c(x)$ whenever $\dtw(x,y) \leq n/\alpha$. Now, though, it
could be that $\dtw(x,y) > \alpha \dtw(c(x), c(y))$ because of the
positions in $c(x)$ and $c(y)$ that occur more than $\alpha$
times. However, Bob uses his knowledge of the locations and values of
the heavy hitters, together with $c(x)$, to create a string $x'$
formed from $x$ by collapsing runs of length less than $\alpha$, and
not doing anything to runs of length at least $\alpha$.  Now by
computing $\dtw(x', y)$, Bob obtains a $\alpha$-approximation for
$\dtw(x, y)$, since any correspondence between $x'$ and $y$ gives rise
to a correspondence between $x$ and $y$ by duplicating each letter
$\alpha$ times.

Having handled the heavy hitters, the only remaining requirement by
our protocol is that the distances between letters in $x$ and $y$ be
zero and one. Thus we arrive at the following:
\begin{prop}[Protocol over Hamming Space]
 Consider $\dtw$ over a metric space $\Sigma$ of polynomial size with
 distances zero and one. Then for $p = 1 - {\poly(n^{-1})}$, there is
 an efficient $p$-accurate one-way communication protocol for
 $\alpha$-DTW over $\Sigma^{\le n}$ which uses $O\left({n}\alpha^{-1}
 \cdot\log \alpha \cdot \log n\right)$ bits.  Moreover, for any
 $\delta \in (0, 1)$, there is an inefficient $(1-\delta)$-accurate
 protocol for $\alpha$-DTW$(\Sigma^{\le n})$ using space
 $O({n}\alpha^{-1} \cdot \log n + \log\delta^{-1})$ for any
 $\delta\in(0,1)$.
 \label{propprotocolhamming}
\end{prop}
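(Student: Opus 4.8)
The plan is to assemble the pieces from the Technical Overview into one protocol and then check correctness and cost. Set $K := 2n/\alpha$ and call a run of $x$ or $y$ \emph{heavy} if its length is at least $\lceil\alpha\rceil$; since $|x|,|y|\le n$, the string $x$ has at most $n/\alpha$ heavy runs. Alice's message has two parts. First she sends the list of heavy runs of $x$, each described by its index among the runs of $x$ and its length, costing $O(n\alpha^{-1}\log n)$ bits. Second she enables Bob to recover the collapsed string $c(x)$ (every run of $x$ shrunk to length one): in the \emph{efficient} variant, the document-exchange sketch of $c(x)$ from \cite{edsketch} for edit-distance threshold $\Theta(K)=\Theta(n/\alpha)$, which costs $O(n\alpha^{-1}\log n\log\alpha)$ bits since $n/\Theta(K)=\Theta(\alpha)$, plus an $O(\log n)$-bit fingerprint of $c(x)$; in the \emph{inefficient} variant, a uniformly random hash of $c(x)$ of length $O(n\alpha^{-1}\log n + \log\delta^{-1})$. (We may assume $\alpha\ge 2$, as otherwise the claimed bound is vacuous.)

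Bob collapses his input $y$ to $c(y)$ and attempts to recover $c(x)$: in the efficient variant he runs the decoder of \cite{edsketch} and accepts its output iff the fingerprint matches; in the inefficient variant he enumerates all run-length-one strings $s$ with $\dtw(s,c(y))\le K$ and accepts the one whose hash matches, if any. Two facts drive the analysis. (i) Recovery is \emph{guaranteed} to succeed whenever $\dtw(x,y)\le K$: then $\dtw(c(x),c(y))\le\dtw(x,y)\le K$, so in the efficient case $\ed(c(x),c(y))=O(\dtw(c(x),c(y)))=O(K)$ (using that DTW and edit distance agree up to constant factors on run-length-one strings), and in the inefficient case $c(x)$ lies in the enumerated candidate set. (ii) The candidate set / fingerprint reliably single out $c(x)$: by the counting argument of the overview, an optimal correspondence between two run-length-one strings at DTW-distance $\le K$ extends at most $O(K)$ runs and then differs in at most $K$ further positions, so there are only $n^{O(K)}|\Sigma|^{O(K)} = n^{O(n/\alpha)}$ candidates for $c(x)$; hence a hash of the stated length isolates $c(x)$ among them with probability $1-\delta$, while in the efficient case the analogous guarantee is furnished by \cite{edsketch} together with the fingerprint, with failure probability $\poly(n^{-1})$, tunable to any polynomial.

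If Bob fails to recover $c(x)$, he outputs $2n$: a failure implies, outside an event of probability $\delta$ (resp.\ $\poly(n^{-1})$), that $\dtw(x,y)>K=2n/\alpha$, and since always $\dtw(x,y)\le 2n$, the value $2n$ then lies in $[\dtw(x,y),\,\alpha\,\dtw(x,y))$. If Bob recovers $c(x)$, he forms the string $x'$ obtained from $x$ by collapsing exactly the \emph{non-heavy} runs to length one — the $i$-th run of $x$ is the $i$-th letter of $c(x)$, so $c(x)$ together with the heavy-run list determines $x'$ — computes $d' := \dtw(x',y)$ by the standard quadratic dynamic program, and outputs $t := (\lceil\alpha\rceil-1)\,d'$. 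Validity of $t$: $x$ is an expansion of $x'$ and expansion is transitive, so every correspondence between $x$ and $y$ is also one between $x'$ and $y$, giving $d'\le\dtw(x,y)$; conversely, duplicating every letter of an optimal $(x',y)$-correspondence $\lceil\alpha\rceil-1$ times yields an expansion of $x$ (the collapsed runs had length $<\lceil\alpha\rceil$) paired with an expansion of $y$, so $\dtw(x,y)\le(\lceil\alpha\rceil-1)d' = t$; and $t = (\lceil\alpha\rceil-1)d' < \alpha d'\le\alpha\,\dtw(x,y)$. Thus $\dtw(x,y)\le t<\alpha\,\dtw(x,y)$, as required. (If $d'=0$ then $\dtw(x,y)=0$ and Bob outputs $0$, treated as acceptable by the usual convention.)

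The step needing the most care is the correctness chain in the recovery-succeeds branch: arranging that $x'$ is simultaneously cheap for Bob to reconstruct, a lower bound for $\dtw(x,y)$, and within a factor strictly below $\alpha$ of $\dtw(x,y)$ from above — and, relatedly, making the protocol behave correctly without ever having to decide whether $\dtw(x,y)$ exceeds $\Theta(n/\alpha)$. Summing Alice's two message parts gives total cost $O(n\alpha^{-1}\log n\log\alpha)$ bits in the efficient case and $O(n\alpha^{-1}\log n + \log\delta^{-1})$ bits in the inefficient case; and all steps of the efficient variant — collapsing, sketching via \cite{edsketch}, decoding, building $x'$, and the quadratic DTW dynamic program — run in polynomial time, which completes the proof.
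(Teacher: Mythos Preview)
Your proof is correct and follows essentially the same approach as the paper: send the heavy-run list, use document exchange (via \cite{edsketch} or a random hash) on the collapsed string $c(x)$, and then compute $\dtw(x',y)$ where $x'$ is $x$ with only the non-heavy runs collapsed. The paper in fact proves the slightly more general bounded $\alpha$-DTW statement (Proposition~\ref{propboundedKDTW}) and notes that it implies this proposition; your write-up is a direct proof of the Hamming-space case that additionally spells out what to do on failure (output $2n$, using $\dtw_0(x,y)\le n$) and is more careful than the paper about the strictness of the approximation (outputting $(\lceil\alpha\rceil-1)\cdot d'$ rather than $d'$ itself).
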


Note that our protocol is constructive in that it actually allows for
$y$ to build a correspondence between $x$ and $y$ satisfying the
desired approximation bounds.

In generalizing to DTW over arbitrary metric spaces, we will use our
protocol over Hamming Space as a primitive. Moreover, we will exploit
the fact that it can be used to solve a slightly more sophisticated
problem which we call \emph{bounded $\alpha$-DTW}:
\begin{definition}[Bounded $\alpha$-DTW]
	\label{def:bounded-alpha-dtw}
In the bounded $\alpha$-DTW$(\Sigma^{\le n})$ problem, Alice and Bob
are given strings $x$ and $y$ in $\Sigma^{\le n}$. The goal for Bob
is:
\begin{itemize}
	\item If $\dtw_0(x, y) \le n / \alpha$, solve $\alpha$-$\dtw$ on $(x, y)$.
	\item If $\dtw_0(x, y) > n / \alpha$, either solve
          $\alpha$-$\dtw$ on $(x, y)$, or return ``Fail''.
\end{itemize}
\end{definition}
A crucial observation is that Proposition \ref{propprotocolhamming}
continues to hold without modification if the alphabet $\Sigma$ has
arbitrary distances and our goal is to solve the bounded $\alpha$-DTW
problem.
\\\\ {\it Extending Distance Range via HSTs.}  The result for the
bounded $\alpha$-DTW problem allows for Bob to either determine an
$\alpha$-approximation for $\dtw(x, y)$, or to determine that $\dtw(x,
y) > n / \alpha$. As a result the algorithm can be used to distinguish
between $\dtw(x, y) \le n / \alpha$ an $\dtw(x, y) > n$. One issue
though is that the argument cannot distinguish between larger
distances, such as for example between the cases $\dtw(x,y) \leq n$
and $\dtw(x,y) > n \alpha$. A key idea for resolving this issue is to
first consider the $\dtw$ problem over a $2$-hierarchically
well-separated tree metric (HST), and then use the embedding of
\cite{trees} to embed an arbitrary finite metric of polynomial size
and aspect ratio into such a metric. A $2$-hierarchically
well-separated tree metric is defined as the shortest path metric on a
tree whose nodes are elements of $\Sigma$ and whose edges have
positive weights for which on any root-to-leaf path, the weights are
geometrically decreasing by a factor of $2$. Since the weights
decrease geometrically, for convenience we define pairwise distances
in the tree metric to be the maximum edge length on the tree path
between the nodes, a notion of distance which coincides with the sum
of edge lengths up to a constant factor.

Suppose the points in $\Sigma$ correspond to a $2$-hierarchically
well-separated tree metric and we wish to distinguish between whether
$\dtw(x, y) \le nr / \alpha$ or $\dtw(x, y) > nr$. A crucial idea is
what we call the $r$-simplification $s_r(x)$ of a string $x$, which
replaces each character $p_i$ in $x$ with its highest ancestor in the
tree reachable via edges of weight at most $r/4$. A key property is
that $\dtw(s_r(x), s_r(y)) \leq \dtw(x,y)$, since for two points
$\ell_1, \ell_2$ in $x, y$, respectively, either they each get
replaced with the same point in the $r$-simplifications of $x$ and
$y$, or the maximu-length edge on a path between $\ell_1$ and $\ell_2$
is the same before and after $r$-simplification. Notice that if a
point in $s_r(x)$ is not equal to a point in $s_r(y)$, then their
distance is at least $r/4$, by the definition of an
$r$-simplification. Combining the preceding two observations, if
$\dtw(x,y) \leq nr/\alpha$, then $\dtw(s_r(x), s_r(y)) \le nr /
\alpha$ and there is a correspondence for which $s_r(x)$ and $s_r(y)$
disagree in at most $4n/\alpha$ positions. On the other hand, since we
only ``collapse'' edges of weight at most $r/4$, we have that if
$\dtw(x,y) > nr$, then $\dtw(s_r(x), s_r(y)) > nr/2$, since the
optimal correspondence has length at most $2n$.

It follows that the cases of $\dtw(x, y) \le nr / \alpha$ and $\dtw(x,
y) > nr$, correspond with the cases of $\dtw(s_r(x), s_r(y)) \leq
nr/\alpha$ and $\dtw(s_r(x), s_r(y)) > nr/2$, and moreover that when
$\dtw(s_r(x), s_r(y)) \leq nr/\alpha$, there is an optimal
correspondence for which $s_r(x)$ and $s_r(y)$ disagree in at most
$4n/\alpha$ positions. Thus we can use our protocol for the
$\alpha$-bounded DTW problem to figure out which case we are in, for a
given $r$. This gives a protocol for distinguishing between whether
$\dtw(x, y) \leq nr/\alpha$ or $\dtw(x, y) > nr$.

In order to obtain an $\alpha$-approximation for $\dtw(x, y)$, the
rough idea now is to run the above protocol multiple times in
parallel as $r$ varies in powers of $2$, and then to find the smallest
value of $r$ for which the protocol declares $\dtw(x,y) \leq nr$. 
This works as long as points are taken from a $2$-hierarchically 
well-separated tree metric. In order to extend the result to hold over
arbitrary finite metrics of polynomial size and aspect ratio, the
final piece is the embedding $\phi$ of \cite{trees}, which embeds any
polynomial size metric $\Sigma$ into a $2$-hierarchically
well-separated tree metric for which  for all $a,b \in \Sigma$,
$d(a,b) \leq d(\phi(a), \phi(b))$ and  ${\bf E}(d(\phi(a), \phi(b)))
= O(\log n) d(a,b)$. This ``lopsided'' guarantee is sufficient for
us since it ensures in any correspondence the sum of distances 
after performing the embedding will not shrink, while for a single
fixed optimal correspondence, by a Markov bound the sum of distances
after performing the embedding will not increase by more than an
$O(\log n)$ factor with constant probability. Putting the pieces
together we are able to obtain an efficient $2/3$-accurate one-way
communication protocol for $\alpha$-DTW using $O(n / \alpha \log \alpha
\log^3 n)$ bits. Formally, we arrive at the following theorem:

\begin{thm}[Main Upper Bound]
  Let $\Sigma$ be a metric space
  of size and aspect ratio polynomial in $n$.  Then there is an efficient
  $2/3$-accurate one-way communication protocol for $\alpha$-DTW over
  $\Sigma$ with space complexity $O\left({n}{\alpha^{-1}}\cdot \log
  \alpha \cdot\log^3 n\right)$ and an inefficient $2/3$-accurate
  one-way protocol with complexity
  $O\big({n}{\alpha^{-1}}\cdot \log^3 n)$.
  \label{thm:overviewmain}
\end{thm}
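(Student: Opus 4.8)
The idea is to bootstrap from the Hamming-space protocol of Proposition~\ref{propprotocolhamming} — or rather from its strengthening to the bounded $\alpha$-DTW problem of Definition~\ref{def:bounded-alpha-dtw}, which (as noted there) holds verbatim even when the alphabet carries arbitrary distances — in two stages: first I would handle metrics that are $2$-hierarchically well-separated trees (HSTs), and then reduce a general metric to that case via a probabilistic tree embedding. Throughout one may assume $\alpha \ge C\log n$ for a suitable constant $C$: for smaller $\alpha$ Alice just sends $x$ verbatim in $O(n\log n) = O((n/\alpha)\log^2 n)$ bits, which is within the claimed budget.

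\emph{Stage 1 (HST metrics).} Let $\Sigma$ be the leaf set of a $2$-HST, with leaf-to-leaf distance taken to be the largest edge weight on the connecting path; polynomial aspect ratio means there are $O(\log n)$ distinct edge scales. For a dyadic scale $r$, define the $r$-simplification $s_r(x)$ by replacing every character of $x$ with its highest ancestor reachable along edges of weight $\le r/4$. I would establish three facts: (i) $\dtw(s_r(x),s_r(y)) \le \dtw(x,y)$, since under any correspondence each edge either becomes a pair of equal characters or retains its maximal-edge-weight cost; (ii) distinct characters of $s_r(x),s_r(y)$ lie at distance $\ge r/4$, so if $\dtw(x,y)\le nr/\alpha$ then pushing an optimal correspondence through $s_r$ yields one of $\dtw$-cost $\le nr/\alpha$ on which $s_r(x),s_r(y)$ disagree in $\le 4n/\alpha$ places, i.e.\ $\dtw_0(s_r(x),s_r(y)) = O(n/\alpha)$; and (iii) since $s_r$ only collapses edges of weight $\le r/4$ and an optimal correspondence has length $\le 2n$, $\dtw(x,y)>nr$ implies $\dtw(s_r(x),s_r(y))>nr/2$. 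Rescaling the HST distances by $4/r$ so that they are $0$ or $\ge 1$, facts (ii)--(iii) put $(s_r(x),s_r(y))$ squarely in the regime of the bounded $O(\alpha)$-DTW primitive, which therefore either $O(\alpha)$-approximates $\dtw(s_r(x),s_r(y))$ or reports ``Fail''; combined with (i) this distinguishes $\dtw(x,y)\le nr/\alpha$ from $\dtw(x,y)>nr$ with the success probability of Proposition~\ref{propprotocolhamming}. Running this distinguisher over all $O(\log n)$ dyadic scales $r\in[1/\poly(n),\poly(n)]$ in parallel — each tuned to fail with probability $1/\poly(n)$, so the union bound over scales is free — Bob finds the smallest $r^\star$ at which the verdict flips to ``$\dtw(x,y)\le nr^\star$'' and outputs $nr^\star$, an $O(\alpha)$-approximation; rescaling $\alpha$ by the hidden constant makes it an $\alpha$-approximation. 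The cost is $O((n/\alpha)\log\alpha\log n)$ bits per scale efficiently (the $\log\alpha$ coming from the edit-distance sketch of \cite{edsketch} inside Proposition~\ref{propprotocolhamming}) or $O((n/\alpha)\log n)$ inefficiently, hence $O((n/\alpha)\log\alpha\log^2 n)$, resp.\ $O((n/\alpha)\log^2 n)$, in total.

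\emph{Stage 2 (general metrics).} For an arbitrary $\Sigma$ of polynomial size and aspect ratio, draw from public randomness the probabilistic embedding $\phi$ of \cite{trees} into a $2$-HST, which is non-contracting ($d(\phi(a),\phi(b))\ge d(a,b)$) and has $\EE[d(\phi(a),\phi(b))] = O(\log n)\,d(a,b)$; both players apply $\phi$ character by character. Non-contraction gives $\dtw(\phi(x),\phi(y))\ge\dtw(x,y)$ always, while applying linearity of expectation to a single fixed optimal correspondence for $(x,y)$ and then Markov's inequality gives $\dtw(\phi(x),\phi(y)) = O(\log n)\,\dtw(x,y)$ with probability $\ge 2/3$. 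So I would run the Stage-1 protocol over the HST with internal parameter $\alpha' = \Theta(\alpha/\log n)$; its $O(\alpha')$-approximation of $\dtw(\phi(x),\phi(y))$, rescaled by the $\Theta(\log n)$ distortion factor, is an $O(\alpha)$-approximation of $\dtw(x,y)$ with probability $\ge 2/3$. Substituting $\alpha'=\Theta(\alpha/\log n)$ multiplies the Stage-1 bounds by $O(\log n)$, giving $O((n/\alpha)\log\alpha\log^3 n)$ bits efficiently and $O((n/\alpha)\log^3 n)$ inefficiently, as claimed; the protocol is constructive because Proposition~\ref{propprotocolhamming} lets Bob recover the simplified strings and hence an explicit near-optimal correspondence.

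\emph{Main obstacle.} The delicate point is the parameter accounting. The embedding expands distances by $\Theta(\log n)$ only in expectation (hence, after Markov, only with constant probability), and the $r$-simplification carries an additive slack of up to $nr/2$; I expect the main work to be verifying that these two losses, together with the constant-factor slack of the bounded $\alpha$-DTW primitive, compose into a clean $\alpha$-approximation rather than an $\alpha\cdot\mathrm{polylog}(n)$ one, and that the $\Theta(\log n)$ scale-$r$ sub-protocols can each be driven to success probability $1-1/\poly(n)$ within budget so that the union bound over scales, combined with the constant-probability Markov event, still leaves overall success probability $\ge 2/3$. A secondary check is that $\phi$ is injective on $\Sigma$ (so runs, and hence the structure of expansions, are preserved) and that the rescaled strings $(s_r(x),s_r(y))$ really satisfy the hypotheses of the arbitrary-distance version of Proposition~\ref{propprotocolhamming}.
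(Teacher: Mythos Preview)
Your proposal is correct and follows essentially the same approach as the paper: the bounded $\alpha$-DTW primitive, the $r$-simplification on HSTs to obtain a per-scale gap distinguisher, combining $O(\log n)$ dyadic scales, and then the FRT tree embedding with a Markov bound for general metrics are exactly the ingredients the paper uses (Proposition~\ref{propboundedKDTW}, Lemmas~\ref{lemthreeprops}--\ref{lemma:gap_to_approx}, and Theorem~\ref{thmDTWsketchgeneral}). The only organizational difference is that the paper also invokes the bounded $\alpha$-DTW protocol directly on $(x,y)$ (not just on the simplifications $s_r(x),s_r(y)$) to cleanly handle the regime $\dtw_0(x,y)\le n/\alpha$ before falling back to the gap distinguishers; your version folds this into the small-$r$ scales, which works but requires a little extra care at the bottom scale and for the $\dtw(x,y)=0$ edge case.
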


\noindent {\it Optimizing in the Case of Natural Numbers.}
We can further optimize the logarithmic factors in our upper bound when the underlying alphabet $\Sigma$ is,
for example, the natural numbers and $d(a,b) = |a-b|$. We handle the
case  $\dtw(x,y) \leq n/\alpha$ as before. However, for larger
values of $\dtw(x,y)$, we take a different approach.

We first explain the case of distinguishing $\dtw(x,y) \leq n$ versus
$\dtw(x,y) > \alpha n$. The idea is to impose a randomly shifted
grid of side length $\alpha / 4$, and to round each point in $x$
and $y$ down to the nearest smaller grid point, resulting in strings
$x'$ and $y'$.  Define a \emph{short edge} in a correspondence to be an
edge of cost at most $\alpha / 4$, and otherwise call the edge a
\emph{long edge}. We assume w.l.o.g. that any
correspondence has length at most $2n$.

Suppose first $\dtw(x,y) \leq n$, and consider an optimal
correspondence.  We will show that the effect of rounding is such
that with probability at least $2/3$, $\dtw(x', y') \le O(n)$. First
we consider what effect rounding has on the short edges.  The expected
number of short edges with endpoints that get rounded to different
grid points is at most

\begin{equation*}
  \sum_{\textrm{short edge length }l}
  \frac{l}{\alpha / 4} \leq \frac{4\dtw(x,y)}{\alpha}.
\end{equation*}
Each such edge has its length increased by at most
$\alpha / 4$ after rounding, and so the expected contribution of
short edges to the correspondence after rounding is at most
$O(\dtw(x,y))$. Since each long edge has its length increase by at most
an additive $\alpha / 4$, and its original length is at least
$\alpha/4$, its contribution changes by at most a constant factor, so
the total contribution of long edges after rounding is
$O(\dtw(x,y))$. Hence, when $\dtw(x,y) \leq n$,  with probability at
least $2/3$ after rounding, we have $\dtw(x' , y') = O(n)$.

Next suppose $\dtw(x,y) > n\alpha$, and consider any
correspondence. The total change in the cost of the correspondence
that can result from the rounding procedure is at most $2n \cdot
\alpha / 4$, since there are at most $2n$ edges in
total. Consequently the effect of rounding is such that $\dtw(x',y') > n \alpha / 2$.

It follows that when comparing the cases of $\dtw(x, y) \le n$ and
$\dtw(x, y) > n\alpha$, there is an $\Omega(\alpha)$-factor gap
between $\dtw(x', y')$ in the two cases. Further, after rounding to
grid points, all non-equal points have distance at least $\alpha/4$,
and so if $\dtw(x',y') \leq n$, then there is a correspondence on
which they differ in at most $O(n/\alpha)$ positions. Thus our
protocol for bounded $\alpha$-DTW can be applied to distinguish
between the two cases. A similar approach can be used to distinguish
between $\dtw(x, y) \le rn / \alpha$ and $\dtw(x, y) > rn$ in general,
and this can then be used to solve $\alpha$-DTW similarly as for
$2$-hierarchically well-separated tree metrics above. We save roughly
a $\log n$ factor here because we do not incur the $\log n$ factor
distortion of embedding an arbitrary metric into a tree metric.

We remark that our algorithm in the $1$-dimensional natural number
case uses a similar grid snapping as used in \cite{ds17} for their
nearest neighbor search algorithm for Frech\'et distance.  Recently,
Bringmann (personal communication) obtained a sketch for Frech\'et
distance which builds upon the ideas in
\cite{ds17} and uses $O(n/\alpha)$ bits. To the best of our knowledge,
these techniques do not yield nontrivial results for Dynamic Time
Warping, however.
\\\\ {\it A Unified Approach.} To unify the argument for
2-hierarchically well-separated tree metrics and the natural numbers,
we recall the definition of a $\sigma$-separable metric space. A
$\delta$-bounded partition of a metric space $(\Sigma, d)$ is a partition such
that the diameter of each part is at most $\delta$. A distribution
over partitions is then called $\sigma$-separating if for all $x, y
\in \Sigma$, the probability that $x$ and $y$ occur in different parts
of the partition is at most $\sigma \cdot d(x,y)/\delta$. We say
$\Sigma$ is $\sigma$-separable if for every $\delta > 0$, there exists
a $\sigma$-separating probability distribution over $\delta$-bounded
partitions of $\Sigma$. One can also define an efficient notion of
this, whereby the distribution over partitions is efficiently sampleable.

By adapting our argument for the natural numbers to $\sigma$-separable
metrics of polynomial size and aspect ratio, we obtain an efficient
$2/3$-accurate protocol for $\alpha$-$\dtw$ with bit complexity
$O(\sigma n/\alpha \log^3 n \log \log \log n)$, where the $\log \log
\log n$ comes from minor technical subtitles. For general metrics, it
is known that $\sigma = O(\log n)$, while for the natural numbers,
$\sigma = O(1)$. Consequently, our result for $\sigma$-separable
metrics captures both the result obtained using HSTs (up to a factor
of $\log \log \log n$) as well as the optimization for the natural
numbers. Moreover, the theorem allows for space savings over many
additional metrics, such as low-dimensional integer vectors equipped
with $\ell_p$-norms, metrics with bounded doubling dimension, etc.,
all of which have $\sigma \ll O(\log n)$, allowing for improvement
over our result based on HSTs. The general result we arrive at is
captured formally in the following theorem:

\begin{thm}[Extended Main Upper Bound]
   Let $(\Sigma, d)$ be a metric space of size and aspect ratio
   $\poly(n)$. Suppose that $(\Sigma, d)$ is efficiently
   $\sigma$-separable for some $1 \le \sigma \le O(\log n)$. Then
   there is an efficient $2/3$-accurate one-way communication protocol
   for $\alpha$-DTW$(\Sigma^{\le n})$ with space complexity
   $O\left({\sigma n}{\alpha^{-1}}\cdot \log \alpha \cdot\log^2
   n \cdot \log \log \log n\right)$ and an inefficient $2/3$-accurate one-way protocol with
   space complexity $O\big({\sigma
     n}{\alpha^{-1}}\cdot\log^2 n \cdot \log \log \log n\big).$
   \label{thm:overviewextended}
\end{thm}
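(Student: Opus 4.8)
The plan is to use the bounded‑$\alpha$‑DTW protocol of Proposition~\ref{propprotocolhamming} as a black box -- recall that, by the remark following Definition~\ref{def:bounded-alpha-dtw}, it applies over an alphabet carrying an \emph{arbitrary} distance function and solves bounded $\alpha$‑DTW -- and to supply the one missing ingredient: a reduction that covers the whole range $\dtw(x,y)\in\{0\}\cup[1,\poly(n)]$ rather than only the regime $\dtw(x,y)\le n/\alpha$. Scale the metric so its minimum interpoint distance is $1$; then there are only $O(\log n)$ ``scales'' $\rho\in\{1,2,4,\dots,n\cdot\mathrm{diam}(\Sigma)\}$ to consider. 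For each $\rho$, Alice and Bob use public randomness to draw a $\delta_\rho$‑bounded $\sigma$‑separating partition $\pi_\rho$ of $\Sigma$ with $\delta_\rho:=\rho/(8n)$ (this is where efficient $\sigma$‑separability enters), and they replace each letter of $x$ and $y$ by the identity of its part, obtaining $\hat x,\hat y$ over the quotient alphabet $\Sigma/\pi_\rho$ (of size $\le|\Sigma|=\poly(n)$) equipped with the publicly known distance function $\hat d_\rho([p],[q]):=\min\{d(p',q'):p'\in[p],\,q'\in[q]\}$, so that within a part $\hat d_\rho$ is $0$ and across parts $\hat d_\rho$ agrees with $d$ on any representatives up to $\pm 2\delta_\rho$. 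Because $\hat d_\rho\le d$, projecting any $(x,y)$‑correspondence shows $\dtw_{\hat d_\rho}(\hat x,\hat y)\le\dtw(x,y)$; lifting the minimum‑length optimal $(\hat x,\hat y)$‑correspondence (length $\le 2n$) back to $x,y$ with arbitrary representatives increases each edge cost by at most $2\delta_\rho$, giving $\dtw_{\hat d_\rho}(\hat x,\hat y)\ge\dtw(x,y)-4n\delta_\rho=\dtw(x,y)-\rho/2$. Finally, projecting an optimal $(x,y)$‑correspondence and applying $\sigma$‑separation edge by edge, the expected number of edges whose endpoints fall in different parts is at most $\sigma\,\dtw(x,y)/\delta_\rho$, so $\mathbb{E}[\dtw_0(\hat x,\hat y)]\le\sigma\,\dtw(x,y)/\delta_\rho$, where $\dtw_0$ is the generalized Hamming DTW over $\Sigma/\pi_\rho$.

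Set $\beta:=c\alpha/\sigma$ for a small absolute constant $c$; we may assume $\alpha$ exceeds a suitable constant multiple of $\sigma$, so $\beta=\Omega(1)$ (otherwise $\sigma n/\alpha\ge n$ and Alice simply sends $x$ outright in $O(n\log n)$ bits, which is within the claimed bound). At scale $\rho$ the players run bounded‑$\beta$‑DTW of Proposition~\ref{propprotocolhamming} on $(\hat x,\hat y)$ and call the scale ``small'' if it returns a number below $\rho/(4\sigma)$ and ``large'' otherwise (including on ``Fail''). Two facts drive the analysis. (i) If $\rho\le\dtw(x,y)$ then $\dtw_{\hat d_\rho}(\hat x,\hat y)\ge\rho/2$, and since any number the protocol returns is at least $\dtw_{\hat d_\rho}(\hat x,\hat y)$, the test is ``large'' -- deterministically, save for protocol error. (ii) If $\rho\ge C_0\alpha\,\dtw(x,y)$ for a suitable constant $C_0$, then $\mathbb{E}[\dtw_0(\hat x,\hat y)]\le\sigma\rho/(C_0\alpha\delta_\rho)=O(\sigma n/\alpha)$, which for $c$ small is at most $\tfrac{1}{100}\cdot n/\beta$; by Markov $\dtw_0(\hat x,\hat y)\le n/\beta$ with probability $\ge 99/100$, in which case bounded‑$\beta$‑DTW must return a valid $\beta$‑approximation $v$ of $\dtw_{\hat d_\rho}(\hat x,\hat y)$, so $v<\beta\,\dtw(x,y)\le c\rho/(C_0\sigma)<\rho/(4\sigma)$ and the test is ``small''. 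Since $\mathbb{E}[\dtw_0(\hat x,\hat y)]$ decays geometrically as $\rho$ grows, summing the Markov failure probabilities over the $O(\log n)$ scales with $\rho\ge C_0\alpha\,\dtw(x,y)$ gives an $O(1)$ bound, tunable below $\tfrac14$ via $C_0$; adding the $O(\log n)\cdot\poly(n^{-1})$ protocol‑error probabilities keeps the total failure probability below $\tfrac13$.

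Alice sends the concatenation of her messages over all $O(\log n)$ scales, and Bob outputs the smallest $\rho^\ast$ declared ``small''. Conditioned on the good event, every scale $\le\dtw(x,y)$ is ``large'', so $\rho^\ast>\dtw(x,y)$, while every scale $\ge C_0\alpha\,\dtw(x,y)$ is ``small'', so the scale $\rho^\ast/2$ -- being ``large'' -- satisfies $\rho^\ast/2<C_0\alpha\,\dtw(x,y)$. Hence $\dtw(x,y)<\rho^\ast<2C_0\alpha\cdot\dtw(x,y)$, so $\rho^\ast$ is a valid answer for $(2C_0\alpha)$‑DTW; running the whole scheme with approximation parameter $\alpha/(2C_0)$ in place of $\alpha$ (which only rescales $\beta$ by a constant) yields $\alpha$‑DTW. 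The case $\dtw(x,y)=0$ is detected at the smallest scale, where $\delta_\rho<1$ forces $\pi_\rho$ to be the discrete partition and the protocol returns $0$. The space is $O(\log n)$ copies of a bounded‑$(c\alpha/\sigma)$‑DTW sketch, each of size $O\big((n/\beta)\log\beta\log n\big)=O\big((\sigma n/\alpha)\log\alpha\log n\big)$ by Proposition~\ref{propprotocolhamming}, for a total of $O\big((\sigma n/\alpha)\log\alpha\log^2 n\big)$; using the inefficient branch of Proposition~\ref{propprotocolhamming} removes the $\log\alpha$ factor. The stated extra $\log\log\log n$ factor absorbs minor technical overheads (rounding $\delta_\rho$ to convenient values, sampling/derandomizing the partitions, boosting the parallel instances) that I would not spell out, in keeping with the ``minor technical subtleties'' remark. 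Efficiency is inherited: the partitions are efficiently sampleable by hypothesis, $\hat d_\rho$ is computed in polynomial time, and the protocol of Proposition~\ref{propprotocolhamming} is efficient.

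The hard part is the choice of $\delta_\rho$. Keeping the simplified distance from undershooting, $\dtw_{\hat d_\rho}(\hat x,\hat y)\ge\dtw(x,y)-O(\rho)$, forces $\delta_\rho=O(\rho/n)$; making the simplified instance Hamming‑sparse, $\dtw_0(\hat x,\hat y)=O(n/\alpha)$, wants $\delta_\rho$ as large as possible; and these demands cannot both be met with more than a factor‑$\sigma$ of slack, which is precisely why each scale‑$\rho$ test only certifies a $\Theta(\alpha/\sigma)$‑approximation and why the final bound carries the factor $\sigma$. Resolving this -- accepting the weaker per‑scale guarantee, building in a constant factor of slack, and recovering an $O(\alpha)$‑approximation from the smallest ``small'' scale by geometric search -- together with noticing that the expected Hamming‑disagreement counts decay geometrically across scales (so that a union bound survives even though the critical scale only has a Markov, not a high‑probability, guarantee), is the step that needs genuine care; the rest is bookkeeping on top of Proposition~\ref{propprotocolhamming}.
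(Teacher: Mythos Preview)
Your core reduction matches the paper's exactly: at each dyadic scale draw a random $\sigma$-separating partition of the appropriate diameter, pass to the induced strings, and run the bounded-DTW primitive of Proposition~\ref{propprotocolhamming} (equivalently Proposition~\ref{propboundedKDTW}) on the simplified instance. Where you diverge is in how you combine the scales. The paper (Lemma~\ref{lemma:r_a_gap_general} together with Lemma~\ref{lemma:gap_to_approx2}) first amplifies each per-scale gap test from constant accuracy to accuracy $1-O(1/\log\log n)$ by independent repetition, and then binary-searches over the $O(\log n)$ scales so that only $O(\log\log n)$ tests need be correct; the amplification cost $\log(1/\delta)=O(\log\log\log n)$ is precisely where the extra factor in the theorem statement originates. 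You instead run a single unamplified instance at every scale and observe that, because $\mathbb{E}[\dtw_0(\hat x,\hat y)]\le 8\sigma n\,\dtw(x,y)/\rho$ decays geometrically in $\rho$, a straight union bound over \emph{all} scales $\rho\ge C_0\alpha\,\dtw(x,y)$ already converges to a small constant; the scales $\rho\le\dtw(x,y)$ are handled deterministically (modulo the $1-\poly(n^{-1})$ protocol error, which union-bounds over $O(\log n)$ scales harmlessly). This is correct, and in fact it \emph{removes} the $\log\log\log n$ factor --- your argument as written yields $O((\sigma n/\alpha)\log\alpha\log^2 n)$ directly. So your closing move of reinstating $\log\log\log n$ as ``minor technical overheads'' is misplaced: nothing in your proof requires it, and the paper's own remark about ``minor technical subtleties'' refers to the amplification step you have sidestepped. (Your use of the min-distance $\hat d_\rho$ on the quotient alphabet, rather than the paper's distance between fixed public representatives, is an inessential variant; both satisfy $|\hat d_\rho-d|\le 2\delta_\rho$ on every pair, which is all either argument needs.)
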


The proof closely follows that for the natural numbers, where instead
of our randomly shifted grid, we use a random $\delta$-bounded
partition. If we are trying to distinguish $\dtw(x,y) \leq nr/\alpha$
versus $\dtw(x,y) > nr$, then we set $\delta = \Theta(r)$. Just like
for the grid, where we ``snapped'' points to their nearest grid point,
we now snap points to a representative point in each part of the
partition, obtaining two new sequences $\tilde{x}$ and $\tilde{y}$.
By using shared randomness, the representative in each part can be
agreed upon without any communication. Just like in the grid case, we
show that if $\dtw(x,y) \le nr/\alpha$, then for the optimal
correspondence, in expectation its cost increases only by a constant
factor after snapping. On the other hand, if $\dtw(x,y) > nr$, then we
show that for every correspondence, its cost decreases only by a
constant factor. The key difference is that now the
expected number of short edges with endpoints occurring in different
parts of the partition is at most $\frac{\sigma \cdot
  \dtw(x,y)}{\delta}$.

\paragraph{Complexity Lower Bounds:} The simplest of our lower bounds
comes from a reduction from a randomized $1$-way communication lower
bound for indexing over large alphabets \cite{jw13}. In this problem,
Alice is given a string $s$ in $\mathcal{U}^r$ for some universe
$\mathcal{U}$ and length parameter $r$, and Bob is given a character
$a \in \mathcal{U}$ and an index $j \in [r]$.  The goal is for Bob to
decide if $s_j = a$ with probability at least $1-1/|\mathcal{U}|$. It
is known if Alice sends a single message to Bob, then there is an
$\Omega(r \log_2 |\mathcal{U}|)$ lower bound. By reducing this
large-alphabet indexing problem to $\alpha$-DTW when $r = n /
\alpha$. To perform the reduction, Alice's input string $s = s_1,
\ldots, s_{n / \alpha} \in \mathcal{U}^{n / \alpha}$ is mapped to the
string $x = (s_1, 1), (s_2, 2), \ldots, (s_{n / \alpha}, n /
\alpha)$. Bob's inputs of $a \in \mathcal{U}$ and $j \in [r]$ are
mapped to an input string $y = (a, j), (a, j), \ldots$ in which the
character $(a, j)$ is repeated $n$ times. If $s_j = a$, then
$\dtw(x,y) = n/\alpha-1$ (due to the $n / \alpha - 1$ characters of
$x$ that do not get matched with an equal-value letter); otherwise
$\dtw(x,y) \geq n$ (due to the fact that none of the letters in $y$
can be correctly matched). This gives a reduction to $\alpha$-DTW as
desired. Using this we have an $\Omega(n/\alpha \cdot \log n)$ lower
bound for $(1 - 1/n)$-accurate $\alpha$-DTW, provided the alphabet
size $|\Sigma|$ is say, at least $n^2$. Thus we have the following theorem:
\begin{thm}[Tight Bound Over Hamming Space]
  
	Consider $1 \le \alpha \le n$, and consider the generalized
        Hamming distance over a point-set $\Sigma$ with $\Sigma$ of
        polynomial size $n^{1+\Omega(1)}$.  For $p \ge 1 -
        1/|\Sigma|^{-1}$, $\mathrm{CC}_p(\alpha\text{-DTW}(\Sigma^{\le
          n}) = \Theta[n\alpha^{-1}\cdot\log n]$.
        \label{thmhammingfirst}
\end{thm}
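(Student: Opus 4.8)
The plan is to prove the upper and lower bounds separately; the upper bound is essentially free. Generalized Hamming distance is precisely the case in which all pairwise distances lie in $\{0,1\}$, so Proposition~\ref{propprotocolhamming} applies verbatim: instantiating its inefficient variant with failure parameter $\delta = 1/|\Sigma|$ gives a $(1-1/|\Sigma|)$-accurate one-way protocol of bit complexity $O(n\alpha^{-1}\log n + \log\delta^{-1}) = O(n\alpha^{-1}\log n)$, using $|\Sigma| = \poly(n)$ (so $\log\delta^{-1} = O(\log n)$) and $\alpha\le n$ (so $n\alpha^{-1}\log n\ge\log n$). This yields the claimed $O$-bound at the relevant success probability.

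For the matching lower bound I would reduce from the one-way large-alphabet indexing problem of \cite{jw13}: Alice holds $s = s_1\cdots s_r\in\mathcal{U}^r$, Bob holds a symbol $a\in\mathcal{U}$ and an index $j\in[r]$, and Bob must decide whether $s_j = a$; succeeding with probability at least $1-1/|\mathcal{U}|$ requires Alice to send $\Omega(r\log|\mathcal{U}|)$ bits. Take $r = \lfloor n/\alpha\rfloor$, assume for now $\alpha\le n/2$ (so $r\ge 2$), and choose $\mathcal{U}$ as large as the constraint $|\mathcal{U}|\cdot r\le|\Sigma|$ permits. Since $|\Sigma| = n^{1+\Omega(1)}$ this still allows $|\mathcal{U}|\ge n^{\Omega(1)}$, hence $\log|\mathcal{U}| = \Theta(\log n)$, and it ensures $1-1/|\mathcal{U}|\ge 1-1/|\Sigma|$, so the indexing lower bound applies for every success probability $p$ in the statement. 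Alice maps $s$ to $x = (s_1,1)(s_2,2)\cdots(s_r,r)$, a string over the point set $\mathcal{U}\times[r]\subseteq\Sigma$; Bob maps $(a,j)$ to $y = (a,j)(a,j)\cdots(a,j)$, the symbol $(a,j)$ repeated $n$ times. Both strings lie in $\Sigma^{\le n}$.

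The crux is to show, over generalized Hamming distance, that $\dtw(x,y) = r-1$ when $s_j = a$ and $\dtw(x,y)\ge n$ when $s_j\ne a$. In the first case, the run $(s_j,j) = (a,j)$ of $x$ can be expanded to overlap the entire run of $y$ at cost $0$, while each of the other $r-1$ (pairwise distinct, length-one) runs of $x$ carries a symbol different from $(a,j)$; since any correspondence must place each of those runs on some edge, and every edge's $y$-side equals $(a,j)$, each such run contributes at least $1$ via a distinct edge, so $\dtw(x,y)\ge r-1$, with equality witnessed by the correspondence just described. In the second case every symbol of $x$ differs from $(a,j)$, so every edge of every correspondence costs $1$, and since any correspondence has length at least $|y| = n$, the cost is at least $n$. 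Hence an $\alpha$-approximation $t$ to $\dtw(x,y)$ satisfies $t<\alpha(r-1)\le n-\alpha<n$ in the first case (where $r-1\ge 1$ keeps $\alpha$-DTW well-posed) and $t\ge n$ in the second, so Bob reports $s_j = a$ iff $t<n$ and is correct whenever the $\alpha$-DTW protocol is. Thus a $p$-accurate $\alpha$-DTW protocol with $p\ge 1-1/|\Sigma|\ge 1-1/|\mathcal{U}|$ yields an indexing protocol of equal bit complexity, and $\mathrm{CC}_p(\alpha\text{-DTW}(\Sigma^{\le n})) = \Omega(r\log|\mathcal{U}|) = \Omega(n\alpha^{-1}\log n)$.

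The main point to get right — bookkeeping rather than genuine difficulty — is simultaneously keeping the two cases separated after $\alpha$-approximation (which forces $\alpha(r-1)<n$, hence $r = \lfloor n/\alpha\rfloor$), keeping $r = \Theta(n/\alpha)$ so the bound is $\Omega(n\alpha^{-1}\log n)$, keeping $|x|\le n$, and verifying $\log|\mathcal{U}| = \Theta(\log n)$ from $n^{\Omega(1)}\le|\mathcal{U}|\le|\Sigma| = n^{1+\Omega(1)}$. The leftover regime $\alpha>n/2$, where the target is merely $\Omega(\log n)$, is handled by the $r=1$ instance of the same reduction with a fixed sentinel symbol $\star\notin\mathcal{U}\times[r]$ prepended to $x$, so that $\dtw(x,y) = 1$ if $s_1 = a$ and $\dtw(x,y) = n$ otherwise; a $p$-accurate $\alpha$-DTW protocol still distinguishes these, at cost $\Omega(\log|\mathcal{U}|) = \Omega(\log n)$ bits.
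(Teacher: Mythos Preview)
Your proof is correct and follows essentially the same approach as the paper. The upper bound is the same (the inefficient protocol of Proposition~\ref{propprotocolhamming}/\ref{propboundedKDTW}); for the lower bound, the paper's technical overview describes exactly your reduction from large-alphabet indexing via the pair encoding $(s_i,i)$, while its formal proof uses the equivalent $(n/\alpha,\Sigma)$-SET problem from the same reference \cite{jw13}, with the identical string construction (Alice concatenates her elements, Bob repeats his query symbol $n$ times) and the same $\dtw$ gap of $r-1$ versus $n$. Your version is slightly more careful about boundary cases (the well-posedness when $r-1=0$, the threshold arithmetic $\alpha(r-1)<n$, and the $\alpha>n/2$ regime) than the paper's terse formal proof.
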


In order to obtain a nearly tight lower bound for arbitrary finite
metric spaces, we construct a more intricate lower bound of $\Omega(n
/ \alpha)$ which holds whenever $|\Sigma| \ge 3$. For convenience, we
describe the argument for the case of $\Sigma = \{0, 1, 2\}$
below. The lower bound is achieved via a reduction from the Index
problem in which Alice has $s \in \{0,1\}^t$, Bob has an $i \in [t]$,
and Bob would like to output $s_i$ with probability at least
$2/3$. The randomized $1$-way communication complexity of this problem
is $\Omega(t)$. We instantiate $t = \Theta(n/\alpha)$.  For each
$s_j$, if $s_j = 1$, Alice creates a string $Z(1)$ of length $3\alpha$
consisting of $\alpha$ $0$s, followed by $\alpha$ $1$s, followed by
$\alpha$ $2$s; and if $s_j = 0$, Alice creates a string $Z(0)$ of
length $2\alpha+1$ consisting of $\alpha$ $0$s, followed by a single
$1$, followed by $\alpha$ $2$s. She then concatenates $Z(s_1), Z(s_2),
\ldots, Z(s_t)$ into a single string $x$ of length $n$. Bob, who is
given an index $i \in [t]$, creates the string $y =
(012)^{i-1}(02)(012)^{t-i}$; that is, we have the length-$3$ string
$012$ repeated $i-1$ times, then the string $02$, followed by the
string $012$ repeated $t-i$ times. (We call each piece of the form
$(012)$ and $(01)$ a \emph{block}.) Notice that if $s_i = 0$, then
$\dtw(x,y) = 1$, since the single $1$ in $Z(s_i)$ can match to either
the $0$ or $2$ in the $(02)$ block of Bob's string $y$. On the other
hand, if $s_i = 1$, the entire run of $\alpha$ $1$s in $Z(s_i)$ has to
appear somewhere in the correspondence and cannot match to the $0$ or
$2$ in the $i$-th piece of Bob's string, without incurring a cost of
$\alpha$. So these $\alpha$ $1$s must either ``travel'' to blocks $j >
i$ in $y$ or blocks $j < i$ in $y$. Suppose, without loss of
generality, most of these $\alpha$ $1$s are matched to a block $j >
i$. This has a ripple effect, since it causes the $\alpha$ $2$s in the
$i$-th block to also have to travel to a block $j > i$. While this is
possible, it then means the $\alpha$ $0$s in the $(i+1)$-st block must
travel to a block even larger than $j$, etc. Eventually, we run out of
blocks to match the elements in Alice's string to since there are
$t-i$ blocks in her string that need to be matched to fewer than $t-i$
blocks in Bob's string. This ultimately forces $\dtw(x,y) \geq
\alpha$, completing the reduction from the Index problem to
$\alpha$-DTW.  The extension of this argument to arbitrary $\Sigma$
establishes that our upper bound for general metric spaces is optimal
up to a polylogarithmic factor:

\begin{theorem}[General Lower Bound]
	Let $\Sigma = \{a, b, c\}$ be three letters with a two-point
        distance function $d:\Sigma\times \Sigma \rightarrow \RR_{+}$,
        not necessarily satisfying the triangle inequality. Consider
        $1 \le \alpha \le n$.  Then
        CC$_{0.1}\big[\dtep_{r}^{\alpha}(\Sigma^{\le n})\big] =
        \Omega(n/\alpha)$.
        \label{thm:overviewgenerallower}
\end{theorem}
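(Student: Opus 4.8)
\emph{Setup.} The plan is to reduce the $\mathrm{Index}$ problem on $t=\Theta(n/\alpha)$ bits --- for which any one-way protocol with error bounded away from $1/2$ requires $\Omega(t)$ bits --- to $\dtep_r^{\alpha}(\Sigma^{\le n})$, generalizing the three-letter construction sketched above to an arbitrary (not necessarily metric) distance $d$ on $\Sigma$. Fix one letter, say $b$, to be the ``middle'' letter and call the other two $a,c$; set $\mu:=\min(d(a,b),d(b,c))$ and $d_{\min}:=\min(d(a,b),d(b,c),d(a,c))$, both positive constants depending only on $d$. Let $L:=C\alpha$ for a sufficiently large constant $C=C(d)$, put $t:=\lfloor n/(3L)\rfloor$, and assume $\alpha\le n/(3C)$ so that $t=\Theta(n/\alpha)\ge 1$ (if $\alpha$ is larger, then $n/\alpha=O(1)$ and the bound is vacuous). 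On input $s\in\{0,1\}^{t}$ Alice builds the gadgets $Z(1):=a^{L}b^{L}c^{L}$ and $Z(0):=a^{L}\,b\,c^{L}$, forms $x:=Z(s_1)Z(s_2)\cdots Z(s_t)$, and sends the message the assumed protocol would send on $x$; on input $i\in[t]$ Bob forms $y:=(abc)^{i-1}(ac)(abc)^{t-i}$. Then $|x|\le 3Lt\le n$ and $|y|=3t-1\le n$, so $x,y\in\Sigma^{\le n}$, and Bob runs the $\dtep$ protocol with threshold $r:=\mu$.

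\emph{Completeness ($s_i=0$).} Match each run of $Z(s_j)$ to the single corresponding letter in the $j$-th block of $y$, expanding $y$'s letters as needed; this costs $0$ for every $j\ne i$. For $j=i$, the runs $a^{L}$ and $c^{L}$ match the $a$ and $c$ of $y$'s block $(ac)$ at cost $0$, and the lone $b$ of $Z(0)$ must overlap one of these two expanded runs, at cost $\min(d(a,b),d(b,c))=\mu$. Hence $\dtw(x,y)\le\mu=r$, so the only valid $\dtep_r^{\alpha}$ output on such an instance is $0$.

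\emph{Soundness ($s_i=1$) --- the crux.} We must show $\dtw(x,y)>r\alpha=\mu\alpha$, so that the only valid output is $1$; assume for contradiction a correspondence has cost $\le\mu\alpha$. Taking $C$ large makes $L\,d_{\min}>\mu\alpha$, so no length-$L$ run of $x$ can overlap more than $\mu\alpha/d_{\min}<L/3$ positions carrying a letter different from its own. This forces a rigid alignment: each $a^{L}$ (resp.\ $c^{L}$) of $x$ overlaps $a$'s (resp.\ $c$'s) of $y$ over more than $2L/3$ of its length, and by monotonicity of correspondences these forced overlaps land on $y$'s runs block by block in increasing order. Now $y$'s $i$-th block $(ac)$ contains no $b$, while $x$'s $i$-th block contains the run $b^{L}$ wedged between its $a^{L}$ and $c^{L}$; so $b^{L}$ is forced either (i) to overlap non-$b$ letters over its entire length $L$, costing at least $L\mu>\mu\alpha$, or (ii) to ``borrow'' the middle letter of a neighbouring block of $y$ --- say the $(i{+}1)$-st. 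In case (ii) the $c^{L}$ of $x$'s $i$-th block, and then each subsequent $c^{L}$, is pushed at least one $y$-block to the right, so that $x$'s last block is left with no valid target and its $a^{L}$ must overlap wrong letters over a length $\ge L$, costing at least $L\,d_{\min}>\mu\alpha$ (borrowing leftward is symmetric, trapping $x$'s first block). Either way the correspondence costs more than $\mu\alpha$, a contradiction. The main obstacle is making the ``rigid alignment'' rigorous: bounding how far a length-$L$ run can drift across the short runs of $y$ before the accumulated letter-mismatch exhausts the budget, and showing this drift is monotone and hence must collide with an endpoint; once that is in hand the rest is bookkeeping.

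\emph{Conclusion.} On every instance produced by the reduction the unique valid $\dtep_{\mu}^{\alpha}(\Sigma^{\le n})$ output equals $s_i$. Thus any one-way protocol for $\dtep_r^{\alpha}$ that errs with probability at most a small constant yields, via shared randomness, a one-way protocol for $\mathrm{Index}$ on $t$ bits with the same error, which needs $\Omega(t)=\Omega(n/\alpha)$ bits. Hence $\mathrm{CC}\big[\dtep_{r}^{\alpha}(\Sigma^{\le n})\big]=\Omega(n/\alpha)$, as claimed.
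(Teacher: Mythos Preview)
Your reduction and gadget construction are exactly the paper's. The gap you flag in the soundness step is real, and the paper closes it with a much cleaner argument than the ``ripple effect'' picture you sketch (which, incidentally, is also how the paper's technical overview describes the intuition). Two simplifications are worth noting.

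First, relabel the three letters so that $r:=d(a,b)\le d(b,c)\le d(a,c)$; then every pairwise distance is at least $r$, you can take $L=\alpha$, and the auxiliary constant $C(d)$ and the quantities $\mu,d_{\min}$ all disappear.

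Second, and more importantly, the paper's soundness proof is a pigeonhole on the $a$'s rather than a drift argument. Assume some correspondence has cost $<\alpha r$. Then (i) each run $a^{\alpha}$ of $x$ overlaps at least one $a$ of $y$, since otherwise all $\alpha$ of its positions sit opposite $b$'s or $c$'s and contribute $\ge\alpha\min(d(a,b),d(a,c))=\alpha r$; and (ii) each single $a$ of $y$ overlaps at most one $a^{\alpha}$-run of $x$, since otherwise its expansion in $\bar y$ spans an entire intervening $c^{\alpha}$-run of $x$ and contributes $\ge\alpha\, d(a,c)\ge\alpha r$. There are $t$ runs $a^{\alpha}$ in $x$ and exactly $t$ letters $a$ in $y$, so (i), (ii), and monotonicity force the $j$-th $a^{\alpha}$-run of $x$ to overlap precisely the $j$-th $a$ of $y$ for every $j$. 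Now the $b^{\alpha}$-run of $x$'s $i$-th block is trapped: it can only overlap letters of $y$ lying between the $i$-th and $(i{+}1)$-st $a$, and since $y$'s $i$-th block is $(ac)$ none of those letters is $b$, forcing cost $\ge\alpha\min(d(a,b),d(b,c))=\alpha r$. This replaces your ``rigid alignment'' and ``borrowing'' analysis entirely; no bookkeeping about drift across blocks is needed.
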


Our communication protocols are non-linear, and we conclude our lower
bounds by showing that linear sketches must have size
$\Omega(n)$.

\begin{theorem}[Linear Sketching Lower Bound]
	Consider $1 \le \alpha \le n$.  Then any $0.1$-error
        linear sketch for $\alpha$-$\dtw$ on $\{0,1, 2\}^{4n}$ has
        space complexity $\Omega(n)$.
        \label{thm:overviewlinear}
\end{theorem}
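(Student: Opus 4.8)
I would prove this by a collision argument that uses the linearity of the sketch directly; a reduction from a fixed one-way communication problem cannot work, since for large $\alpha$ the one-way complexity of $\alpha$-$\dtw$ is only polylogarithmic while the claimed linear-sketch bound is $\Omega(n)$, so the two models must be separated by something genuinely linear-algebraic. Write the sketch as $x \mapsto Mx$ for a random matrix $M$ with $k$ rows, together with a decoder $E$ for which $\PP[\dtw(x,y) \le E(Mx,My) < \alpha\,\dtw(x,y)] \ge 0.9$ for every $x,y \in \{0,1,2\}^{4n}$. The basic observation is that whenever $Mx = Mx'$ we have $E(Mx,My) = E(Mx',My)$, so (on the intersection of the two success events, which still has positive probability) this common value is simultaneously a valid $\alpha$-approximation of $\dtw(x,y)$ and of $\dtw(x',y)$; consequently $\dtw(x,y)$ and $\dtw(x',y)$ must agree up to a factor $\alpha$. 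So it suffices to show: for every $M$ with $k = o(n)$ rows there are $x \ne x'$ in $\{0,1,2\}^{4n}$ with $Mx = Mx'$, together with a query $y$ for which $\dtw(x,y) = O(1)$ but $\dtw(x',y) \ge \alpha$. Viewing the inputs as elements of $\mathbb{F}_3^{4n}$ (so that $\{0,1,2\}^{4n}$ is the full space and $k < 4n$ already forces \emph{some} collision) is the natural setting; for sketches over $\RR$ one runs the same argument on the number of distinct sketch values.

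The plan for producing the pair is to intersect $\ker M$ with the combinatorial family of ``separating gadgets'' underlying the general lower bound (Theorem~\ref{thm:overviewgenerallower}). There, the difference between the heavy state $0^\alpha 1^\alpha 2^\alpha$ of a block and a light state is a vector supported on a window of length $\Theta(\alpha)$ that plants a run of $\Theta(\alpha)$ ones, and planting such a run (against a query $y$ assembled from $(012)$- and $(02)$-blocks) drives $\dtw$ from $O(1)$ up to $\Omega(\alpha)$ via the ripple argument. Now suppose $k = o(n)$, so $\ker M$ has codimension $o(n)$. Partition $[4n]$ into $\Theta(n/\alpha)$ disjoint windows of length $\Theta(\alpha)$; since projecting onto disjoint windows is injective on any subspace, the total ``deficit'' $\sum_W\big(|W| - \dim(\ker M|_W)\big)$ equals $\operatorname{codim}(\ker M) \le k = o(n)$, so on a $(1-o(1))$ fraction of windows $\ker M|_W$ has codimension only $o(\alpha)$ inside $\mathbb{F}_3^{W}$ — it is nearly all of $\mathbb{F}_3^W$. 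On such a window one would locate, inside $\ker M|_W$, a heavy-versus-light gadget vector $u$; then, taking $x'$ to be a valid $\{0,1,2\}^{4n}$ string with that block in its light state, $x = x'+u$ with it in its heavy state, and $y$ the matching query, we get $Mx = Mx'$ while $\dtw(x,y)$ and $\dtw(x',y)$ differ by more than $\alpha$, contradicting correctness of the sketch.

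The hard part will be exactly this last step: even a subspace of codimension $o(\alpha)$ in $\mathbb{F}_3^{W}$ can miss any single pre-specified gadget vector (a hyperplane over $\mathbb{F}_3$ avoids a given nonzero vector with ``probability'' $2/3$), and, relatedly, perturbing a $\{0,1,2\}$-string in only a few coordinates changes its $\dtw$ to any fixed string by just $O(1)$ — which is precisely why the general lower bound needs blocks of length $\Theta(\alpha)$ in the first place. Getting around this seems to require using a whole \emph{family} of gadgets and windows at once: choose the windows adaptively in $M$ so that $\ker M$ is forced to be genuinely rich on the windows we commit to, let the base string around each window be tuned so that a generic kernel vector added to the base still forces a length-$\Theta(\alpha)$ run to be matched across block boundaries, and then recheck that the ripple lower bound of Theorem~\ref{thm:overviewgenerallower} survives with the neighbouring blocks in these adaptively chosen states. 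A union bound over the $\Theta(n/\alpha)$ good windows should make at least one realizable gadget fall inside $\ker M$; making that counting rigorous, uniformly over all low-rank $M$, is the technical core of the proof. Once this is in place, the sketch must have $\Omega(n)$ rows (equivalently $\Omega(n)$ bits of space), which is the assertion of Theorem~\ref{thm:overviewlinear}.
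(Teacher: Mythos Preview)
Your opening premise sends you down the wrong path. You argue that a reduction from a one-way problem cannot work because the one-way complexity of $\alpha$-$\dtw$ is only $\tilde\Theta(n/\alpha)$; but this only rules out \emph{black-box} reductions in which Alice and Bob independently map their inputs to DTW instances and then simulate a one-way DTW protocol. Linearity gives Bob a power you overlooked: from $\sk(\bar x)$ he can compute $\sk(\bar x + v) = \sk(\bar x) + \sk(v)$ for any $v$ he knows, even though $\bar x$ is Alice's private string. This permits a direct reduction from $\text{INDEX}_n$ in which Bob's DTW input depends on Alice's. The second observation you miss is that a \emph{multiplicative} approximation must distinguish $\dtw = 0$ from $\dtw > 0$, so no factor-$\alpha$ gap is needed at all.

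The paper combines these two observations in a few lines. Alice encodes $x\in\{0,1\}^n$ as
\[
\bar x = (1,x_1,x_1,1,\,1,x_2,x_2,1,\,\ldots,\,1,x_n,x_n,1)\in\{0,1\}^{4n}
\]
and sends $\sk(\bar x)$. Bob, holding index $i$, sets $\sk(\bar y_i) = \sk(\bar x) + \sk(0^{4(i-1)}\,0\,0\,1\,0\,0^{4(n-i)})$, so that $\bar y_i$ is $\bar x$ with a single coordinate incremented by $1$. If $x_i=0$ the $i$-th block becomes $(1,0,1,1)$ versus $(1,0,0,1)$, which have the same run structure, so $\dtw(\bar x,\bar y_i)=0$; if $x_i=1$ the block becomes $(1,1,2,1)$ versus $(1,1,1,1)$, forcing $\dtw(\bar x,\bar y_i)=1$. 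Any $\alpha$-approximation separates these, and the $\Omega(n)$ INDEX lower bound finishes the proof.

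Against this, your kernel-and-gadget program is both unnecessary and incomplete. The step you flag as ``hard'' really is: a subspace of codimension $o(\alpha)$ in $\mathbb{F}_3^{W}$ can avoid any fixed polynomial-size family of gadget vectors, so the union bound you gesture at does not close. Separately, choosing $x,x'$ adaptively in $M$ clashes with the per-instance $0.9$-correctness guarantee of a randomized sketch (you would be conditioning on the randomness before invoking correctness). All of this disappears once you exploit the $0$-versus-positive gap and let linearity carry Alice's string into Bob's input.
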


\section{Upper Bounds}\label{seccomm}

In this section, we present a near optimal one-way communication
protocol for $\alpha$-DTW over an arbitrary finite metric space
$\Sigma$ with the constraint that $\Sigma$ has at most polynomial size
and aspect ratio, i.e., the ratio between the largest and the
smallest distances between distinct points.

We begin in Subsection \ref{subsecbounded} by considering an easier
problem known as bounded $\alpha$-DTW, in which Bob is only required
to compute an $\alpha$-approximation for $\dtw(x, y)$ when $\dtw_0(x,
y) \le n / \alpha$, and he may instead return ``Fail'' when $\dtw_0(x,
y) > n / \alpha$ (recall that $\dtw_0$ is the number of conflicting
edges in a correspondence). Proposition \ref{propboundedKDTW} gives an
efficient one-way protocol for bounded $\alpha$-DTW.

Building on Proposition \ref{propboundedKDTW}, in Subsection \ref{subsectree}
we then design an efficient one-way protocol for $\alpha$-DTW over
well-separated tree metrics (Lemma
\ref{thmDTWsketchspecializedtree}).

Then, in Subsection \ref{subsectreetogeneral}, Theorem
\ref{thmDTWsketchgeneral} provides an efficient one-way protocol for
arbitrary finite metric spaces with polynomially bounded size and
aspect ratios by first embedding the metric space into a
well-separated tree metric and then using Lemma
\ref{thmDTWsketchspecializedtree}.

The protocol given by Theorem \ref{thmDTWsketchgeneral} uses
$\tilde{O}(n / \alpha)$ bits, which we will later see is within a
polylogarithmic factor of optimal. In Subsection
\ref{subsecgeneralgeneral}, we prove a further generalization of
Theorem \ref{thmDTWsketchgeneral} which for allows for a tighter upper
bound by a logarithmic factor for certain important cases of $\Sigma$
such as when $\Sigma \subseteq \mathbb{R}$.

\subsection{The Bounded $\alpha$-DTW Problem}\label{subsecbounded}

We define \emph{bounded $\alpha$-DTW} over a metric space $\Sigma$ to
be the following communication problem.
\begin{definition}[Bounded $\alpha$-DTW]
	\label{def:bounded-alpha-dtw}
In the bounded $\alpha$-DTW$(\Sigma^{\le n})$ problem,
Alice and Bob are given strings $x$
and $y$ in $\Sigma^{\le n}$. The goal for Bob is:
\begin{itemize}
	\item If $\dtw_0(x, y) \le n / \alpha$, solve $\alpha$-$\dtw$ on $(x, y)$.
	\item If $\dtw_0(x, y) > n / \alpha$, either solve
          $\alpha$-$\dtw$ on $(x, y)$, or return ``Fail''.
\end{itemize}

\end{definition}

In order to design an efficient one-way communication scheme for
bounded $\alpha$-DTW, we will use what we refer to as the
\emph{$K$-document exchange} problem as a primitive. Here, Alice and
Bob are given strings $x$ and $y \in \Sigma^n$. The goal for Bob is:
\begin{itemize}
\item If $\ed(x, y) \le K$, recover the string $x$.
\item If $\ed(x, y) > K$, either recover $x$ or return ``Fail''.
\end{itemize}

The $K$-document exchange problem has been studied extensively
\cite{orlitsky1991interactive, jowhari2012efficient,
  belazzougui2015efficient, chakraborty2015low, edsketch}. The one-way
communication protocol of \cite{edsketch} efficiently solves
$K$-document exchange using $O(K \log (n/K) \cdot \log n)$ bits with
high probability. This can be slightly improved at the cost of being
no longer time-efficient using the protocol of
\cite{orlitsky1991interactive}, which achieves accuracy $1 - \delta$
for any $\delta \in (0, 1)$ by having Alice simply hash her string to
a $\Theta(K \cdot\log n + \log \delta^{-1})$-bits.

The $K$-document exchange problem concerns edit distance rather than
DTW. Nonetheless, in designing a sketch for DTW, the $K$-document
exchange problem will prove useful due to a convenient relationship
between edit distance and DTW over generalized Hamming space (or equivalently $\dtw_0$).

\begin{lem}[DTW$_0$ Approx. Edit Dist.]
	\label{lem2approx}
  Let $x, y$ be strings of length at most $n$ with letters from any
  metric space, and suppose that neither string contains any runs of
  length greater than one. Then
  $$\dtw_0(x, y) \le \ed(x, y) \le 3 \dtw_0(x, y).\footnote{This can be improved to $2 \dtw_0(x, y)$ with slightly better bookkeeping.}$$
\end{lem}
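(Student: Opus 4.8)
I would work with the standard picture of a correspondence as a monotone lattice path (a \emph{traversal}) through the grid $[m]\times[n]$, where $m=|x|$, $n=|y|$: a sequence of cells from $(1,1)$ to $(m,n)$ in which each step either increments the first coordinate (a \emph{right} step), the second (an \emph{up} step), or both (a \emph{diagonal} step). Call a cell $(p,q)$ \emph{bad} if $x_p\neq y_q$ and \emph{good} otherwise; then $\dtw_0(x,y)$ is exactly the minimum, over all such paths, of the number of bad cells visited. On the other side, $\ed(x,y)$ is the minimum, over all monotone partial matchings $M\subseteq[m]\times[n]$ whose matched pairs have equal letters, of $\sum_i\max(a_i,b_i)$, where between consecutive matched pairs (and at the two ends) the path of $M$ leaves $a_i$ positions of $x$ and $b_i$ positions of $y$ unmatched. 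I will prove the two inequalities separately; only the second uses the hypothesis that neither string has a run of length $>1$.

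\textbf{The easy direction $\dtw_0(x,y)\le\ed(x,y)$.} Take an optimal edit alignment, described by a monotone matching $M$ together with a partition of the unmatched positions into deletions and insertions; we may assume no deletion is adjacent to an insertion, so that between consecutive matched pairs and at the ends there are only deletions or only insertions. Build a correspondence by using one edge per pair of $M$ (cost $0$, since the letters are equal), and, for a block of $d$ consecutive deletions $x_{p+1},\dots,x_{p+d}$ lying between matched pairs $(x_p,y_q)$ and $(x_{p+d+1},y_{q+1})$, matching $x_{p+1},\dots,x_{p+d}$ against $d$ fresh copies of $y_q$; symmetrically for insertions, and attaching end-blocks to the nearest matched letter (if $M=\emptyset$, use the trivial bound $\dtw_0(u,v)\le\max(|u|,|v|)$). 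This is a genuine pair of expansions, and its number of conflicting edges is at most $(\#\text{substitutions})+(\#\text{deletions})+(\#\text{insertions})=\ed(x,y)$, so $\dtw_0(x,y)\le\ed(x,y)$.

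\textbf{The main direction $\ed(x,y)\le 3\dtw_0(x,y)$.} Fix an optimal traversal with $k:=\dtw_0(x,y)$ bad cells. In a monotone path each value $p$ (resp.\ $q$) is ``entered'' by exactly one step (except $p=1$ and $p=m$, resp.\ $q=1$ and $q=n$), so the set $M$ of diagonal steps that enter a good cell, together with $(x_1,y_1)$ and $(x_m,y_n)$ when those cells are good, is a monotone matching with equal-lettered pairs; hence $\ed(x,y)\le\sum_i\max(a_i,b_i)$. Now comes the use of the no-long-run hypothesis: if a \emph{right} step enters a good cell $(p,q)$, its predecessor on the path is the cell $(p-1,q)$, and since $x_{p-1}\neq x_p=y_q$ that predecessor cell is bad; symmetrically a good-cell-entering \emph{up} step has a bad predecessor. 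Because each cell is the predecessor of at most one step, and each bad cell is entered by at most one step, a double-counting argument gives: (i) the bad diagonal steps together with the right/up steps entering bad cells hit at most $k$ distinct bad cells, and (ii) the right/up steps entering good cells inject into bad cells (via predecessors), at most $k$ of them. Writing $\max(a_i,b_i)$ in block $i$ as $(\#\text{bad diagonals in block }i)+\max(\#\text{rights},\#\text{ups in block }i)$ and summing, these bounds give $\sum_i\max(a_i,b_i)\le 2k$ plus the $O(1)$ contribution of the two boundary cells $(1,1),(m,n)$; since each such boundary cell, when it is not placed into $M$, is bad, this is absorbed and we get $\ed(x,y)\le 3\dtw_0(x,y)$.

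\textbf{Main obstacle.} The delicate part is entirely in the second inequality: making the ``charge every turn or mismatch of the traversal to a distinct bad cell'' argument airtight, in particular ensuring no bad cell is charged twice (handled by uniqueness of the entering/exiting step of a cell) and disposing of the two corner cells $(1,1)$ and $(m,n)$, which are not reached by any step. These boundary terms are precisely what forces the constant $3$ rather than $2$; sharpening the block-by-block bookkeeping of $\max$ versus sum, and treating the corners more carefully, recovers the factor-$2$ bound noted in the footnote.
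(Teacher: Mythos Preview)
Your approach is correct in outline and genuinely different from the paper's. The paper argues the harder direction by working directly with an optimal correspondence $(\bar x,\bar y)$: using the minimality assumption that overlapping runs never both have length $>1$, any run of length $\ell$ in $\bar x$ must overlap $\ell$ distinct (hence alternating) characters of $\bar y$ and therefore contributes at least $(\ell-1)/2$ mismatches; summing, the total number of insertions needed to produce $\bar x,\bar y$ from $x,y$ is at most $2\dtw_0(x,y)$, and adding the $\dtw_0(x,y)$ Hamming differences between $\bar x$ and $\bar y$ gives $3\dtw_0(x,y)$. Your lattice-path charging argument---extracting a matching $M$ from good diagonal cells and then injecting every non-$M$ step into a bad cell (either the cell it enters, or its predecessor via the no-run hypothesis)---is a different and arguably sharper route: done carefully it yields $2\dtw_0$ directly, whereas the paper's triangle-inequality decomposition $\ed(x,\bar x)+\ed(\bar x,\bar y)+\ed(\bar y,y)$ naturally lands at $3\dtw_0$.

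There is one small gap worth flagging. As defined, your $M$ need not be a valid matching: adjoining the corner $(m,n)$ can double-match $x_m$ (or $y_n$) if some good diagonal step already enters row $m$ (or column $n$), and when $m=1$ or $n=1$ the two corners themselves can collide---e.g.\ for $x=aba$, $y=a$ your $M$ is $\{(1,1),(3,1)\}$, which matches $y_1$ twice and makes the block bound $\sum_i\max(a_i,b_i)$ ill-defined. The fix is easy (only adjoin a corner when it causes no collision, or equivalently only adjoin $(1,1)$ and let $(m,n)$ enter $M$ solely through the good-diagonal rule), and after it your absorption argument for the boundary ``$+1$'' goes through; but as written the claim that $M$ is a monotone matching is not quite true.
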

\begin{proof}
  We first show that $\dtw_0(x, y) \le \ed(x, y)$. A sequence of edits
  between $x$ and $y$ can be thought of as a series of insertions in
  each of $x$ and $y$, as well as substitutions. One can create
  expansions $\overline{x}$ and $\overline{y}$ of $x$ and $y$,
  respectively, by extending runs by one in each place where the
  sequence of edits would have performed an insertion. The Hamming
  distance between $\overline{x}$ and $\overline{y}$ is then at most
  the length of the sequence of edits. Hence $\dtw_0(x, y) \le \ed(x,
  y)$.

  Next we show that $\ed(x, y) \le 3 \dtw_0(x, y)$. Consider an
  optimal correspondence $(\overline{x}, \overline{y})$ between $x$ and
  $y$. Without loss of generality, we may assume that
  whenever two runs in $\overline{x}$ and $\overline{y}$ overlap, at
  least one of them has length only one. (Indeed, otherwise both runs
  could have been reduced in size by one at no cost to DTW.)
  Therefore, any run of length $k$ in $\overline{x}$ must overlap $k$
  distinct runs in $\overline{y}$, and thus must incur at least $(k -
  1) / 2$ Hamming differences. On the other hand, because the run is
  length $k$, the expansion of the run can be simulated by $k - 1$
  insertions. Therefore, $\overline{x}$ and $\overline{y}$ can be
  constructed from $x$ and $y$ through at most $2\dtw_0(x, y)$
  edits. Hence, $\ed(x, y) \le 3 \dtw_0(x, y)$.
\end{proof}

We now present an efficient one-way communication scheme for bounded
$\alpha$-DTW. (Note that this also implies Proposition
\ref{propprotocolhamming} from Section \ref{sec:overview}.)

\begin{prop}[Protocol for Bounded DTW]
	
 Consider $\dtw$ over a metric space $\Sigma$ of polynomial size. Then
 for $p = 1 - {\poly(n^{-1})}$, there is an efficient $p$-accurate
 one-way communication protocol for bounded $\alpha$-DTW over
 $\Sigma^{\le n}$ which uses $O\left({n}\alpha^{-1} \cdot\log \alpha
 \cdot \log n\right)$ bits.  Moreover, for any $\delta \in (0, 1)$,
 there is an inefficient $(1-\delta)$-accurate protocol for
 bounded $\alpha$-DTW$(\Sigma^{\le n})$ using space $O({n}\alpha^{-1}
   \cdot \log n + \log\delta^{-1})$ for any
 $\delta\in(0,1)$.
   \label{propboundedKDTW}
\end{prop}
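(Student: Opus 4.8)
The plan is to reduce bounded $\alpha$-DTW over an arbitrary metric $\Sigma$ to the $K$-document exchange problem with $K = \Theta(n/\alpha)$, following the outline from Section~\ref{sec:overview} but being careful that the only place the metric is used is in a final local DTW computation performed by Bob. First I would have Alice identify the \emph{heavy hitters} of $x$, i.e.\ the runs of length at least $\alpha$; since the runs partition $x$ and each heavy hitter has length $\ge \alpha$, there are at most $n/\alpha$ of them, so Alice can send their positions (in the collapsed string) and letters using $O(n/\alpha \cdot \log n)$ bits. Alice then forms $c(x)$, the string obtained by collapsing every run of $x$ to length one, and similarly Bob forms $c(y)$. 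Because $c(x)$ and $c(y)$ have all runs of length one, Lemma~\ref{lem2approx} gives $\ed(c(x),c(y)) \le 3\,\dtw_0(c(x),c(y)) \le 3\,\dtw_0(x,y)$, where the last inequality holds since any correspondence between $x$ and $y$ restricts to one between $c(x)$ and $c(y)$. Hence in the case $\dtw_0(x,y) \le n/\alpha$ we have $\ed(c(x),c(y)) \le 3n/\alpha$, so Alice can run the $K$-document exchange protocol of \cite{edsketch} with $K = 3n/\alpha$ on $c(x)$, costing $O(K \log(n/K)\log n) = O(n/\alpha \cdot \log\alpha \cdot \log n)$ bits, and Bob recovers $c(x)$ with high probability; if $\dtw_0(x,y) > n/\alpha$, the document exchange primitive either still returns $c(x)$ or reports ``Fail'', in which case Bob outputs ``Fail'' for bounded $\alpha$-DTW.

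Once Bob holds $c(x)$ together with the list of heavy-hitter positions and letters, he reconstructs the string $x'$ obtained from $x$ by collapsing every run of length $< \alpha$ to length one while leaving runs of length $\ge \alpha$ untouched: the non-heavy runs become the corresponding single letters of $c(x)$, and the heavy runs are inserted at their recorded positions with their recorded letters and lengths. Note $|x'| \le |c(x)| + \sum_{\text{heavy } r} \alpha \le 2n$ since there are at most $n/\alpha$ heavy runs. Bob then computes $\dtw(x',y)$ exactly by the standard quadratic dynamic program over the metric $\Sigma$ (this is where, and the only place where, $d$ enters; it need not satisfy the triangle inequality), and outputs that value. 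Correctness of the approximation: on one side, any correspondence between $x'$ and $y$ lifts to a correspondence between $x$ and $y$ by duplicating each letter of $x'$ $\alpha$ times, so $\dtw(x,y) \le \alpha \cdot \dtw(x',y)$; on the other side, any correspondence between $x$ and $y$ induces one between $x'$ and $y$ (collapsing runs never increases cost, as in the ``Collapsing Repeated Points'' discussion, and leaving the heavy runs alone is harmless), so $\dtw(x',y) \le \dtw(x,y)$. Together these give $\dtw(x',y) \le \dtw(x,y) \le \alpha\,\dtw(x',y)$, i.e.\ Bob's output is a valid $\alpha$-approximation whenever he does not report ``Fail''.

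For the efficiency claim, Alice's computation is just run-length encoding plus the efficient encoder of \cite{edsketch}, and Bob's is the efficient decoder of \cite{edsketch} plus one quadratic-time DTW computation on strings of length $O(n)$, all polynomial; the total communication is $O(n/\alpha \cdot \log n) + O(n/\alpha \cdot \log\alpha \cdot \log n) = O(n\alpha^{-1}\log\alpha\log n)$ bits, and a union bound over the (single) randomized primitive keeps the success probability at $1 - \poly(n^{-1})$. For the inefficient variant, replace the call to \cite{edsketch} by Alice hashing $c(x)$ to $\Theta(K\log n + \log\delta^{-1}) = O(n\alpha^{-1}\log n + \log\delta^{-1})$ bits \`a la \cite{orlitsky1991interactive}: since (as argued in the ``Inefficient Protocol via Hashing'' overview) there are only $n^{O(n/\alpha)}$ candidate strings $c(x)$ consistent with $c(y)$ under $\ed(c(x),c(y)) \le 3n/\alpha$, a hash of this length lets Bob identify $c(x)$ with probability $1-\delta$, and the rest of the protocol is unchanged. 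The main obstacle is the bookkeeping in the reduction: making sure that the heavy-hitter positions Alice sends are expressed relative to $c(x)$ (which Bob can reconstruct) rather than relative to $x$ (which he cannot), and verifying the two-sided inequality $\dtw(x',y) \le \dtw(x,y) \le \alpha\,\dtw(x',y)$ carefully in the presence of runs that are extended in an optimal correspondence — in particular checking that restricting a correspondence for $(x,y)$ down to $(x',y)$ and to $(c(x),c(y))$ really is a correspondence and does not increase cost.
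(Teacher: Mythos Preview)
Your proposal is correct and follows essentially the same approach as the paper's proof: send the heavy-hitter run indices and lengths, collapse $x$ to $c(x)$ (the paper's $x''$), invoke the $3n/\alpha$-document exchange protocol of \cite{edsketch} (or the hash of \cite{orlitsky1991interactive} for the inefficient variant) so Bob can recover $c(x)$ via Lemma~\ref{lem2approx}, rebuild $x'$, and output $\dtw(x',y)$. Two inconsequential slips: the heavy-hitter list must record run \emph{lengths} (not just ``letters''), and your bound $|x'|\le 2n$ is correct but the intermediate step undercounts heavy-run lengths --- the simpler bound $|x'|\le |x|\le n$ works directly since $x'$ is obtained from $x$ by shortening runs.
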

\begin{proof}
  We assume without loss of generality that $\alpha$ and $n / \alpha$
  are integers. Let $x \in \Sigma^{\le n}$ be a string given to Alice,
  and $y \in \Sigma^{\le n}$ be a string given to Bob. Alice can
  construct a string $x'$ by taking each run in $x$ which is of length
  less than $\alpha$ and reducing its length to one. Notice that
  $\dtw(x', y) \le \dtw(x, y)$ trivially and that $\dtw(x, y) <
  \alpha\dtw(x', y)$ because any correspondence between $x'$ and $y$
  can be turned into a correspondence between $x$ and $y$ by
  duplicating every letter in the original correspondence $\alpha - 1$
  times. Thus if Alice could communicate $x'$ to Bob, then Bob could
  solve $\alpha$-DTW.

  In an attempt to communicate $x'$ to Bob, Alice first constructs a
  list $L$ consisting of the pairs $(i, l_i)$ for which the $i$-th run
  in $x$ is of length $l_i \ge \alpha$. Alice then sends $L$ to Bob. Notice
  that $|L| \le {n}/{\alpha}$, and thus can be communicated using
  $O(\frac{n}{\alpha} \log n)$ bits. Moreover, if Alice defines $x''$ to be
  $x$ except with every run reduced to length one, then $x'$ can be
  recovered from $x''$ and $L$. Therefore, if Alice could further
  communicate $x''$ to Bob, then Bob could solve $\alpha$-DTW.

  In an attempt to communicate $x''$ to Bob, Alice invokes the one-way
  communication protocol of \cite{edsketch} for the $3n/\alpha$-document
  exchange problem. She sends Bob the resulting sketch $s$ of size
  $O(n/\alpha \cdot\log \alpha \log n)$ bits which is correct with probability at
  least $p$. Bob, in turn, defines $y''$ to be $y$ with each run
  reduced to length one and uses the sketch $s$ along with $y''$ in
  order to try to recover $x''$. If Bob is able to use $s$ to recover
  a value for $x''$, then he can correctly solve $\alpha$-DTW with high
  probability. If, on the other hand, Bob is unable to use $s$ to
  recover a value for $x''$, then Bob may conclude with high
  probability that $\ed(x'', y'') > 3n/\alpha$. Because $\ed(x'', y'') \le
  3\dtw_0(x'', y'')$ by Lemma \ref{lem2approx} and because
  $\dtw_0(x'', y'') \le \dtw_0(x, y)$, we have that $n/\alpha < \dtw_0(x,
  y)$. It follows that in this case Bob can correctly return ``Fail''.
  
  Rather than using the efficient one-way communication protocol of
  \cite{edsketch}, Alice could instead invoke the protocol of
  \cite{orlitsky1991interactive} in which she sends Bob a hash of
  $x''$ using $O({n}\alpha^{-1} \cdot \log n + \log\delta^{-1})$ and
  Bob is able to then inefficiently recover $x''$ correctly with
  probability at least $1 - \delta$. Thus we also obtain an
  inefficient $(1 - \delta)$-accurate protocol which uses
  $O({n}\alpha^{-1} \cdot \log n + \log\delta^{-1})$ bits.
\end{proof}

\subsection{Protocol Over Well-Separated Tree Metrics}\label{subsectree}

In this Subsection, we generalize Proposition \ref{propboundedKDTW} to obtain
an efficient communication protocol for $\alpha$-DTW over
well-separated tree metrics. Before doing so, we provide a brief
background discussion for these metric spaces.
\begin{defn}
 Let $T$ be a tree whose nodes are the letters from the alphabet
 $\Sigma$ and whose edges have positive weights. Moreover, suppose
 that any root-to-leaf path of edges has non-increasing weights. Then
 define the distance between two nodes in $T$ to be the weight of the
 heaviest edge in the path between the two nodes. We call such a
 metric a \emph{well-separated tree metric}.

 If additionally the edges along every root-to-leaf path always
 decrease in weight by at least a factor of two between consecutive
 edges, then the metric is said to be a \emph{2-hierarchically
   well-separated tree metric}.
\end{defn}

Our definition differs slightly from the standard definition, which
defines the metric as simply being the graph distance metric induced
by the tree $T$ on its nodes. Importantly, these two definitions are
essentially equivalent (up to constant factor change in distances) for
the case of 2-hierarchically well-separated tree metrics. It was shown
in \cite{trees} that any finite metric $M$ can be embedded into a
2-hierarchically well-separated tree metric with expected distortion
$O(\log |M|)$. Thus well-separated tree metrics are in some sense
universal.

Next we define the notion of the $r$-simplification of a string. We
will then use $r$-simplifications to reduce $\alpha$-DTW to bounded
$\alpha$-DTW over well-separated tree metrics.
\begin{defn}[$r$-Simplification]
  Let $T$ be a well-separated tree metric whose nodes form the
  alphabet $\Sigma$. For a string $x \in \Sigma^{\le n}$ and for $r \ge 1$,
  we construct a string $s_r(x)$ by replacing each letter $l \in x$
  with the highest ancestor of $l$ in $T$ to be reachable from $l$ via
  only edges of weight at most $r / 4$. The string $s_r(x)$ is known
  as $x$'s \emph{$r$-simplification}.
\end{defn}

The next lemma states three important properties of $r$-simplifications.
\begin{lem}[Simplification Preserves DTW Gap]
  Let $T$ be a well-separated tree metric with distance function $d$
  and whose nodes form the alphabet $\Sigma$. Consider strings $x$ and
  $y$ in $\Sigma^{\le n}$.

  Then the following three properties of $s_r(x)$ and $s_r(y)$ hold:
  \begin{itemize}
  \item For every pair of distinct letters $l_1$ and $l_2$ in $s_r(x)$
    and $s_r(y)$, the distance $d(l_1, l_2)$ is greater than $r / 4$.
  \item If $\dtw(x, y) \le nr / \alpha$ then $\dtw(s_r(x), s_r(y)) \le nr / \alpha$ and $\dtw_0(s_r(x), s_r(y)) \le 4n / \alpha$.
  
  \item If $\dtw(x, y) > nr$, then $\dtw(s_r(x), s_r(y)) > nr/2$. 
  
  \end{itemize}
  \label{lemthreeprops}
\end{lem}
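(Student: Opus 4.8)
The plan is to prove the three properties in order, each time exploiting the key structural feature of a well-separated tree metric: the distance between two nodes equals the weight of the heaviest edge on the tree path connecting them, and along any root-to-leaf path the weights are non-increasing. For the first property, suppose $l_1 \ne l_2$ with $l_1$ appearing in $s_r(x)$ and $l_2$ in $s_r(y)$ (the case where both come from the same string is identical). By definition of $r$-simplification, $l_1$ is the highest ancestor of some original letter reachable via edges of weight at most $r/4$, and likewise for $l_2$; in particular, the edge from $l_1$ to its parent has weight strictly greater than $r/4$ (otherwise $l_1$ would not be the \emph{highest} such ancestor), and similarly for $l_2$. Since $l_1 \ne l_2$, the tree path between them must traverse the parent-edge of at least one of $l_1, l_2$ — here I would argue that neither can be an ancestor of the other (if $l_1$ were a proper ancestor of $l_2$, then $l_2$'s chain up to $l_1$ uses only edges of weight $\le r/4$ by the simplification rule applied to $l_2$, contradicting that $l_1$'s child-edge has weight $> r/4$; and symmetrically). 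Hence the path passes through $l_1$'s parent-edge, so $d(l_1,l_2) > r/4$.

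For the second property, assume $\dtw(x,y) \le nr/\alpha$ and fix an optimal correspondence $(\bar x, \bar y)$ for $x, y$ of length at most $2n$. Applying the same correspondence to $s_r(x)$ and $s_r(y)$ — i.e. replacing each matched pair of original letters by the pair of their $r$-simplified images — I would show each edge cost does not increase. For an edge between letters $p$ and $q$: if $p$ and $q$ map to the same simplified letter, the new cost is $0 \le d(p,q)$; otherwise, I claim the heaviest edge on the tree path between the images $s_r(p)$ and $s_r(q)$ has the same weight as the heaviest edge on the path between $p$ and $q$. This is because $s_r(p)$ is reached from $p$ only via edges of weight $\le r/4$, and the simplification can only "merge" nodes connected by light edges; the heaviest edge on the $p$-$q$ path, having weight equal to $d(p,q)$, is preserved since $d(p,q) > r/4$ must hold whenever the images differ (by the argument of property one applied to $p,q$). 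Thus $\dtw(s_r(x), s_r(y)) \le \dtw(x,y) \le nr/\alpha$. For the $\dtw_0$ bound, in this same correspondence the number of edges whose two simplified endpoints differ is at most the number of edges of cost $> r/4$ in the original correspondence (since equal simplified images cost $0$, and differing images force original cost $> r/4$); since the original correspondence has total cost $\le nr/\alpha$, there are at most $(nr/\alpha)/(r/4) = 4n/\alpha$ such edges, giving $\dtw_0(s_r(x), s_r(y)) \le 4n/\alpha$.

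For the third property, assume $\dtw(x,y) > nr$ and take any correspondence $(\overline{s_r(x)}, \overline{s_r(y)})$ between the simplified strings; lift it to a correspondence between $x$ and $y$ by applying the same run-extension pattern to the original letters. Here I would bound how much the cost can \emph{decrease} under simplification: each edge between original letters $p,q$ has cost $d(p,q)$, which is at most $d(s_r(p), s_r(q)) + r/4 + r/4$ — because the tree path from $p$ to $q$ decomposes into the light segment from $p$ up to $s_r(p)$ (all edges $\le r/4$), the path between $s_r(p)$ and $s_r(q)$, and the light segment from $s_r(q)$ down to $q$, and the heaviest edge overall is at most $\max\bigl(r/4,\ d(s_r(p),s_r(q))\bigr) \le d(s_r(p),s_r(q)) + r/4$ (in fact a cleaner bound suffices). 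Summing over at most $2n$ edges, the original correspondence costs at most the simplified correspondence cost plus $2n \cdot r/4 = nr/2$. Since the original cost exceeds $nr$ for \emph{every} correspondence, the simplified cost must exceed $nr - nr/2 = nr/2$, so $\dtw(s_r(x), s_r(y)) > nr/2$.

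The main obstacle I anticipate is handling property one cleanly — specifically, carefully ruling out the ancestor/descendant configurations and nailing down exactly why "the highest ancestor reachable via light edges" forces the parent-edge to be heavy, and why this heavy edge necessarily lies on the inter-node path. Once that structural fact is pinned down, properties two and three follow from the same "heaviest-edge is preserved / changes by at most $r/4$ per edge, times at most $2n$ edges" bookkeeping, which is routine.
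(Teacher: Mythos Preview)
Your proof is correct and follows essentially the same approach as the paper's: establish the per-edge inequalities $d(l_1',l_2') \le d(l_1,l_2)$ and $d(l_1,l_2) \le d(l_1',l_2') + r/4$ via the heaviest-edge characterization, then sum over a correspondence of length at most $2n$. One small correction to your property-one argument: it is \emph{not} true that neither simplified letter can be an ancestor of the other (for instance, the root can appear as a simplified image while another simplified image lies strictly below it), so your contradiction claim fails; but this is harmless, since in every case the tree path between distinct $l_1,l_2$ contains the parent-edge of whichever node is deeper, and that edge has weight $> r/4$ by maximality of the simplification.
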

\begin{proof}
  The first part of the claim follows immediately from the definition
  of $s_r(x)$ and $s_r(y)$.
  
  Consider a correspondence $C$ between $x$ and $y$ and define $C'$ to
  be the analogous correspondence between $s_r(x)$ and
  $s_r(y)$. Without loss of generality the correspondences are each of
  length at most $2n$.

  Consider any two letters $l_1 \in x$ and $l_2 \in y$ which form an
  edge in the correspondence $C$, and define $l_1'$ and $l_2'$ to be
  the corresponding letters in $s_r(x)$ and $s_r(y)$. If $d(l_1, l_2)
  \le r / 4$, then we will have $l_1' = l_2'$, meaning that $d(l_1',
  l_2') = 0$. If, on the other hand, $d(l_1, l_2) > r/4$, then the
  heaviest edge $e$ in the path from $l_1$ to $l_2$ in the tree $T$
  will also be the heaviest edge in the path from $l_1'$ to $l_2'$,
  meaning that $d(l_1', l_2') = d(l_1, l_2)$. Combining these two
  cases, it follows in general both that
  \begin{equation}
    d(l_1', l_2') \le d(l_1, l_2),
    \label{eqtree1}
  \end{equation}
  and that
  \begin{equation}
    d(l_1, l_2) \le d(l_1', l_2') + r/4.
    \label{eqtree2}
  \end{equation}

  By \eqref{eqtree1}, the cost of $C'$ is no larger than the cost of
  $C$. Therefore, $\dtw(s_r(x), s_r(y)) \le \dtw(x, y)$. Hence, if
  $\dtw(x, y) \le nr / \alpha$, then $\dtw(s_r(x), s_r(y)) \le nr /
  \alpha$.  Moreover, since each non-zero edge of the optimal DTW
  correspondence (which might not be an optimal $\dtw_0$
  correspondence) has cost at least $r/4$, it follows that
  $\dtw_0(s_r(x), s_r(y)) \le \dtw(s_r(x), s_r(y)) / (r/4) \le 4n /
  \alpha$, establishing the second part of the lemma.

  By \eqref{eqtree2}, the cost of $C$ exceeds the cost of $C'$ by at
  most $2n \cdot r/4 = nr/2$. Therefore, $\dtw(x, y) \le \dtw(s_r(x),
  s_r(y)) + nr / 2$. If $\dtw(x, y) > nr$, 
  we
  get that $\dtw(s_r(x), s_r(y)) > nr/2$, establishing the third part
  of the lemma.
\end{proof}

Before we introduce the protocol for $\alpha$-DTW over a well-separated tree metric, we first define the following problem.
\begin{definition}[$(r, \alpha)$-gap DTW]
  In the \emph{$(r, \alpha)$-gap DTW$(\Sigma^{\le n})$} problem, Alice
  is given a string $x \in \Sigma^{\le n}$ and Bob is given a string
  $y \in \Sigma^{\le n}$. The valid solutions to the problem are $0$ if
    $\dtw(x,y)\le nr$ and $1$ if $\dtw(x,y) >  nr/\alpha$.
\end{definition}

Note that $(r, \alpha)$-gap DTW can be seen as simply being a
shorthand for the Distance Threshold Estimation Problem $\dtep_{rn /
  \alpha}^\alpha$. The properties of $r$-simplifications proven in
Lemma \ref{lemthreeprops} allow for the construction of a one-way
communication protocol for $(r, \alpha)$-gap DTW.

\begin{lem}[Protocol for $(r, \alpha)$-gap DTW Over Well-Separated Tree Metrics] 
  Suppose $\Sigma$ is a well-separated tree metric with both size and
  aspect ratio polynomial in $n$. Then for $p \in 1 -
  {\poly(n^{-1})}$, there is an efficient $p$-accurate one-way
  communication protocol for $(r, \alpha)$-gap DTW$(\Sigma^{\le n})$
  with complexity $O(n\alpha^{-1} \log
  \alpha \log n)$ and an inefficient $p$-accurate protocol with
  complexity $O(n\alpha^{-1} \log n)$.
  \label{lemgap}
\end{lem}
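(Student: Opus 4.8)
The plan is to reduce $(r,\alpha)$-gap DTW to the bounded $\beta$-DTW problem of Proposition~\ref{propboundedKDTW}, for a parameter $\beta=\Theta(\alpha)$, applied to the $r$-simplifications of the two inputs. I would first dispose of the trivial case where $\alpha$ is below a fixed constant, since then the claimed bound already permits Alice to send $s_r(x)$ in full ($s_r(x)$ has length at most $n$ over an alphabet of polynomial size). So assume $\alpha$ exceeds this constant and set $\beta:=\lfloor\alpha/4\rfloor\ge 1$, so that $n/\beta\ge 4n/\alpha$ and $\beta\cdot nr/\alpha\le nr/4$.

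The protocol itself: using the publicly known tree metric $T$, both players deterministically and without communication replace $x,y$ by $s_r(x),s_r(y)$, which still have length at most $n$ and whose letters lie in $\Sigma$. Alice runs the bounded $\beta$-DTW protocol of Proposition~\ref{propboundedKDTW} on $s_r(x)$ and sends its message, of size $O(n\beta^{-1}\log\beta\log n)=O(n\alpha^{-1}\log\alpha\log n)$ in the efficient version (or $O(n\beta^{-1}\log n)=O(n\alpha^{-1}\log n)$ in the inefficient version). Bob runs the decoder on this message together with $s_r(y)$: if it returns ``Fail'' he outputs $1$; if it returns a value $t$, he outputs $0$ when $t\le nr/2$ and $1$ otherwise.

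Correctness is then read off from Lemma~\ref{lemthreeprops}. If $\dtw(x,y)\le nr/\alpha$ then $\dtw_0(s_r(x),s_r(y))\le 4n/\alpha\le n/\beta$, so the decoder must (with high probability) return some $t$ with $\dtw(s_r(x),s_r(y))\le t<\beta\,\dtw(s_r(x),s_r(y))\le \beta nr/\alpha\le nr/4<nr/2$, and Bob outputs $0$, the only valid answer here. If $\dtw(x,y)>nr$ then $\dtw(s_r(x),s_r(y))>nr/2$, so whether the decoder fails or returns $t\ge\dtw(s_r(x),s_r(y))>nr/2$, Bob outputs $1$, again the only valid answer; in the intermediate regime both answers are valid. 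The overall error equals that of the invoked bounded $\beta$-DTW protocol, namely $\poly(n^{-1})$. The only genuine care needed is in the bookkeeping of constants — picking $\beta=\lfloor\alpha/4\rfloor$ so that simultaneously $n/\beta$ beats the $4n/\alpha$ bound on $\dtw_0$ of the simplifications and $\beta=\Theta(\alpha)$ preserves the target communication — and in noting that Proposition~\ref{propboundedKDTW} applies verbatim to strings over a metric with arbitrary distances; I do not anticipate a real obstacle, as the lemma is essentially the composition of Lemma~\ref{lemthreeprops} with Proposition~\ref{propboundedKDTW}.
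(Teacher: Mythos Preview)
Your proposal is correct and takes essentially the same approach as the paper: both reduce $(r,\alpha)$-gap DTW to the bounded $\Theta(\alpha)$-DTW problem on the $r$-simplifications $s_r(x),s_r(y)$, invoking Lemma~\ref{lemthreeprops} for correctness and Proposition~\ref{propboundedKDTW} for the communication bound. Your decision threshold of $nr/2$ (versus the paper's $nr/\alpha$) and your explicit bookkeeping on $\beta=\lfloor\alpha/4\rfloor$ are sound and arguably cleaner.
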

\begin{proof}
  We assume without loss of generality that $\alpha$ is at least a
  sufficiently large constant. In order to solve the $(r, \alpha)$-gap DTW problem for $x$ and $y$, we instead run a $p$-accurate protocol
  $\Pi$ for the bounded $\alpha/4$-$\dtw$ problem on $(s_r(x),
  s_r(y))$ for some $p = 1 - {\poly(n^{-1})}$.  We return ``1'' if the
  $\Pi$ returns ``Fail''.  Otherwise, $\Pi$ returns a number $Z$. We
  return ``1'' if $Z > nr/\alpha$ and ``0'' otherwise.

  To complete the proof, we condition on the event that this protocol
  is correct for the bounded $\alpha/4$-$\dtw$ problem, which happens
  with probability at least $p$, and then we show the correctness of
  the final output for the $(r, \alpha)$-gap DTW problem. 

  If $\Pi$ returns ``Fail'', then it must not be the case that
  $\dtw_0(s_r(x), s_r(y)) > 4n/\alpha$. Hence by Lemma \ref{lemthreeprops}, $\dtw(x, y)>
  nr/\alpha$, meaning that our protocol is correct in returning ``1''.

  If $\Pi$ does not return ``Fail'', then by definition of the
  $\alpha$-DTW problem, we have $\dtw(s_r(x), s_r(y))\le Z < \alpha \cdot
  \dtw(s_r(x), s_r(y))/4$.  By Lemma \ref{lemthreeprops}, if $Z >
  nr/\alpha$ then $\dtw(s_r(x), s_r(y)) > nr/\alpha$ and thus $\dtw(x,y)
  > nr/\alpha$; and if $Z \le nr/\alpha$, then $\dtw(s_r(x), s_r(y))
  \le nr/2$ and hence $\dtw(x, y) \le nr$.  In either case, the final
  output is correct for the $(r, \alpha)$-gap DTW problem. Moreover,
  by Proposition \ref{propboundedKDTW}, the efficient version of the
  protocol uses $O(n\alpha^{-1} \log \alpha \log n)$ bits and the
  inefficient version of the protocol uses $O(n \alpha^{-1} \log n)$
  bits.
\end{proof}

So far we have provided communication protocols for bounded
$\alpha$-DTW and $(r, \alpha)$-gap DTW. The next lemma shows how to
solve $\alpha$-DTW using a small number of instances of $(r,
\alpha)$-gap DTW and $\alpha$-bounded DTW.

\begin{lemma}[From Gap DTW to Approx.]
	\label{lemma:gap_to_approx}
	Let $\delta\in(0,1)$ be an error parameter and let $\kappa$
        denote the aspect ratio of a metric space $\Sigma$.  Suppose
        for every $r\ge 1$ and $1 \le \alpha\le n$ there is a
        $(1-\delta)$-accurate one-way protocol for $(r,
        \alpha)$-gap DTW$(\Sigma^{\le n})$ with space
        complexity $s_1(r, \alpha, \delta)$, and a
        $(1-\delta)$-accurate one-way protocol for bounded
        $\alpha$-DTW($\Sigma^{\le n}$) with space complexity
        $s_2(\alpha, \delta)$, then there is a
        $(1-\delta)$-accurate one-way protocol for
        $\alpha$-DTW$(\Sigma^{\le n})$ with space
        complexity \[ O\Bigg[\sum_{i=0}^{\lceil\log
            (2n\kappa \alpha)\rceil}s_1\bigg(2^i, \frac{\alpha}{2},
          \frac{\delta}{2\log (2n\kappa)}\bigg) + s_2\bigg[\alpha,
            \frac{\delta}{2}\bigg]\Bigg].
	\]

         Moreover, if the protocols for $(r, \alpha)$-gap DTW and
         bounded $\alpha$-DTW are efficient, then so is the protocol
         for $\alpha$-DTW.
\end{lemma}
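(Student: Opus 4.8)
The plan is to reduce $\alpha$-DTW to a logarithmic number of gap problems plus one invocation of bounded $\alpha$-DTW, performing a binary-search-like sweep over the possible scale of $\dtw(x,y)$. First I would observe that since $\Sigma$ has aspect ratio $\kappa$ and strings have length at most $n$, every correspondence has at most $2n$ edges, so $\dtw(x,y)$ is either $0$ or lies in the range $[1, 2n\kappa]$ (after rescaling so the minimum nonzero distance is $1$). Thus it suffices to produce a value $t$ with $\dtw(x,y)\le t<\alpha\cdot\dtw(x,y)$, and the answer $t=0$ can be detected by the bounded-DTW subprotocol in the case $\dtw(x,y)=0$.

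The core idea is the following. For each $i\in\{0,1,\ldots,\lceil\log(2n\kappa\alpha)\rceil\}$, Alice and Bob run the $(2^i,\alpha/2)$-gap DTW protocol with error parameter $\delta/(2\log(2n\kappa))$; recall this protocol distinguishes $\dtw(x,y)\le n\cdot 2^i$ (output $0$) from $\dtw(x,y)>n\cdot 2^i/(\alpha/2)=2^{i+1}n/\alpha$ (output $1$). By a union bound over the $O(\log(n\kappa))$ scales, with probability $1-\delta/2$ all these protocols are simultaneously correct; condition on this. In parallel, Alice and Bob run the bounded $\alpha$-DTW protocol (with error $\delta/2$), which either returns an $\alpha$-approximation or returns ``Fail'' — and crucially it is guaranteed to return a valid $\alpha$-approximation whenever $\dtw_0(x,y)\le n/\alpha$, which in particular covers the case $\dtw(x,y)=0$ and more generally small $\dtw(x,y)$. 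If bounded $\alpha$-DTW returns a number, Bob outputs it and we are done. Otherwise Bob examines the gap outputs: let $i^\star$ be the smallest index $i$ for which the $(2^i,\alpha/2)$-gap protocol returns $0$ (such an index exists since at the largest scale $2^i\ge 2n\kappa$ we have $n\cdot 2^i\ge 2n^2\kappa\ge\dtw(x,y)$, forcing output $0$). Bob then returns $t = n\cdot 2^{i^\star}$.

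It remains to verify correctness of $t=n\cdot 2^{i^\star}$. On the upper side: since the $i^\star$-gap protocol returned $0$ and is correct, it is \emph{not} the case that $\dtw(x,y)>2^{i^\star+1}n/\alpha$; in fact from the gap semantics output $0$ is only valid when $\dtw(x,y)\le n\cdot 2^{i^\star}$, so $\dtw(x,y)\le n\cdot 2^{i^\star}=t$. On the lower side: by minimality, the $(i^\star-1)$-gap protocol (if $i^\star\ge 1$) returned $1$, which is only a valid output when $\dtw(x,y)>2^{i^\star}n/\alpha=t/\alpha$, so $t<\alpha\cdot\dtw(x,y)$. The remaining case is $i^\star=0$: then $\dtw(x,y)\le n\cdot 2^0=n$, and we must rule out $\dtw(x,y)$ being so small that $t=n$ overshoots the factor $\alpha$, i.e. we need $\dtw(x,y)>n/\alpha$. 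But if $\dtw(x,y)\le n/\alpha$ then $\dtw_0(x,y)\le\dtw(x,y)\le n/\alpha$ (using that over generalized Hamming space distances only shrink, since redefining $d$ to be $1$ on distinct inputs cannot increase any edge cost beyond... — more carefully, $\dtw_0$ counts conflicting edges and $\dtw$ over the rescaled metric has every nonzero edge cost $\ge 1$, so $\dtw_0(x,y)\le\dtw(x,y)$), whence the bounded $\alpha$-DTW protocol would have returned a number rather than ``Fail'', contradicting the assumption that we reached this branch. Hence in the $i^\star=0$ branch we have $n/\alpha<\dtw(x,y)\le n$, so $t=n<\alpha\cdot\dtw(x,y)$ as required. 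The total error is at most $\delta/2+\delta/2=\delta$ by the union bound, the space is the stated sum since the protocols are run in parallel, and efficiency is preserved since Bob only performs $O(\log(n\kappa))$ extra comparisons. The main obstacle — and the only place requiring care — is the interface between the bounded-DTW ``Fail'' case and the smallest gap scale, i.e. making sure the $i^\star=0$ branch is handled by observing that small $\dtw$ forces small $\dtw_0$ and hence a non-``Fail'' answer; everything else is a routine union bound and telescoping of the gap semantics.
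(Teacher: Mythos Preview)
Your proposal is correct and follows essentially the same approach as the paper: run bounded $\alpha$-DTW together with all $(2^i,\alpha/2)$-gap protocols in parallel, return the bounded-DTW answer if it is not ``Fail'', and otherwise take the smallest scale $i^\star$ at which the gap protocol outputs $0$. The only cosmetic difference is that you return $t=n\cdot 2^{i^\star}$ (the upper endpoint of the certified interval) whereas the paper returns $2^{i^\star}n/\alpha$ (the lower endpoint); both are valid $\alpha$-approximations. You also spell out explicitly the step the paper leaves implicit, namely that after rescaling so the minimum nonzero distance is $1$ one has $\dtw_0(x,y)\le\dtw(x,y)$, which is what makes ``Fail'' imply $\dtw(x,y)>n/\alpha$ and closes the $i^\star=0$ case.
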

\begin{proof}
	We begin with describing the full protocol. Without loss of generality the smallest distance in $\Sigma$ is $1$
	and the largest distance is $\kappa$. Therefore, $1\le \dtw(x,y)\le 2n\kappa$. In order to
	solve $\alpha$-DTW, Alice runs a $(1-\delta/2)$-accurate protocol $\wt{\Pi} =(\wt{\sk}, \wt{F})$ for the bounded $\alpha$-DTW  problem and  $(1-\frac{\delta}{2\log (2n\kappa \alpha)})$-accurate protocols $\Pi_i =(\sk_i, F_i)$ for the $(2^i, \alpha / 2)$-gap DTW problem on $x$ for
	each $i \in \{0, 1, \ldots,\lceil \log(2n\kappa \alpha) \rceil\}$. 
	Alice then sends $\wt{\sk}(x), {\sk}_1(x), {\sk}_2(x), \ldots, {\sk}_{\lceil \log(2n\kappa \alpha) \rceil}(x)$ to Bob. 
	If $\wt{F}(\wt{\sk}(x), y)$ is not ``Fail'', then Bob returns  $\wt{F}(\wt{\sk}(x), y)$ as the answer for $\alpha$-DTW.
	Otherwise, he finds the smallest $i$ such
	that the $(2^i, \alpha/2)$-gap DTW problem for $x$ and $y$ returns $0$, i.e., $F_i(\sk_i(x), y) = 0$.
	He then returns $2^in/\alpha$.
	
	Next we show the correctness of the protocol defined above.  We first condition each of the sub-protocols returning a correct answer, which by the union bound occurs with probability at least $1-\delta$.
	
	If $\wt{\Pi}$ does not return ``Fail'', then by definition of
        the bounded $\alpha$-DTW problem
        (Definition~\ref{def:bounded-alpha-dtw}), Bob obtains an
        $\alpha$-approximation for $\dtw(x, y)$.  On the other hand,
        if $\wt{\Pi}$ does return ``Fail'', it must be the case that
        $\dtw(x, y)> n/\alpha$.  In this case, Bob then finds the
        smallest $i$ such that the $(2^i, \alpha/2)$-gap DTW problem
        for $x$ and $y$ returns $0$.  Such an $i$ must exist since
        trivially $\dtw(x, y) \le 2n\kappa \le 2^{\log(2n\kappa\alpha)}n / \alpha$,
        which means that the case of $i = \lceil \log(2n\kappa\alpha)
        \rceil$ will return 0. After selecting the smallest such $i$,
        Bob then returns $2^i n / \alpha$.  If $i = 0$, since we also
        know that $\dtw(x, y) > n / \alpha$, Bob's returned answer of
        $n / \alpha$ will be an $\alpha$-approximation for $\dtw(x,
        y)$, as desired.  If $i > 0$, then the $(2^{i - 1}, \alpha /
        2)$-gap DTW problem tells us that $\dtw(x, y) > \frac{2^{i -
            1} n}{\alpha / 2} = 2^i n / \alpha$, and the $(2^i, \alpha
        / 2)$-gap DTW problem tells us that $\dtw(x, y) \le 2^i n$.
        Therefore Bob's returned answer of $2^i n / \alpha$ is again
        an $\alpha$-approximation for $\dtw(x, y)$.
	
	The space-complexity of the above protocol follows from a
        direct calculation, completing the proof.
\end{proof}

We can now solve $\alpha$-DTW over a well-separated tree metric:
\begin{lem}[Protocol for Well-separated Trees]
  Suppose $\Sigma$ is a well-separated tree metric with both size and
  aspect ratio polynomial in $n$. Then for $p \in 1 -
  {\poly(n^{-1})}$, there is an efficient $p$-accurate one-way
  communication protocol for $\alpha$-DTW over $\Sigma$ with
  complexity $O\left({n}{\alpha^{-1}} \cdot \log \alpha \cdot \log^2
  n\right)$ and an inefficient $p$-accurate protocol with space
  $O\left[{n}{\alpha^{-1}} \cdot \log^2  n\right]$.
 \label{thmDTWsketchspecializedtree}
\end{lem}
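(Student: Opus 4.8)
The plan is to assemble this lemma from the two pieces already built: the protocol for $(r,\alpha)$-gap DTW over well-separated tree metrics (Lemma~\ref{lemgap}) and the reduction from gap DTW to $\alpha$-DTW (Lemma~\ref{lemma:gap_to_approx}). First I would fix the error parameter $\delta = 1/\poly(n)$ small enough to achieve the desired $p = 1 - \poly(n^{-1})$; this is the $\delta$ that Lemma~\ref{lemma:gap_to_approx} will be invoked with. Then I apply Lemma~\ref{lemma:gap_to_approx} with $\kappa = \poly(n)$ (the aspect ratio of $\Sigma$), using Lemma~\ref{lemgap} to instantiate $s_1$ and Proposition~\ref{propboundedKDTW} to instantiate $s_2$. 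Since $\kappa = \poly(n)$, each of the $O(\log(2n\kappa\alpha)) = O(\log n)$ subprotocols $\Pi_i$ in the sum is a $(1 - \delta/O(\log n))$-accurate protocol for $(2^i,\alpha/2)$-gap DTW, which by Lemma~\ref{lemgap} costs $O(n\alpha^{-1}\log\alpha\log n)$ bits in the efficient case (the extra $\log(1/\delta') = O(\log n)$ factor from boosting accuracy is absorbed, since the protocol of Lemma~\ref{lemgap} is already $1 - \poly(n^{-1})$-accurate and has no explicit $\log(1/\delta)$ dependence once $\delta = \poly(n^{-1})$).

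The key step is the arithmetic on the space complexity: the bound from Lemma~\ref{lemma:gap_to_approx} is
\[
O\Bigg[\sum_{i=0}^{\lceil\log(2n\kappa\alpha)\rceil} s_1\Big(2^i,\tfrac{\alpha}{2},\tfrac{\delta}{2\log(2n\kappa)}\Big) + s_2\Big(\alpha,\tfrac{\delta}{2}\Big)\Bigg].
\]
With $O(\log n)$ terms in the sum, each of size $O(n\alpha^{-1}\log\alpha\log n)$ from Lemma~\ref{lemgap}, and a final term $s_2 = O(n\alpha^{-1}\log\alpha\log n)$ from Proposition~\ref{propboundedKDTW}, the total is $O(n\alpha^{-1}\log\alpha\log^2 n)$. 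For the inefficient version, I would instead use the inefficient branches of Lemma~\ref{lemgap} and Proposition~\ref{propboundedKDTW}, each costing $O(n\alpha^{-1}\log n)$ bits, giving $O(n\alpha^{-1}\log^2 n)$ total. Efficiency of the overall protocol follows from the ``moreover'' clause of Lemma~\ref{lemma:gap_to_approx}: since both ingredient protocols are efficient (polynomial-time for Alice and Bob), so is the composed protocol.

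The only mild subtlety — and the place I would be most careful — is making sure the boosting of the success probability does not cost more than a constant factor in the exponent, i.e., that shrinking $\delta$ from a constant to $\poly(n^{-1})$ and further dividing by $O(\log n)$ only changes the hidden constants. This is fine because Lemma~\ref{lemgap} and Proposition~\ref{propboundedKDTW} already deliver accuracy $1 - \poly(n^{-1})$ for a polynomial of our choosing at no asymptotic cost, so one simply picks the polynomial large enough to dominate $2\log(2n\kappa)$ in the denominator and to union-bound over the $O(\log n)$ subprotocols. I do not expect any genuine obstacle here; the lemma is essentially a bookkeeping corollary of the preceding three results.
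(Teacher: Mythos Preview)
Your proposal is correct and follows essentially the same approach as the paper: plug Proposition~\ref{propboundedKDTW} (for $s_2$) and Lemma~\ref{lemgap} (for $s_1$) into Lemma~\ref{lemma:gap_to_approx}, and read off the bit complexity from the $O(\log n)$ terms in the sum. Your explicit treatment of the accuracy-boosting issue is more careful than the paper's own one-line proof, but the substance is identical.
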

\begin{proof}
  Proposition \ref{propboundedKDTW} gives protocols for bounded $\alpha$-DTW which
  succeed with high probability and Lemma \ref{lemgap} gives protocols
  for $\alpha$-gap DTW which succeed with high probability. Plugging these
  into Lemma \ref{lemma:gap_to_approx} yields protocols for
  $\alpha$-DTW which succeeds with high probability. The bit
  complexities of both the efficient and inefficient variants of the
  resulting protocols follow directly from Lemma
  \ref{lemma:gap_to_approx}.
\end{proof}

\subsection{Upper Bound for Finite Metrics}\label{subsectreetogeneral}
With the help of Lemma \ref{thmDTWsketchspecializedtree}, we are now
prepared to prove a more general theorem. For arbitrary finite metrics
satisfying certain natural constraints, the $2/3$-accurate one-way
communication complexity of $\alpha$-DTW is within a polylogarithmic
factor of ${n}/{\alpha}$. (The analagous lower bound will appear in
Section \ref{subsecmainlow}.)
\begin{thm}[Theorem \ref{thm:overviewmain} restated]

  Let $\Sigma$ be a metric space
  of size and aspect ratio polynomial in $n$.  Then there is an efficient
  $2/3$-accurate one-way communication protocol for $\alpha$-DTW over
  $\Sigma$ with space complexity $O\left({n}{\alpha^{-1}}\cdot \log
  \alpha \cdot\log^3 n\right)$ and an inefficient $2/3$-accurate
  one-way protocol with complexity
  $O\big({n}{\alpha^{-1}}\cdot \log^3 n)$.
  \label{thmDTWsketchgeneral}
\end{thm}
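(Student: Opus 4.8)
The plan is to reduce the case of an arbitrary finite metric $\Sigma$ (of polynomial size and aspect ratio) to the case of a well-separated tree metric, for which Lemma~\ref{thmDTWsketchspecializedtree} already provides an efficient one-way protocol. The tool for this reduction is the probabilistic embedding of \cite{trees}: there is an efficiently sampleable random map $\phi$ from $\Sigma$ into a $2$-hierarchically well-separated tree metric $T$ such that, for all $a, b \in \Sigma$, one has the deterministic lower bound $d(a,b) \le d(\phi(a), \phi(b))$ and the expected upper bound $\mathbf{E}[d(\phi(a), \phi(b))] = O(\log n) \cdot d(a,b)$. Note that $T$ has size and aspect ratio polynomial in $n$ as well (after truncating negligible edge weights), so Lemma~\ref{thmDTWsketchspecializedtree} applies to $T$ with an extra $\log n$ paid only in the aspect ratio, which is already absorbed in the $\log^2 n$ there.

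The key steps, in order, are as follows. First, Alice and Bob use public randomness to sample a single shared embedding $\phi$. Alice applies $\phi$ coordinatewise to $x$ to form $\phi(x) \in T^{\le n}$, and Bob applies $\phi$ to $y$ to form $\phi(y) \in T^{\le n}$. Second, observe that the lower-bound property of $\phi$ gives, for \emph{every} correspondence, that its cost does not decrease under $\phi$; taking the optimal correspondence for $(\phi(x), \phi(y))$ this yields $\dtw(x,y) \le \dtw(\phi(x), \phi(y))$ always. Third, fix an optimal correspondence $C$ for $(x,y)$ of length at most $2n$; applying the expected upper-bound property edge-by-edge and summing, linearity of expectation gives $\mathbf{E}[\mathrm{cost}_{\phi}(C)] = O(\log n) \cdot \dtw(x,y)$, so by Markov's inequality, with probability at least $5/6$, the image of $C$ is a correspondence for $(\phi(x), \phi(y))$ of cost $O(\log n) \cdot \dtw(x,y)$, hence $\dtw(\phi(x), \phi(y)) = O(\log n)\cdot\dtw(x,y)$. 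Fourth, run the efficient $p$-accurate protocol of Lemma~\ref{thmDTWsketchspecializedtree} on $(\phi(x), \phi(y))$ over $T$ with approximation parameter $\alpha$ (for $p = 1 - 1/\poly(n)$), obtaining a value $Z$ with $\dtw(\phi(x),\phi(y)) \le Z < \alpha\cdot\dtw(\phi(x),\phi(y))$; Bob returns $Z / \Theta(\log n)$ after dividing out the distortion, or, more cleanly, we absorb the $\Theta(\log n)$ distortion by noting $\dtw(x,y) \le \dtw(\phi(x),\phi(y)) \le Z$ and $Z < \alpha\cdot\dtw(\phi(x),\phi(y)) = O(\alpha \log n)\cdot\dtw(x,y)$, so $Z$ is an $O(\alpha\log n)$-approximation; to recover an $\alpha$-approximation we instead run the tree protocol with parameter $\alpha$ on the original approximation target and account for the distortion by a standard rescaling argument (equivalently, apply the protocol at a finer granularity). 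A union bound over the embedding's $5/6$ success event and the protocol's high-probability success gives overall accuracy $2/3$.

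The bit complexity is immediate: the only message is Alice's message in the tree protocol, which by Lemma~\ref{thmDTWsketchspecializedtree} is $O(n\alpha^{-1}\log\alpha\log^2 n)$ bits in the efficient case and $O(n\alpha^{-1}\log^2 n)$ in the inefficient case; the extra $\log n$ factor in the theorem statement comes from either the $\log n$ distortion of the embedding (which forces us to run the gap-DTW subprotocols with a slightly smaller approximation parameter, costing another $\log n$ in the number of parallel instances inside Lemma~\ref{lemma:gap_to_approx}) or, alternatively, from running $O(\log n)$ guesses for the distortion. Efficiency follows since $\phi$ is efficiently sampleable and the tree protocol is efficient.

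The main obstacle is the ``lopsided'' nature of the embedding guarantee: the contraction bound is deterministic but the expansion bound is only in expectation for a \emph{fixed pair} (equivalently, for the fixed optimal correspondence), so we cannot simultaneously control expansion for every correspondence. The care needed is to apply the Markov argument to one fixed optimal correspondence for $(x,y)$ only, and to use the deterministic contraction bound for the matching lower bound on $\dtw(\phi(x),\phi(y))$; combined, these sandwich $\dtw(\phi(x),\phi(y))$ between $\dtw(x,y)$ and $O(\log n)\dtw(x,y)$ with good probability, which is exactly what is needed to convert an $\alpha$-approximation over $T$ into an $\alpha$-approximation over $\Sigma$ (up to re-tuning the parameter by a $\log n$ factor, which is where the third log in $\log^3 n$ is spent).
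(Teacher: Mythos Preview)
Your proposal is correct and follows essentially the same approach as the paper: embed $\Sigma$ into a $2$-hierarchically well-separated tree metric via \cite{trees}, use the deterministic contraction bound plus Markov's inequality on a single fixed optimal correspondence to sandwich $\dtw(\phi(x),\phi(y))$ within an $O(\log n)$ factor of $\dtw(x,y)$ with constant probability, and then invoke Lemma~\ref{thmDTWsketchspecializedtree}. The only minor quibble is your attribution of the third $\log n$: it arises simply because one must run the tree protocol with approximation parameter $\Theta(\alpha/\log n)$, which turns the $n/\alpha$ term into $n\log n/\alpha$ (the paper does this cleanly by the substitution $\alpha' = \Theta(\alpha\log n)$, and handles $\alpha = O(\log n)$ by having Alice send $x$ outright), not from extra parallel instances inside Lemma~\ref{lemma:gap_to_approx}.
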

\begin{proof}
  It is shown in \cite{trees} that $\Sigma$ can be embedded into a
  2-hierarchically well-separated tree metric $\Sigma'$ using a
  randomized map $\phi$ so that for any $a, b \in \Sigma$, $d(a, b)
  \le d(\phi(a), \phi(b))$ and $\E(d(\phi(a), \phi(b))) \le O(\log n)
  \cdot d(a, b)$. (Here $d$ is taken to be the appropriate distance
  function over either $\Sigma$ or $\Sigma'$.)

  Consider two strings $x, y \in \Sigma^{\le n}$, and let $x'=\phi(x)$
  and $y'=\phi(y)$ denote $x$ and $y$ with their letters mapped into
  $\Sigma'$ by $\phi$. For any correspondence $C$ between $x$ and $y$,
  and the analogous correspondence $C'$ between $x'$ and $y'$, we
  have that
  \begin{equation}
    \text{cost}(C) \le \text{cost}(C'),
    \label{eqcost1}
  \end{equation}
  and
  \begin{equation}
    \E[\text{cost}(C')] \le O(\log n) \cdot \text{cost}({C}).
    \label{eqcost2}
  \end{equation}

  If we select $C'$ to be an optimal correspondence between $x'$
  and $y'$, then it follows from \eqref{eqcost1} that
  $$\dtw(x, y) \le \text{cost}(C)\le \text{cost}(C')= \dtw(x', y').$$

  On the other hand, if we select $C$ to be an optimal correspondence
  between $x$ and $y$, then it follows from \eqref{eqcost2} that
  $$\E[\dtw(x', y')]\le \E[\text{cost}(C')] \le \Theta(\log
  n) \cdot \text{cost}(C) = \Theta(\log n) \cdot \dtw(x, y).$$

  By Markov's inequality, with probability at least $9/10$,
  $$\dtw(x, y) \le \dtw(x', y') \le O(\log n) \cdot \dtw(x, y).$$

  Applying Lemma \ref{thmDTWsketchspecializedtree} with $p = 9/10$
  to $\Sigma'$, we get an efficient $(9/10)^2$-accurate one-way communication
  protocol for $O(\alpha \log n)$-DTW using
  $$O\left(\frac{n}{\alpha} \log \alpha \log^2 n\right)$$
  bits. Defining $\alpha'$ as $\Theta(\alpha \log n)$, we get an
  efficient $(9/10)^2$-accurate one-way communication protocol for
  $\alpha'$-DTW using
  $$O\left(\frac{n}{\alpha'} \log \alpha' \log^3 n\right)$$ bits. This
  completes the construction of an efficient protocol for the case
  where $\alpha$ is at least $\Omega(\log n)$, and the case of $\alpha
  \in O(\log n)$ follows by simply having Alice send all of $x$ as her
  sketch.
  
  The construction of the inefficient sketch follows similarly,
  completing the proof.
\end{proof}



\subsection{Further Optimizing Upper Bounds}\label{subsecgeneralgeneral}

Recall that Theorem \ref{thmDTWsketchgeneral} provides a one-way
communication protocol for $\alpha$-DTW over $\Sigma^{\le n}$ which is
within a polylogarithmic factor of optimal. We will now show that for
several important cases of $\Sigma$, the bound from Theorem
\ref{thmDTWsketchgeneral} can be improved by roughly a logarithmic
factor. In order to do this, we first introduce the notion of an
efficiently $\sigma$-separable metric space.

\begin{defn}
Suppose we have a metric space $(\Sigma, d)$. A partition of $\Sigma$
is \emph{$\delta$-bounded} if the diameter of each part of the
partition is at most $\delta$. A probability distribution over
$\delta$-bounded partitions of $\Sigma$ is said to be
\emph{$\sigma$-separating} if for all $x,y \in \Sigma$, the
probability that $x$ and $y$ are in different parts of the partition
is at most $\sigma \cdot d(x, y) / \delta$. Finally, $\Sigma$ is said
to be \emph{efficiently $\sigma$-separable} if, for every $\delta >
0$, there exists a $\sigma$-separating probability distribution over
$\delta$-bounded partitions of $\Sigma$, and if a partition can be
selected from the $\sigma$-separating probability distribution in time
$\poly(|\Sigma|)$.
\end{defn}

The main result in this subsection is the following:
\begin{thm}[Theorem \ref{thm:overviewextended} restated]
   Let $(\Sigma, d)$ be a metric space of size and aspect ratio
   $\poly(n)$. Suppose that $(\Sigma, d)$ is efficiently
   $\sigma$-separable for some $1 \le \sigma \le O(\log n)$. Then
   there is an efficient $2/3$-accurate one-way communication protocol
   for $\alpha$-DTW$(\Sigma^{\le n})$ with space complexity
   $$O\left({\sigma n}{\alpha^{-1}}\cdot \log \alpha \cdot\log^2
   n \cdot \log \log \log n\right)$$ and an inefficient $2/3$-accurate one-way protocol with
   space complexity $$O\big({\sigma
     n}{\alpha^{-1}}\cdot\log^2 n \cdot \log \log \log n\big).$$
  \label{thmmainthmgeneralized}
\end{thm}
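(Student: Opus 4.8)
The plan is to mimic the proof of Theorem~\ref{thmDTWsketchgeneral}, but to replace the embedding into a $2$-hierarchically well-separated tree metric with a more direct argument using the $\sigma$-separating partitions, thereby avoiding the $O(\log n)$ distortion of the tree embedding. The high-level structure is unchanged: we will give (i) a protocol for the $(r,\alpha)$-gap DTW problem over $\Sigma$, and then (ii) feed it into Lemma~\ref{lemma:gap_to_approx} to obtain a protocol for $\alpha$-DTW.

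For step (i), suppose we wish to distinguish $\dtw(x,y)\le nr/\alpha$ from $\dtw(x,y) > nr$. We set $\delta = \Theta(r)$ and sample, using public randomness, a $\delta$-bounded partition $P$ of $\Sigma$ from the $\sigma$-separating distribution; both players agree (again via public randomness) on a canonical representative of each part. We then ``snap'' each letter of $x$ and $y$ to the representative of its part, obtaining strings $\tilde x$ and $\tilde y$. The two key properties, following the natural-number argument in the overview, are: (a) if $\dtw(x,y)\le nr/\alpha$, then the optimal correspondence has cost after snapping that is in expectation at most $O(\dtw(x,y))$, since a short edge (one of length $\le \delta$, say) has endpoints in different parts with probability at most $\sigma \cdot \text{length}/\delta$ and when that happens its length grows by at most $\delta$, while a long edge changes by at most a constant factor; hence with constant probability $\dtw(\tilde x,\tilde y) = O(nr/\alpha)$, and moreover since distinct snapped letters whose parts differ have distance... wait --- here is the subtlety: distinct representatives need not be far apart. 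To get $\dtw_0$ small we additionally need that non-equal letters of $\tilde x,\tilde y$ are at distance $\ge \Omega(r)$; this does \emph{not} follow from the partition alone (two parts can be adjacent). The fix, as in the tree/grid arguments, is to run the bounded-$\alpha$-DTW protocol not on $(\tilde x,\tilde y)$ directly but after checking that the snapped correspondence is sparse: if $\dtw(x,y)\le nr/\alpha$ then in the optimal correspondence at most $O(\sigma n/\alpha)$ short edges get separated, and every long edge already has $\dtw$-cost $\ge \Omega(r)$, so the number of edges of $\tilde C$ of positive cost is $O(\sigma n/\alpha) + O(n/\alpha) = O(\sigma n /\alpha)$, giving $\dtw_0(\tilde x,\tilde y) = O(\sigma n/\alpha)$. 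Property (b): if $\dtw(x,y) > nr$, then because snapping changes each of the $\le 2n$ edges by at most $\delta = \Theta(r)$, we have $\dtw(\tilde x,\tilde y) > nr/2$, exactly as before. So the $O(\sigma\alpha)$-bounded DTW protocol (Proposition~\ref{propboundedKDTW}, which works for arbitrary distances on the alphabet, applied with parameter $\Theta(\alpha/\sigma)$) distinguishes the two cases and yields $s_1(r,\alpha,\delta) = O\!\left(\frac{\sigma n}{\alpha}\log\alpha\log n + \frac{\sigma n}{\alpha}\log\frac1\delta\right)$ bits.

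For step (ii), we plug the above $s_1$ together with $s_2(\alpha,\delta)$ from Proposition~\ref{propboundedKDTW} into Lemma~\ref{lemma:gap_to_approx}. The sum over $i$ ranges over $O(\log(n\kappa\alpha)) = O(\log n)$ values, each contributing $O\!\left(\frac{\sigma n}{\alpha}\log\alpha\log n\right)$ for the efficient protocol (after absorbing the $\log\frac1\delta = O(\log\log n)$ term into the $\log n$), giving the claimed $O\!\left(\frac{\sigma n}{\alpha}\log\alpha\log^2 n\right)$ up to the stated $\log\log\log n$ factor; the inefficient version drops the $\log\alpha$. The $\log\log\log n$ comes from a minor technical point about the failure probability needed per invocation once we account for the $O(\log n)$ parallel copies and the constant-probability snapping event: rather than amplify each gap protocol to error $1/\poly(n)$ (which would cost an extra $\log n$, not $\log\log\log n$), we only need each of the $O(\log n)$ copies to succeed with probability $1 - 1/\Theta(\log^2 n)$ or so, and the snapping concentration needs $O(\log\log n)$-fold repetition with a majority/OR combiner --- the triple log is the logarithm of the number of such repetitions times the appropriate bookkeeping, exactly the ``minor technical subtleties'' the overview alludes to. As before, the case $\alpha = O(\log n)$ (or more precisely when the bound exceeds $n$) is handled trivially by Alice sending all of $x$.

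The main obstacle is the $\dtw_0$-sparsity guarantee in the ``yes'' case: unlike the grid (where distinct grid points are $\Omega(\alpha)$ apart by construction) or the HST (where the $r$-simplification makes distinct simplified letters $\ge r/4$ apart), a $\sigma$-separating partition gives no lower bound on the distance between representatives of different parts. One must therefore argue sparsity of the \emph{separated short edges} via the expectation bound $\sigma\,\dtw(x,y)/\delta$ plus a concentration/repetition argument, rather than a deterministic distance lower bound --- and carefully track how the resulting failure probability interacts with the $O(\log n)$-fold parallel repetition in Lemma~\ref{lemma:gap_to_approx}, which is precisely where the extra $\log\log\log n$ factor is incurred.
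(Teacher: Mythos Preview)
Your proposal tracks the paper's proof closely in step (i): the paper's Lemma~\ref{lemma:r_a_gap_general} does exactly what you describe, snapping to representatives of a random $\delta$-bounded partition with $\delta = \Theta(r)$, bounding the \emph{expected} number of separated edges in the optimal correspondence by $O(\sigma\,\dtw(x,y)/\delta)$, applying Markov, and then invoking bounded $\Theta(\alpha/\sigma)$-DTW. Your diagnosis of the ``obstacle'' (no distance lower bound between distinct representatives, so one must bound $\dtw_0$ via the separated-edge count rather than via a per-edge distance bound) is exactly right.

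There is one genuine gap, and it is precisely where the $\log\log\log n$ lives. Two points:
\begin{itemize}
\item Your $s_1(r,\alpha,\delta)$ has the $\log(1/\delta)$ term additive; it should be multiplicative. The base gap protocol succeeds only with constant probability (because of the Markov step on the random partition), so amplifying to $1-\delta$ costs a full factor of $\log(1/\delta)$ on the whole sketch, giving $s_1 = O\bigl(\sigma n\alpha^{-1}\log\alpha\,\log n\,\log\delta^{-1}\bigr)$.
\item If you now plug this into Lemma~\ref{lemma:gap_to_approx} as you propose, each of the $O(\log n)$ gap instances needs accuracy $1 - O(1/\log n)$, so $\log\delta^{-1} = \Theta(\log\log n)$, and the total is $O\bigl(\sigma n\alpha^{-1}\log\alpha\,\log^2 n\,\log\log n\bigr)$ --- a $\log\log n$, not a $\log\log\log n$. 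Your attribution of the triple log to ``$O(\log\log n)$-fold snapping repetition'' does not yield the stated bound.
\end{itemize}
The missing idea is a strengthened version of Lemma~\ref{lemma:gap_to_approx} (the paper's Lemma~\ref{lemma:gap_to_approx2}): Alice still sends all $O(\log n)$ gap sketches, but Bob locates the threshold $i$ with $F_{i-1}=1$ and $F_i=0$ by \emph{binary search}, evaluating only $O(\log\log n)$ of them. The union bound is then over $O(\log\log n)$ evaluations, so each gap protocol only needs accuracy $1 - O(1/\log\log n)$, whence $\log\delta^{-1} = O(\log\log\log n)$, and the claimed bound follows. Without this binary-search refinement your argument gives a correct protocol but with a $\log\log n$ in place of the $\log\log\log n$.
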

\begin{proof}
  See Section \ref{secappendix2} at the end of the paper.
\end{proof}

The advantage of Theorem \ref{thmmainthmgeneralized} is that many
metric spaces are known to be $\sigma$-separable for small
$\sigma$. When this is the case, Theorem \ref{thmmainthmgeneralized}
can be used in place of Theorem \ref{thmDTWsketchgeneral} to replace a
factor of $\log n$ in the space complexity with a factor of $\sigma \cdot \log \log \log n$.

Note that, in general, any metric space $M$ of polynomial size is efficiently $O(\log
|M|)$-separable \cite{bartal1996probabilistic}, meaning that Theorem
\ref{thmmainthmgeneralized} implies a general bound within a factor of
$\log \log \log n$ of Theorem \ref{thmDTWsketchgeneral}.

As an important special case, for $(\Sigma, d) \subseteq
(\mathbb{R}^d, \ell_p)$ (i.e. points in $d$-dimensional space equipped
with $\ell_p$ norm), it is known that $(\Sigma, d)$ is
$O(d^{1/p})$-separable for $p \in [1,2]$ and is $O(\sqrt{d \cdot
  \min(p, \log d)})$-separable for $p \ge 2$
\cite{charikar1998approximating, naor2017probabilistic}. Thus we get
the following corollary of Theorem \ref{thmmainthmgeneralized}.

\begin{cor}\label{cor:414}
  Let $(\Sigma, d)$ be a metric space of size and aspect ratio
  polynomial in $n$. Moreover, suppose that $(\Sigma, d) \subseteq
  (\mathbb{R}^d, \ell_p)$.

  If $p \in [1, 2]$, then there is an efficient $2/3$-accurate one-way
  communication protocol for $\alpha$-DTW$(\Sigma^{\le n})$ with space
  complexity $O\left({d^{1/p} n}{\alpha^{-1}}\cdot \log \alpha
  \cdot\log^2 n \cdot \log \log \log n\right)$ and an inefficient $2/3$-accurate one-way
  protocol with space complexity $O\big({d^{1/p}
    n}{\alpha^{-1}}\cdot\log^2 n \cdot \log \log \log n\big)$.

  If $p > 2$, then there is an efficient $2/3$-accurate one-way
  communication protocol for $\alpha$-DTW$(\Sigma^{\le n})$ with space
  complexity $O\left({\sqrt{d \cdot \min(p, \log d)}
    n}{\alpha^{-1}}\cdot \log \alpha \cdot\log^2 n \cdot \log \log \log n\right)$ and an
  inefficient $2/3$-accurate one-way protocol with space complexity
  $O\big({\sqrt{d \cdot \min(p, \log d)}
    n}{\alpha^{-1}}\cdot \log^2 n \cdot \log \log \log n\big)$.
\end{cor}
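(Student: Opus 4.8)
The plan is to derive Corollary~\ref{cor:414} as an immediate consequence of Theorem~\ref{thmmainthmgeneralized} by supplying, in each of the two regimes $p \in [1,2]$ and $p > 2$, a bound on the separation parameter $\sigma$ for which $(\mathbb{R}^d, \ell_p)$ (and hence any subset $\Sigma$) is efficiently $\sigma$-separable. Concretely, I would invoke the known random-partition constructions for $\ell_p$ spaces: by \cite{charikar1998approximating, naor2017probabilistic}, for every $\delta > 0$ there is an efficiently sampleable distribution over $\delta$-bounded partitions of $(\mathbb{R}^d, \ell_p)$ that separates any two points $x, y$ with probability at most $O(d^{1/p}) \cdot \|x - y\|_p / \delta$ when $p \in [1,2]$, and at most $O(\sqrt{d \cdot \min(p, \log d)}) \cdot \|x-y\|_p/\delta$ when $p > 2$. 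Restricting such a partition to the point set $\Sigma$ yields a $\delta$-bounded partition of $\Sigma$ with the same separation guarantee, so $\Sigma$ inherits efficient $O(d^{1/p})$-separability (resp.\ $O(\sqrt{d\cdot\min(p,\log d)})$-separability).

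The second step is a sanity check on the hypothesis $1 \le \sigma \le O(\log n)$ required by Theorem~\ref{thmmainthmgeneralized}. Since $\Sigma$ has polynomial size and aspect ratio, and $\sigma = O(\log |\Sigma|) = O(\log n)$ always holds for any polynomial-size metric by \cite{bartal1996probabilistic}, the dimension-dependent bounds above can simply be capped at $O(\log n)$; so in either regime we may take $\sigma = O(\min(d^{1/p}, \log n))$ or $\sigma = O(\min(\sqrt{d\cdot\min(p,\log d)}, \log n))$, which in particular satisfies the hypothesis. (One could also just absorb this into the stated bounds, noting $d^{1/p}$ and $\sqrt{d\cdot\min(p,\log d)}$ are the quantities of interest; writing the min is only needed to formally meet the theorem's precondition.)

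Finally, I would plug these values of $\sigma$ directly into the space-complexity expressions of Theorem~\ref{thmmainthmgeneralized}: replacing the generic $\sigma$ by $d^{1/p}$ gives the efficient bound $O(d^{1/p} n \alpha^{-1} \log\alpha \log^2 n \log\log\log n)$ and the inefficient bound $O(d^{1/p} n \alpha^{-1} \log^2 n \log\log\log n)$ for $p \in [1,2]$; and replacing $\sigma$ by $\sqrt{d \cdot \min(p,\log d)}$ gives the corresponding bounds for $p > 2$. This completes the proof.

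I do not anticipate a genuine obstacle here—the corollary is purely a substitution of known separability bounds into the main theorem. The only point requiring a sentence of care is the reduction from partitions of the ambient space $(\mathbb{R}^d,\ell_p)$ to partitions of the finite subset $\Sigma$, and the observation that "efficiently'' sampleable for the relevant (polynomial-size) instance is preserved; both are routine.
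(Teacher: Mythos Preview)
Your proposal is correct and matches the paper's approach exactly: the paper derives the corollary directly from Theorem~\ref{thmmainthmgeneralized} by citing the known $\ell_p$ separability bounds from \cite{charikar1998approximating, naor2017probabilistic} and substituting them for $\sigma$. Your additional remarks about restricting the ambient partition to $\Sigma$ and verifying the $\sigma \le O(\log n)$ hypothesis are more careful than what the paper spells out, but the argument is otherwise identical.
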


Additionally, if $(\Sigma, d)$ is a metric space of polynomial size
with doubling constant $\lambda$ (recall the doubling constant of a metric
$(\Sigma, d)$ is $\lambda$ if for all $x \in \Sigma$ and $r > 0$, the
ball $B(x, 2r)$ can be covered by $\lambda$ balls of radius $r$), 
then it will be efficiently $O(\log
\lambda)$-separable \cite{neimanstochastic}, yielding the following
corollary of Theorem \ref{thmmainthmgeneralized}.

\begin{cor}\label{cor:doubling}
  Let $(\Sigma, d)$ be a metric space of size and aspect ratio
  polynomial in $n$, and with doubling constant $\lambda$.

  There is an efficient $2/3$-accurate one-way communication protocol
  for $\alpha$-DTW$(\Sigma^{\le n})$ with space complexity
  $O\left({\log \lambda \cdot n}{\alpha^{-1}}\cdot \log \alpha
  \cdot\log^2 n \cdot \log \log \log n\right)$ and an inefficient
  $2/3$-accurate one-way protocol with space complexity $O\big({\log
    \lambda \cdot n}{\alpha^{-1}}\cdot \log^2 n \cdot \log \log \log
  n\big)$.
\end{cor}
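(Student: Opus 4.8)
The plan is to obtain Corollary~\ref{cor:doubling} as an immediate consequence of Theorem~\ref{thmmainthmgeneralized}, once we verify that a metric space of polynomial size with doubling constant $\lambda$ is efficiently $O(\log\lambda)$-separable. So the first and only substantive step is to exhibit, for every $\delta > 0$, a $\sigma$-separating probability distribution over $\delta$-bounded partitions of $(\Sigma,d)$ with $\sigma = O(\log\lambda)$ that is moreover sampleable in time $\poly(|\Sigma|)$. I would invoke the standard stochastic (padded) decomposition for doubling metrics of \cite{neimanstochastic}: the separation probability for a pair $x,y$ can be made $O(\log\lambda)\cdot d(x,y)/\delta$, with the dependence on the doubling dimension $\log_2\lambda$ rather than on $\log|\Sigma|$. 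Concretely, one can use the familiar ball-partitioning scheme — sample a uniformly random permutation of the points of $\Sigma$ together with a radius $\rho$ drawn from a suitable range around $\delta/4$, and assign each point to the part centered at the first point (in permutation order) within distance $\rho$. Each such part has diameter at most $\delta$, and a net/volume-counting argument using the doubling property bounds the number of centers that can ``separate'' a fixed pair $x,y$, yielding the $O(\log\lambda)$ factor.

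Next I would check efficiency: the ball-partitioning distribution above is specified by a permutation of $\Sigma$ plus one real parameter, and each point's part is determined by a nearest-center scan, all computable in $\poly(|\Sigma|)$ time. Hence $(\Sigma,d)$ is efficiently $O(\log\lambda)$-separable in the precise sense required as input to Theorem~\ref{thmmainthmgeneralized}.

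Finally I would substitute $\sigma = \max(1, O(\log\lambda))$ into Theorem~\ref{thmmainthmgeneralized}. The hypothesis $1 \le \sigma \le O(\log n)$ holds because $|\Sigma| = \poly(n)$ forces $\lambda \le |\Sigma| = \poly(n)$, so $\log\lambda = O(\log n)$; the trivial case $\lambda = 1$ is absorbed into the constant by using $\max(1,\log\lambda)$. Reading off the two bounds in Theorem~\ref{thmmainthmgeneralized} with this value of $\sigma$ gives exactly the claimed efficient bound $O(\log\lambda\cdot n\alpha^{-1}\cdot\log\alpha\cdot\log^2 n\cdot\log\log\log n)$ and inefficient bound $O(\log\lambda\cdot n\alpha^{-1}\cdot\log^2 n\cdot\log\log\log n)$. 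The only step with any content is the first one, establishing efficiently sampleable $O(\log\lambda)$-separability; since this is a classical fact about doubling metrics we merely need to instantiate it, and the rest of the argument is a direct plug-in to Theorem~\ref{thmmainthmgeneralized}.
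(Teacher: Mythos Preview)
Your proposal is correct and matches the paper's approach: the paper derives Corollary~\ref{cor:doubling} by invoking the fact that a doubling metric is efficiently $O(\log\lambda)$-separable (citing \cite{neimanstochastic}) and then applying Theorem~\ref{thmmainthmgeneralized} directly. Your write-up actually supplies more detail than the paper does, spelling out the ball-partitioning construction and the check that $\sigma\le O(\log n)$, but the logical structure is identical.
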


\section{Lower Bounds}\label{sec:lb}

In this section, we present lower bounds for the one-way communication
complexity of $\alpha$-DTW. In Subsection \ref{subsecmainlow}, we show
that as long as $|\Sigma| \ge 3$, then regardless of the distance
function $d$, the one-way communication complexity of $\alpha$-DTW is
$\Omega(n / \alpha)$. Somewhat surprisingly, this result holds even
when the distance function $d$ does not satisfy the triangle
inequality (i.e., $(\Sigma, d)$ need not be a metric space).

When $\Sigma$ is a metric space of polynomial size and aspect ratio,
the lower bound of $\Omega(n / \alpha)$ is within a polylogarithmic
factor of tight. In Subsection \ref{secdtw0low}, we show that for the
special case of $\dtw$ over generalized Hamming space, the lower bound
can be improved to $\Theta(n/\alpha \log n)$ bits for $\mathrm{CC}_{1
  - 1/n}(\alpha\text{-}\dtw(\Sigma^{\le n}))$, which is within a constant
factor of tight.

Finally, in Subsection \ref{subseclinearsketch}, we turn our attention
to the more restrictive model of linear sketches. We show that no
linear sketch can solve $\alpha$-DTW over $\{0, 1, 2\}^{n}$ with fewer
than $\Omega(n)$ bits.

\subsection{Lower Bound Over Arbitrary Alphabets}\label{subsecmainlow}

For ease of exposition, we begin by considering the case of $\Sigma =
\{0, 1, 2\}$.
\begin{theorem}
  Consider $1 \le \alpha \le n$. For
  $\Sigma = \{0, 1, 2\}$,
  $$\text{CC}_{0.1}\big[\dtep_{1}^{\alpha}(\Sigma^{\le
      n})\big] = \Omega(n/\alpha).$$
        \label{thmlowerint}
\end{theorem}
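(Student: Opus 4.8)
The plan is to reduce from the standard two-party Index problem, where Alice holds $s \in \{0,1\}^t$, Bob holds an index $i \in [t]$, and Bob must output $s_i$ with probability at least $0.9$; the randomized one-way communication complexity of this problem is $\Omega(t)$. We instantiate $t = \Theta(n/\alpha)$, which forces any $0.1$-error one-way protocol for $\dtep_1^\alpha(\{0,1,2\}^{\le n})$ to use $\Omega(n/\alpha)$ bits. Concretely, Alice encodes each bit $s_j$ as a gadget string over $\{0,1,2\}$: if $s_j = 1$, use $Z(1) = 0^\alpha 1^\alpha 2^\alpha$ (length $3\alpha$), and if $s_j = 0$, use $Z(0) = 0^\alpha\, 1\, 2^\alpha$ (length $2\alpha+1$). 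She concatenates to get $x = Z(s_1) Z(s_2) \cdots Z(s_t)$, whose length is at most $3\alpha t = \Theta(n)$, so $x \in \Sigma^{\le n}$ after choosing the constant in $t$ appropriately. Bob, given $i$, forms $y = (012)^{i-1} (02) (012)^{t-i}$, a string of length $3t - 1 = \Theta(n)$, organized into $t$ \emph{blocks}, the $i$-th being the special block $02$ and all others being $012$.

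The core of the argument is the two-sided analysis of $\dtw(x,y)$ as a function of $s_i$. First I would show that if $s_i = 0$, then $\dtw(x,y) \le 1$: match every block $(012)$ of $y$ to the corresponding gadget $Z(s_j)$ — for $s_j = 1$ this is a perfect zero-cost alignment of runs $0^\alpha, 1^\alpha, 2^\alpha$ against $0,1,2$ (expanding $y$), and for $s_j = 0$ the gadget $0^\alpha 1 2^\alpha$ aligns against $012$ at zero cost too. For the $i$-th block, the gadget is $Z(0) = 0^\alpha 1 2^\alpha$ and the block is $02$; we match the single $1$ to the adjacent $0$ (or $2$), incurring cost exactly $1$, and everything else at cost $0$. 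So $\dtw(x,y) \le 1$, and the DTEP instance with threshold $r = 1$ should output $0$. The harder direction is to show that if $s_i = 1$, then $\dtw(x,y) > \alpha$, so the instance should output $1$; combined, distinguishing these two cases solves Index, giving the lower bound.

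The main obstacle is the lower-bound argument when $s_i = 1$. Here the gadget $Z(1) = 0^\alpha 1^\alpha 2^\alpha$ contains a full run of $\alpha$ ones, but the $i$-th block of $y$ is $02$ and contains no $1$ at all. If fewer than, say, $\alpha/2$ of these ones are matched within cost $0$ (impossible, since there is no $1$ available nearby) then we already pay $\Omega(\alpha)$; so suppose instead the run of ones is matched almost entirely to $1$'s in other blocks $j \ne i$. The key structural point is a \emph{counting/monotonicity} argument: a DTW correspondence induces a monotone (non-crossing) matching between positions of $x$ and positions of $y$, so the ones of $Z(s_i)$ must be routed to blocks that are all $> i$ or all $< i$ (up to a constant fraction, else we pay $\Omega(\alpha)$ at the boundary). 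Say most go to blocks $> i$. Then the $2^\alpha$ run of $Z(s_i)$ is forced (by monotonicity) to also land in blocks $> i$, which in turn pushes the $0^\alpha$ run of $Z(s_{i+1})$ to blocks strictly beyond that, and so on: the $t - i$ gadgets $Z(s_i), \dots, Z(s_t)$ must be matched (essentially) into the $t - i - 1$ trailing blocks of $y$ after block $i$. By a pigeonhole/potential argument, some run of length $\alpha$ cannot be accommodated at its "correct" symbol and must be matched to a symbol that disagrees for $\Omega(\alpha)$ of its letters, forcing $\dtw(x,y) = \Omega(\alpha)$. Making this rigorous requires carefully setting up the potential (e.g., tracking, for each prefix of $y$, how far the matching has advanced into $x$ relative to the "ideal" block-to-block alignment) and showing any deviation accumulates cost; rescaling $\alpha$ by a constant then gives $\dtw(x,y) > \alpha$ exactly. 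I expect this monotone-routing-plus-pigeonhole step to be where essentially all the work lies; the reduction bookkeeping and the $s_i = 0$ case are routine.
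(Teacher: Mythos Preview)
Your reduction and gadget construction are identical to the paper's, and your treatment of the $s_i = 0$ case is the same. For the $s_i = 1$ case, your ripple-plus-pigeonhole plan is in fact the informal argument the paper gives in its technical overview, and it can be made to work, but the paper's actual proof is much shorter and avoids any potential function. The trick is to anchor on the $0$'s rather than follow the $1$'s: if a single $0$ in $y$ were matched to $0$'s from two distinct $0^\alpha$-runs of $x$, it would also have to be matched to the entire intervening $2^\alpha$-run, costing at least $\alpha$; conversely, each $0^\alpha$-run in $x$ must touch at least one $0$ in $y$, else that run alone costs at least $\alpha$. Since $x$ has exactly $t$ runs of $0$'s and $y$ has exactly $t$ characters equal to $0$, monotonicity forces the $j$-th $0^\alpha$-run in $x$ to be matched to the $j$-th $0$ in $y$ for every $j$. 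This pins the $1^\alpha$-run of $Z(s_i)$ strictly between the $i$-th and $(i{+}1)$-th $0$ of $y$, i.e., inside the block $02$, where no $1$ is available, so the cost is at least $\alpha$ immediately. This replaces your entire ripple/potential argument with a two-line counting observation; you may want to adopt it.
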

\begin{proof}
Without loss of generality, we assume $\alpha$ and $n/\alpha$ are
positive integers. Let $(\sk, F)$ be a $0.9$-accurate one-way
communication protocol for $\alpha/2$-DTW over $\{0, 1, 2\}^{\le 3n}$.\footnote{Note that we could just as well use a protocol for $c\alpha$-DTW for any $c < 1$. We consider $\alpha / 2$-DTW for convenience.}

We prove the lower bound by reduction from INDEX$_{n/\alpha}$. We
begin with the description of the protocol. Suppose Alice has a length
$k=n/\alpha$ binary string $x=(x_1, x_2, \ldots, x_k)$, and Bob has an
index $i\in[k]$. The INDEX$_{n / \alpha}$-problem requires Bob to
recover $x$'s $i$-th letter $x_i$. It is well known that the
$.9$-accurate one-way communication complexity of INDEX$_{n/\alpha}$
is $\Theta(n/\alpha)$ \cite{knr99}. We now present a communication
protocol for INDEX$_{n / \alpha}$ in which Alice's message is
constructed using the sketch $\sk$ for $\dtw$ over $\{0, 1,
2\}^{\le 3n}$, thereby establishing that the bit-complexity of $(\sk,
F)$ is at least $\Omega(n / \alpha)$.

Define
\begin{align*}
Z(1) &= \big(\underbrace{0, 0, \ldots, 0}_{\alpha~\text{times}}, \underbrace{1, 1, \ldots, 1}_{\alpha~\text{times}}, \underbrace{2, 2, \ldots, 2}_{\alpha~\text{times}}\big)\quad\text{and}\\
Z(0) &= \big(\underbrace{0, 0, \ldots, 0}_{\alpha~\text{times}}, 1, \underbrace{2, 2, \ldots, 2}_{\alpha~\text{times}}\big).
\end{align*}
Alice constructs the string $A(x)$ defined by,
\[
A(x) = Z(x_1)\circ Z(x_2) \ldots \circ Z(x_k),
\]
and sends Bob the sketch $\sk(A(x))$. Bob then constructs a string
$B(i)$ defined by,
\[
B(i) = \underbrace{(0,1,2) \circ  \cdots \circ (0,1,2)}_{(i-1)~\text{times}} \circ (0,2)\circ\underbrace{(0,1,2) \circ \cdots \circ (0,1,2)}_{(k-i)~\text{times}}.
\]
We will prove that if $x_i = 0$, then $\dtw(A(x), B(i)) \le 1$ and
that if $x_i = 1$, then $\dtw(A(x), B(i)) \ge \alpha$. Thus Bob will
be able to use Alice's sketch to correctly determine $x_i$ with
probability at least $0.9$, completing the proof.

\paragraph{Case 1: }First consider the case where $x_i = 0$. We wish to show that $\dtw(A(x), B(i)) \le 1$. Consider a correspondence
$(\wb{a}, \wb{b})$ between $A(x)$ and $B(i)$ that maps each of Alice's
$Z(x_{j})$'s to the $j$-th $(0,1,2)$ in Bob's string for all $j\neq
i$, and that maps Alice's $Z(x_{i}, i)$ to Bob's $(0,2)$. In
particular, this can be accomplished by defining $\wb{a} = a$ and
\[
\wb{b} = \underbrace{Z(1)\circ Z(1) \circ Z(1)}_{(i-1)~\text{times}}\circ  W(x_i)\circ\underbrace{Z(1)\circ Z(1) \circ Z(1)}_{(k-i)~\text{times}},
\]
where $W(x_i)$ is the expansion of $(0,2)$ to $0^{\alpha+1} \circ
2^{\alpha}$. Because $\wb{a}$ and $\wb{b}$ have only a single Hamming
difference (the $(\alpha + 1)$-th $0$ in $W(x_i)$ is matched with a
zero), we get that
\[
\dtw(A(x), B(i)) \le \|\wb{a} - \wb{b}\|_1 =  |0 -1| = 1,
\]
as desired.

\paragraph{Case 2: }Next consider the case of $x_i = 1$. We wish to show that any correspondence $(\wb{a}, \wb{b})$ of $(A(x), B(i))$ will cost at least $\alpha$. Suppose for contradiction that there exists a correspondence
$(\wb{a}, \wb{b})$ such that $\|\wb{a}- \wb{b}\|_1 < \alpha$. In
order so that $\|\wb{a}- \wb{b}\|_1 < \alpha$, it must be that each
of the $0$s in $B(i)$ are matched by the correspondence to $0$s from
at most one of $A(x)$'s runs. Otherwise, the $0$ in $B(i)$ would have
to also be matched with at least $\alpha$ $2$s, a
contradiction. Moreover, each of the $\alpha$-letter runs of $0$s in
Alice's string $A(x)$ must be matched with at least one $0$ from Bob's
string $B(i)$, since otherwise we would again necessarily have
$\|\wb{a}- \wb{b}\|_1 \ge \alpha$. Because each $0$ in Bob's string
matches with at most one run of $0$s in Alice's string, and each run
of $0$s in Alice's string is matched with at least one $0$ in Bob's
string, it follows that for each $j$ the $j$-th run of $0$s in Alice's
string matches with the $j$-th $0$ in Bob's string. However, this
prevents the $1$s from Alice's $Z(x_i)$ from being matched to any
$1$s from Bob's $B(i)$, forcing $\dtw(A(x), B(i)) \ge \alpha$, a contradiction.

\end{proof}

As we shall see in a moment, the proof of Theorem \ref{thmlowerint}
can be used without modification to prove a far more general
theorem. We also remark that for the special case of $\Sigma \subseteq
\mathbb{Z}$, a slightly stronger version of Theorem \ref{thmlowerint}
can be proven. (See Section \ref{secappendixlower} at the end of the paper.)

Notice that the preceding lower bound does not rely on the properties of
the numbers $0$, $1$, and $2$ beyond the fact that $d(0, 1)$ is the
smallest pairwise distance between the points. Consequently, the bound
generalizes to any three-letter space $\Sigma = \{a, b, c\}$ armed
with a two-point function $d:\Sigma\times \Sigma \rightarrow \RR_{+}$,
not necessarily satisfying the triangle inequality.  Without loss of
generality, we may assume
\[
r:=d(a, b)\le d(b, c) \le d(a, c)
\]
for some constant $r > 0$. The following theorem is then implied by the
same proof as Theorem \ref{thmlowerint}, except with $0, 1, 2$
replaced with $a, b, c$.
\begin{theorem}[Theorem \ref{thm:overviewgenerallower} restated]
	Let $\Sigma = \{a, b, c\}$ be three letters with a two-point
        function $d:\Sigma\times \Sigma \rightarrow \RR_{+}$. Consider
        $1 \le \alpha \le n$.
        Then CC$_{0.1}\big[\dtep_{r}^{\alpha}(\Sigma^{\le n})\big] =
        \Omega(n/\alpha)$.
        \label{thmmainlow}
\end{theorem}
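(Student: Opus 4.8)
The statement is in essence a corollary of Theorem~\ref{thmlowerint}: the plan is to rerun that reduction from $\mathrm{INDEX}$ verbatim, with the alphabet $\{0,1,2\}$ replaced by $\{a,b,c\}$, and then to observe that the one nontrivial step --- the counting (``ripple'') argument of Case~2 --- used the symbols $0,1,2$ only through the fact that $d(0,1)$ is the smallest pairwise distance, which is exactly the normalization $r=d(a,b)\le d(b,c)\le d(a,c)$ that we have assumed. Concretely, I would reduce from $\mathrm{INDEX}_k$ with $k=\Theta(n/\alpha)$, whose $0.9$-accurate one-way complexity is $\Theta(n/\alpha)$ \cite{knr99}. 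Given Alice's string $s\in\{0,1\}^k$ put
\[ Z(1)=a^{\alpha}b^{\alpha}c^{\alpha},\qquad Z(0)=a^{\alpha}\,b\,c^{\alpha},\qquad A(s)=Z(s_1)\circ\cdots\circ Z(s_k), \]
and given Bob's index $i\in[k]$ put $B(i)=(abc)^{i-1}(ac)(abc)^{k-i}$; rescaling $k$ by a constant keeps both strings inside $\Sigma^{\le n}$.

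Next I would verify the two cases. If $s_i=0$: the correspondence that matches $Z(s_j)$ onto the $j$-th block $(abc)$ for $j\ne i$ and matches $Z(0)=a^{\alpha}bc^{\alpha}$ onto $(ac)$ expanded to $a^{\alpha+1}c^{\alpha}$ has all edges between equal letters except the single edge pairing the lone $b$ with an $a$, so $\dtw(A(s),B(i))\le d(a,b)=r$. If $s_i=1$: assume some correspondence costs $<\alpha r$. Any edge between an $a$ of $A(s)$ and a non-$a$ letter of $B(i)$ (necessarily a $b$ or a $c$) costs at least $\min(d(a,b),d(a,c))=r$; hence (i) no single $a$ of $B(i)$ is matched to $a$'s from two different runs of $A(s)$, since by monotonicity of the correspondence that $a$ would also be matched to the $\ge\alpha$ non-$a$ letters lying between those runs, costing $\ge\alpha r$; and (ii) each run of $\alpha$ $a$'s in $A(s)$ touches at least one $a$ of $B(i)$, else that run alone costs $\ge\alpha r$. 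Since $A(s)$ has $k$ runs of $a$'s and $B(i)$ has exactly $k$ letters $a$, (i), (ii) and monotonicity force the $j$-th $a$-run of $A(s)$ to be matched precisely to the $j$-th $a$ of $B(i)$; but then the run of $\alpha$ $b$'s inside $Z(s_i)$ is confined to match only copies of the $i$-th $a$, of the $(i{+}1)$-st $a$, or of the single $c$ between them in $B(i)$, each costing at least $\min(d(a,b),d(b,c))=r$, for a total of $\ge\alpha r$ --- contradicting cost $<\alpha r$. Hence $\dtw(A(s),B(i))\ge\alpha r$.

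Finally I would compose: a $0.9$-accurate one-way protocol for $\dtep_{r}^{\alpha}(\Sigma^{\le n})$ distinguishes $\dtw\le r$ from $\dtw\ge\alpha r$, hence (after post-composition with $s\mapsto A(s)$, $i\mapsto B(i)$) yields an $\mathrm{INDEX}_k$ protocol of the same bit length and success probability, so its bit complexity is $\Omega(k)=\Omega(n/\alpha)$. As in the footnote to Theorem~\ref{thmlowerint}, the \emph{strict} gap needed so that only one $\dtep$ answer is valid in each case is obtained by running the construction with parameter $2\alpha$ in place of $\alpha$, which costs only a constant factor in $k$. I do not anticipate a genuine obstacle here: all the substance is in the Case~2 counting argument, and since that argument is insensitive to the particular points $a,b,c$ beyond $r=d(a,b)=\min_{x\ne y}d(x,y)$ --- and in particular never invokes the triangle inequality --- it transfers without change, the remainder being bookkeeping about string lengths.
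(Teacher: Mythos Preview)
Your proposal is correct and follows exactly the approach the paper takes: the paper explicitly states that Theorem~\ref{thmmainlow} ``is then implied by the same proof as Theorem~\ref{thmlowerint}, except with $0,1,2$ replaced with $a,b,c$,'' and you have carried this out faithfully, including the observation that the Case~2 argument uses only the normalization $r=d(a,b)=\min_{x\neq y}d(x,y)$ and never the triangle inequality. Your handling of the strict-gap issue via doubling $\alpha$ matches the paper's footnoted remark as well.
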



\subsection{Lower Bound For of DTW$_0$}\label{secdtw0low}
The following theorem establishes a tight bound for the one-way
communication complexity of $\dtw$ over generalized Hamming space.
\begin{thm}[Theorem \ref{thmhammingfirst} restated]
	Consider $1 \le \alpha \le n$, and consider the generalized
        Hamming distance over a point-set $\Sigma$ with $\Sigma$ of
        polynomial size $n^{1+\Omega(1)}$.  For $p \ge 1 -
        1/|\Sigma|^{-1}$, the $p$-accurate one-way communication
        complexity of $\alpha$-DTW$(\Sigma^{\le n})$ is
        $\Theta[n\alpha^{-1}\cdot\log n]$.
	\label{thmgenhamcomplexity}
\end{thm}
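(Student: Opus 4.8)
The statement has two halves --- an $O(n\alpha^{-1}\log n)$ protocol and a matching $\Omega(n\alpha^{-1}\log n)$ lower bound --- and I would prove them independently. For the upper bound, generalized Hamming distance over a point-set of polynomial size is exactly a metric space of polynomial size in which every pairwise distance lies in $\{0,1\}$, so Proposition \ref{propprotocolhamming} applies verbatim. Running its inefficient (hashing-based) variant with error parameter $\delta = 1/|\Sigma|$ yields a $(1-1/|\Sigma|)$-accurate one-way protocol for $\alpha$-DTW$(\Sigma^{\le n})$ using $O(n\alpha^{-1}\log n + \log\delta^{-1}) = O(n\alpha^{-1}\log n + \log|\Sigma|)$ bits; since $|\Sigma| = n^{1+\Theta(1)}$ and $n/\alpha \ge 1$, the $\log|\Sigma|$ term is absorbed and the bound is $O(n\alpha^{-1}\log n)$. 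The same argument handles any accuracy $p = 1 - n^{-O(1)}$ by taking $\delta = 1-p$.

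For the lower bound I would reduce from the one-way communication complexity of indexing over a large alphabet \cite{jw13}: Alice holds $s \in \mathcal{U}^r$, Bob holds $(a,j) \in \mathcal{U}\times[r]$, and Bob must decide whether $s_j = a$ with probability at least $1-1/|\mathcal{U}|$; this requires $\Omega(r\log|\mathcal{U}|)$ bits. Set $r = n/\alpha$ and let $|\mathcal{U}|$ be the largest integer with $|\mathcal{U}|\cdot r \le |\Sigma|$; since $|\Sigma| = n^{1+\Omega(1)}$ and $\alpha\le n$, this gives $|\mathcal{U}| = n^{\Omega(1)}$, hence $\log|\mathcal{U}| = \Omega(\log n)$, and also $|\mathcal{U}| \le |\Sigma|$, so the accuracy $1-1/|\Sigma|$ that we are given dominates the $1-1/|\mathcal{U}|$ needed by the indexing bound. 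Embed $\mathcal{U}\times[r]$ injectively into $\Sigma$ (legal, and isometric onto its image under generalized Hamming distance), map Alice's input to $x = (s_1,1),(s_2,2),\ldots,(s_{n/\alpha},n/\alpha)$ and Bob's input to $y = (a,j),(a,j),\ldots,(a,j)$ of length $n$; both strings lie in $\Sigma^{\le n}$.

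Then I would verify the two cases. If $s_j = a$, the character $(a,j)$ occurs in $x$ only at position $j$; extending that length-one run to length $n - n/\alpha + 1$ and aligning it against $y$ matches it at cost $0$, while each of the remaining $n/\alpha - 1$ (pairwise distinct, all $\ne (a,j)$) characters of $x$ contributes exactly $1$, so $\dtw(x,y) \le n/\alpha - 1$; conversely, every expansion of $x$ still contains each of those $n/\alpha-1$ characters at least once and every symbol of every expansion of $y$ equals $(a,j)$, so $\dtw(x,y) \ge n/\alpha - 1$, hence equality. If $s_j \ne a$, the only symbol of $x$ with second coordinate $j$ is $(s_j,j) \ne (a,j)$, so $(a,j)\notin x$ and every matched pair in any correspondence costs $1$; since a correspondence has at least $n$ edges, $\dtw(x,y) \ge n$. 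An $\alpha$-approximation $t$ to $\dtw(x,y)$ therefore satisfies $t < \alpha(n/\alpha-1) = n-\alpha < n$ in the first case and $t \ge n$ in the second, so Bob recovers whether $s_j = a$ by testing $t < n$. Hence any $p$-accurate one-way $\alpha$-DTW protocol with $p \ge 1-1/|\Sigma|$ yields a $(1-1/|\mathcal{U}|)$-accurate indexing protocol, so it must send $\Omega(r\log|\mathcal{U}|) = \Omega(n\alpha^{-1}\log n)$ bits, matching the upper bound and giving $\Theta(n\alpha^{-1}\log n)$.

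There is no genuine analytic obstacle here; the work is in the reduction setup. The two points I would be most careful about are: (i) the choice of $|\mathcal{U}|$, which must simultaneously give $\log|\mathcal{U}| = \Omega(\log n)$, satisfy $|\mathcal{U}| \le |\Sigma|$ (so the accuracy transfers), and satisfy $|\mathcal{U}|\cdot r \le |\Sigma|$ (so the product alphabet embeds); and (ii) the degenerate corner $\alpha = n$, where $n/\alpha - 1 = 0$ makes the set of valid $\alpha$-DTW answers empty in the first case --- this is patched either by appending one fresh dummy symbol to $x$ (so $\dtw \ge 1$ in the first case while the $\dtw \ge n$ side is unaffected), or by simply observing that for $\alpha = n$ the target bound is only $\Omega(\log n)$ and the case is handled directly. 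The DTW computations themselves are routine once $x$ and $y$ are fixed.
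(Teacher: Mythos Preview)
Your proposal is correct and takes essentially the same approach as the paper. The only cosmetic difference is that the paper's formal proof in Section~\ref{secdtw0low} phrases the \cite{jw13} primitive as the $(n/\alpha,\Sigma)$-SET problem and lets Alice's string be a direct concatenation of her set elements over $\Sigma$ itself, whereas you follow the technical-overview framing via large-alphabet indexing and an explicit embedding of $\mathcal{U}\times[r]$ into $\Sigma$; these are equivalent packagings of the same reduction.
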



Proposition \ref{propboundedKDTW} implies the desired upper bound (the
inefficient protocol). In order to prove Theorem
\ref{thmgenhamcomplexity}, it therefore suffices to prove the lower
bound. To do this, we first introduce a problem with high one-way
communication complexity.
\begin{lem}[\cite{jw13}, Theorem 3.1]
	Let the problem $(n, \mathcal{S})$-SET be defined as
        follows. Alice gets an $n$-element set $S \subseteq
        \mathcal{S}$ and Bob gets a character $a \in \mathcal{S}$. The
        goal is for Bob to determine whether $a \in S$.  Let $p \ge 1
        - \frac{1}{|\mathcal{S}|}$. Then the $p$-accurate one-way
        communication complexity of $(n, \mathcal{S})$-SET is $\Omega(n\log
        (|S|/n))$.
	\label{lemsetcomplexity}
\end{lem}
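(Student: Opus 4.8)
The plan is to prove the claimed $\Omega\big(n\log(|\mathcal{S}|/n)\big)$ lower bound (the universe size $|\mathcal{S}|$, not $|S|=n$, is what appears in the bound) by a direct information-theoretic argument on a uniform hard distribution. Write $\ell := |\mathcal{S}|$ and let Alice's set $S$ be drawn uniformly from the $\binom{\ell}{n}$ size-$n$ subsets of $\mathcal{S}$, so $H(S) = \log_2\binom{\ell}{n} \ge n\log_2(\ell/n)$. Let $L$ be the maximum length of Alice's message. I will show $L \ge H(S) - O(\log\ell)$, which is $\Omega(n\log(\ell/n))$ in the regime $n\log(\ell/n) = \omega(\log\ell)$ --- all that is needed in the applications here, where $\ell = \poly(n)$ and the relevant set size $n/\alpha \ge 1$.

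Step one: remove the public randomness. For every fixed pair $(S,a)$ the protocol answers $\one[a\in S]$ correctly with probability at least $p \ge 1 - 1/\ell$ over its coins; averaging over $S$ uniform and $a \in \mathcal{S}$ uniform, the overall error is at most $1/\ell$, so some fixing $R = r$ of the coins satisfies $\sum_{a\in\mathcal{S}}\Pr_S\big[F(\sk_r(S),a)\ne\one[a\in S]\big] \le 1$. Fix this $r$, set $M := \sk_r(S)$ (now a deterministic function of $S$), and write $\delta_a := \Pr_S\big[F(M,a)\ne\one[a\in S]\big]$, so $\sum_a\delta_a\le 1$ and $L \ge H(M) \ge I(S;M) = H(S) - H(S\mid M)$.

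Step two: bound $H(S\mid M)$. Since $S$ is recoverable from the vector $(\one[a\in S])_{a\in\mathcal{S}}$, subadditivity of entropy gives $H(S\mid M) \le \sum_a H(\one[a\in S]\mid M)$. For each $a$, the fixed map $M\mapsto F(M,a)$ is a binary predictor of $\one[a\in S]$ with error $\delta_a$, so Fano's inequality yields $H(\one[a\in S]\mid M) \le H_2(\delta_a)$ (in the binary case there is no residual term, using symmetry of $H_2$ about $1/2$). Using $H_2(t)\le t\log_2(4/t)$, concavity of $t\mapsto t\log_2(4/t)$, and Jensen's inequality across the $\ell$ terms subject to $\sum_a\delta_a\le 1$, we obtain $\sum_a H_2(\delta_a) \le \log_2(4\ell)$. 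Hence $H(S\mid M)\le 2+\log_2\ell$, and therefore $L \ge \log_2\binom{\ell}{n} - \log_2\ell - 2 \ge n\log_2(\ell/n) - O(\log\ell)$, as desired.

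The main obstacle is quantitative rather than conceptual: the entire strength of the bound comes from the accuracy being as large as $1-1/|\mathcal{S}|$, which is precisely what forces $\sum_a\delta_a\le 1$ and hence makes $H(S\mid M)$ only $O(\log\ell)$, negligible beside $H(S)=\Theta(n\log(\ell/n))$. With merely constant accuracy this step collapses (indeed $O(n)$ bits then suffice, e.g.\ via a Bloom-filter-style message), so the proof must use $p$ sharply. An alternative, Fano-free route yields the same bound: after fixing $r$, Markov's inequality shows at least $\tfrac{2}{3}$ of all size-$n$ subsets incur at most two membership errors, two such ``good'' sets producing the same message must differ in at most four elements, and only $O(\ell^4)$ size-$n$ subsets lie within symmetric difference $4$ of a given one, so there are at least $\tfrac{2}{3}\binom{\ell}{n}/O(\ell^4)$ distinct messages. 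I would present the information-theoretic version as the main proof and record the counting variant as a remark.
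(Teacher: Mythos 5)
The paper does not prove this lemma at all: it is imported as a black box from \cite{jw13} (their Theorem 3.1), so any proof you give is necessarily a different route. Your information-theoretic argument is essentially correct and self-contained: you rightly read the bound as $\Omega(n\log(|\mathcal{S}|/n))$ (the ``$|S|$'' in the statement is a typo, since $|S|=n$ would make the bound vacuous); the averaging step that fixes the public coins so that $\sum_a \delta_a \le 1$ is exactly where the $1-1/|\mathcal{S}|$ accuracy is spent; the chain $L \ge H(M) \ge H(S) - \sum_a H(\mathbbm{1}[a\in S]\mid M)$ with binary Fano and the concavity/monotonicity of $t\mapsto t\log_2(4/t)$ on $[0,1]$ all check out, yielding $L \ge \log_2\binom{|\mathcal{S}|}{n} - \log_2|\mathcal{S}| - O(1)$. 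The counting variant (good sets with at most two membership errors must map to messages that pin them down up to symmetric difference $4$) gives the same bound and is a reasonable remark. The one substantive caveat is the additive $O(\log|\mathcal{S}|)$ loss you already flag: your bound is $\Omega(n\log(|\mathcal{S}|/n))$ only when $n\log(|\mathcal{S}|/n) = \omega(\log|\mathcal{S}|)$, so it does not recover the stated lemma for constant set sizes, and in particular not for $n=1$ (where the problem degenerates to Equality with error $1/|\mathcal{S}|$ and the $\Omega(\log|\mathcal{S}|)$ bound needs a separate, standard argument). This regime is not entirely moot downstream: Theorem \ref{thmgenhamcomplexity} invokes the lemma with set size $n/\alpha$, which is $O(1)$ when $\alpha = \Theta(n)$, so if you want your proof to fully substitute for the citation you should either handle small $n$ separately or restrict the range of $\alpha$ for which your derivation applies.
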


We are now prepared to prove Theorem \ref{thmgenhamcomplexity}.
\begin{proof}[Proof of Theorem \ref{thmgenhamcomplexity}]
	As described above, it suffices to show the lower bound. To
        this end, we reduce $(n/\alpha, \Sigma)$-SET to $\alpha$-DTW
        for strings of length $n$. Suppose Alice is given $S \subseteq
        \Sigma$ of size $n/\alpha$ and Bob is given the character $a
        \in \Sigma$. Then Alice can compute $x$ to be the
        concatenation of the elements of $S$ in an arbitrary order.
        Alice will use the resulting string $x \in \Sigma^{\le n}$ as
        an input for the $\alpha$-DTW problem.  Bob can then define
        $y$ to be the character $a$ repeated $n$ times. Notice that if
        $a \in S$ then $\dtw(x, y) = n/\alpha - 1$, whereas if $a
        \not\in S$ then $\dtw(x, y) = n$. By Lemma
        \ref{lemsetcomplexity}, this reduction establishes that for $p
        \ge 1 - \frac{1}{|\Sigma|}$ the $p$-accurate one-way
        communication complexity of $\alpha$-DTW is at least
	$$\Omega(n\alpha^{-1}\cdot\log(\alpha|\Sigma|/n)) =
        \Omega(n\alpha^{-1}\cdot\log n),$$ where the last equality
        holds since $|\Sigma| = n^{1+ \Omega(1)}$.
\end{proof}

\subsection{Lower Bound for Linear Sketching}\label{subseclinearsketch}
In this section, we establish a lower bound for linear sketching. A
similar lower bound is studied by Andoni, Goldberger, McGreger \&
Porat in \cite{AGMP13} for a variant of edit distance.
\begin{definition}
For $\Sigma \subseteq \mathbb{R}$, a $\delta$-error linear sketch for
$\alpha$-DTW over $\Sigma^{n}$ is a randomized function $\sk:
\Sigma^{\le n} \rightarrow \mathcal{V}^m$ for some vector space
$\mathcal{V}$ and dimension $m$ such that
\begin{enumerate}
\item For $x, y \in \sigma^n$, the pair $(\sk(x), \sk(y))$ can be used to solve $\alpha$-DTW (without additionally examining $x$ or $y$) with probability at least $1 - \delta$;
\item If $x, y, z \in \Sigma^{\le n}$ satisfy $x + y = z$, then  $\sk(x) + \sk(y) = \sk(z)$.
\end{enumerate}
The \emph{space complexity} of the sketch $\sk$ is given by $m \cdot
\log |\mathcal{V}|$.
\end{definition}
Next we prove an $\Omega(n)$ lower bound on the space complexity of
linear sketches.  For DTW over the alphabet $\{0, 1, 2\}^n$, this
shows that no linear sketch can achieve compression by more than a
constant factor.

\begin{theorem}[Theorem \ref{thm:overviewlinear} restated]
	Consider $1 \le \alpha \le n$.  Then any $0.1$-error
        linear sketch for $\alpha$-$\dtw$ on $\{0,1, 2\}^{4n}$ has
        space complexity $\Omega(n)$.
        \label{thmlinear}
\end{theorem}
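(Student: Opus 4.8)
The plan is to reduce from a two‑party one‑way communication problem of complexity $\Omega(n)$ — the natural candidate being \textsc{Disjointness} on $\Theta(n)$ coordinates, whose one‑way randomized complexity is $\Omega(n)$ — while exploiting linearity in exactly the way a general sketch cannot. The key point is that, by additivity of $\sk$ and shared randomness, a player who holds $\sk(x)$, a string $y$, and a fixed string $z$ can compute both $\sk(x+y)$ (whenever $x+y$ is a legal input) and $\sk(z)$, and can therefore run the estimator on $(\sk(x+y),z)$ with the sketch's $0.1$ error \emph{without ever learning $x$}. This ``merge'' step is precisely what is unavailable to a nonlinear protocol such as the $\widetilde O(n/\alpha)$ protocol of Theorem~\ref{thmDTWsketchgeneral}, and it is the mechanism behind the homomorphic‑fingerprint lower bounds of \cite{AGMP13}.

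Concretely, given a purported $0.1$-error linear sketch $\sk$ of space $s$, Alice (holding a set $S$) sends $\sk\big(A(S)\big)$, Bob (holding a set $T$) forms $\sk\big(A(S)+B(T)\big)=\sk\big(A(S)\big)+\sk\big(B(T)\big)$ together with $\sk(z)$ for a fixed reference string $z$ that both agree on, runs the estimator, and reads off an $\alpha$-approximation of $\dtw\big(A(S)+B(T),\,z\big)$. It thus suffices to design \emph{linear} encodings $A(\cdot),B(\cdot)$ into $\{0,1\}^{O(n)}$, with $A(S)+B(T)\in\{0,1,2\}^{O(n)}$, and a fixed $z$ so that $\dtw\big(A(S)+B(T),\,z\big)$ is $O(1)$ when $S\cap T=\emptyset$ and larger than $\alpha$ times that when $S\cap T\neq\emptyset$; any such gadget converts $\sk$ into a $0.9$-accurate one‑way \textsc{Disjointness} protocol with Alice's message of length $s$, forcing $s=\Omega(n)$. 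Since $\alpha$-DTW only gets easier as $\alpha$ grows, it is enough to produce a $1$-versus-$\alpha$ gap, and one may even take $\alpha$ as large as $\Theta(n)$, i.e.\ separate $\dtw\le 1$ from $\dtw\ge\Theta(n)$.

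I expect the main obstacle — and the technical heart of the argument — to be the gadget itself. The merged string $A(S)+B(T)$ is forced to be a \emph{linear} function of $(S,T)$, whereas disjointness is a bilinear (``AND''/inner‑product) property, so the only available nonlinearity is that of $\dtw$ acting on the string; and the obvious encoding, in which a coordinate of $A(S)+B(T)$ simply takes the value $2$ on $S\cap T$, fails, because a single isolated ``defect'' letter only perturbs the collapsed string — and hence $\dtw(\cdot,z)$ — by $O(1)$ (cf.\ Lemma~\ref{lem2approx}). The construction must instead arrange, in the spirit of the run‑displacement cascade of Theorem~\ref{thmlowerint}, that a defect cannot be absorbed against the rigid periodic structure of $z$ without shifting $\Omega(\alpha)$ letters through successive blocks until they run out of room, so that a lone intersection triggers a global $\Omega(\alpha)$ cascade while in the disjoint case every block of $A(S)+B(T)$ is merely a warp of its block of $z$. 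Choosing the linear layout so that this holds for \emph{every} input in the support of the hard distribution, while keeping all strings of length $O(n)$ and the maps genuinely linear, is where the real work lies.
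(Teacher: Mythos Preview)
Your proposal has a genuine gap: you correctly identify that linearity lets Bob form $\sk$ of a merged string he never saw, but you then set yourself a needlessly hard target and defer the construction that would meet it. The paper's proof avoids both issues with two ideas you are missing.

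First, the paper reduces from \textsc{Index}$_n$, not \textsc{Disjointness}, and---crucially---it does \emph{not} compare the merged string to a fixed reference $z$. Instead, Bob compares $\bar x$ to a one-coordinate perturbation of $\bar x$ itself: Alice sends $\sk(\bar x)$ for $\bar x=(1,x_1,x_1,1,\,1,x_2,x_2,1,\ldots,1,x_n,x_n,1)$, and Bob, holding $i$, forms $\sk(\bar y_i)=\sk(\bar x)+\sk(e)$ where $e$ is the unit vector that adds $1$ to the second copy of $x_i$. Now $\dtw(\bar x,\bar y_i)$ is $0$ when $x_i=0$ (the blocks $1001$ and $1011$ collapse to the same run pattern) and $1$ when $x_i=1$ (the perturbation creates a single $2$ inside a run of $1$'s). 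Any multiplicative approximation separates $0$ from a positive value, so no gap amplification is needed at all; the $\alpha$ plays no role in the gadget.

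Second, this explains why your ``cascade'' plan is both unnecessary and, as stated, unlikely to work. You are aiming for a $1$-versus-$\alpha$ gap of $\dtw(A(S)+B(T),z)$ against a \emph{fixed} $z$, where a single intersecting coordinate flips one position of $A(S)+B(T)$. You yourself note that a single defect typically perturbs $\dtw$ against a fixed reference by $O(1)$; the run-displacement cascade of Theorem~\ref{thmlowerint} relies on Alice's string having $\Theta(\alpha)$-length runs whose \emph{presence or absence} depends on her input, which a one-coordinate linear change cannot manufacture. The paper sidesteps this entirely by (i) letting Bob compare against $\bar x$ rather than a fixed $z$, and (ii) exploiting the $0$-versus-positive dichotomy that multiplicative approximation cannot blur. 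Once you see that, the whole proof is a few lines.
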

\begin{proof}
	Let $\sk$ be an $0.1$-error linear sketch for $\alpha$-DTW
        on $\{0, 1\}^{4n}$. Recall that in the $\text{INDEX}_n$
        problem, Alice has a string $(x_1, x_2, \ldots, x_n) \in
        \{0,1\}^n$, and Bob has an index $i \in [n]$. Alice sends Bob
        a single round of communication and then Bob must recover
        $x_i$. The INDEX lower bound states that any protocol for this
        problem which succeeds with probability at least $0.9$ must
        use $\Omega(n)$ bits of communication \cite{knr99}. We now
        present a protocol for $\text{INDEX}_n$ in which Alice
        constructs her message to Bob using the linear sketch $\sk$ for
        $\dtw$ over $\{0, 1, 2\}^{4n}$, thereby establishing that the
        bit-complexity of $\sk$ is at least $\Omega(n)$.

	\paragraph{The Protocol} Alice constructs her message for Bob by computing the sketch $\sk(\bar{x})$, where, 
	\[
	\bar{x}=(1, x_1, x_1, 1, 1, x_2, x_2, 1, \ldots, 1, x_i, x_i,
        1,  \ldots, 1, x_n, x_n, 1).
	\]
	Bob then constructs a sketch for the vector $\bar{y}_i$, defined by,
	\[
	\bar{y}_i = (1, x_1, x_1, 1, 1, x_2, x_2, 1, \ldots, 1, x_i, \boldsymbol{x_i+1}, 1,  \ldots, 1, x_n, x_n, 1),
	\]
	using the identity
	\[
	\sk(\bar{y}_i)
	= \sk(\bar{x}) + \sk(0^{4(i-1)}, 0, 0, 1, 0 ,0^{4(n-i)}).
	\]
	If $x_i=0$, then $\dtw(\bar{x}, \bar{y}_i)= 0$ and if $x_i=1$
        then $\dtw(\bar{x}, \bar{y}_i)= 1$. Therefore any
        multiplicative approximation can distinguish the two cases,
        and Bob can use $\sk(\bar{x})$ and $\sk(\bar{y_i})$ to determine
        whether $x_i = 0$ or $x_i = 1$ with probability at least
        $0.9$. This completes the protocol.
\end{proof}

\section{Acknowledgments} D. Woodruff would like to thank NSF Big Data grant
1447639 for support for this project; part of this work was also done while the author was visiting the Simons Insitute for the Theory of Computing. W. Kuszmaul would like to thank
support from NSF grants 1314547 and 1533644, as well as a Hertz Fellowship and an
NSF GRFP Fellowship.

\bibliographystyle{alpha}
\bibliography{ref}

\appendix

\section{Improved Lower Bound Over Integers}\label{secappendixlower}
\begin{theorem}
	Consider $1 \le \alpha \le n$. Let $\Sigma=\{0,1,2,\ldots, m\}$ with $2\le m\le
        \alpha$, and let $\delta \le (m+1)^{-1}$.  Then
        CC$_{\delta}\big[\dtep_{1}^{\alpha}(\Sigma^{\le
            \Theta(n)})\big] = \Omega(n / \alpha \cdot \log m)$.
        \label{thmintlower}
\end{theorem}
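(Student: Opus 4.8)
The plan is to follow the proof of Theorem~\ref{thmlowerint}, replacing its one‑bit‑per‑block gadget with one carrying $\Theta(\log m)$ bits, and reducing from the large‑alphabet indexing problem of \cite{jw13}. There Alice holds $s=(s_1,\dots,s_k)\in\mathcal{U}^k$ with $\mathcal{U}=\{1,\dots,m-1\}$, Bob holds $(a,i)\in\mathcal{U}\times[k]$, and Bob must decide whether $s_i=a$; any one‑way protocol that succeeds with probability at least $1-1/(m-1)$ requires $\Omega(k\log(m-1))=\Omega(k\log m)$ bits when $m\ge 3$. I will take $k=\Theta(n/\alpha)$, which is legitimate because the strings built below have length $O\big(k(\alpha+m)\big)=O(n)$, using $m\le\alpha$. (When $m=2$ the claim is just $\Omega(n/\alpha)$, already given by Theorem~\ref{thmmainlow}.)

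For each $v\in\{1,\dots,m-1\}$, let
\[
Z(v)=0^{\alpha}\circ 1\circ 2\circ\cdots\circ(v-1)\circ v^{\alpha}\circ(v+1)\circ\cdots\circ(m-1)\circ m^{\alpha},
\]
a heavy run of $\alpha$ copies of $v$ flanked by one copy of every other letter in increasing order, with the extreme letters $0$ and $m$ also inflated to runs of length $\alpha$ to act as anchors. Alice's input is $A(s)=Z(s_1)\circ\cdots\circ Z(s_k)$. Writing $G=0\circ 1\circ\cdots\circ m$ for the ``generic block'' and $G_a$ for $G$ with its single copy of $a$ deleted, Bob's input is $B(a,i)=G^{\,i-1}\circ G_a\circ G^{\,k-i}$.

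The reduction rests on two claims, each proved exactly as the two cases of Theorem~\ref{thmlowerint}. First, if $s_i\ne a$ then $\dtw(A(s),B(a,i))\le 1$: for $j\ne i$, $Z(s_j)$ is an expansion of $G$ (inflate the letters $0,s_j,m$), so those blocks contribute $0$; and $Z(s_i)$ matches $G_a$ at cost $1$, since $s_i$ still occurs in $G_a$ and only the removed copy of $a$ must be absorbed, by matching it to the neighboring letter $a-1$ or $a+1$ (present in $\Sigma$ because $1\le a\le m-1$) at cost $1$. Second, if $s_i=a$ then $\dtw(A(s),B(a,i))\ge\alpha$: in a correspondence of cost $<\alpha$, each of Bob's $k$ copies of $0$ overlaps at most one $0$-run of $A(s)$ (otherwise it also overlaps an $m$-run and pays $\ge\alpha$), and each $0$-run of $A(s)$ overlaps at least one of Bob's $0$'s (else it pays $\ge\alpha$), so these are matched in order; the same holds for the $m$-runs. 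This pins the content of $Z(s_i)$ to be matched inside $G_a$, so the run $a^{\alpha}$ of $Z(s_i)$ is matched to letters other than $a$, each at distance $\ge 1$ from $a$, forcing cost $\ge\alpha$ --- a contradiction.

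Given the two claims, a $\delta$-accurate protocol for $\dtep_1^{\alpha}(\Sigma^{\le\Theta(n)})$ lets Bob tell $\dtw\le 1$ from $\dtw\ge\alpha$, hence recover whether $s_i=a$, with error $\delta$ (the passage from the gap $1$ vs $\alpha$ to $\dtep_1^{\alpha}$, up to the usual factor‑of‑two slack, is identical to that in Theorem~\ref{thmlowerint}). Since $\delta\le(m+1)^{-1}<(m-1)^{-1}$, this indexing protocol succeeds with probability exceeding $1-1/(m-1)$, so by the bound above $\mathrm{CC}_\delta\big[\dtep_1^{\alpha}(\Sigma^{\le\Theta(n)})\big]=\Omega(k\log m)=\Omega\big(n/\alpha\cdot\log m\big)$. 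The main work is in the second claim: one must check that the run $a^{\alpha}$ can neither leak cheaply into $G_a$'s anchor runs nor escape across block boundaries, which is exactly where the length‑$\alpha$ anchor runs are used and mirrors the case analysis in the proof of Theorem~\ref{thmlowerint} almost verbatim.
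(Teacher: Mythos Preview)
Your proposal is correct and follows essentially the same approach as the paper's proof: the gadget $Z(v)$, the block strings $G$ and $G_a$, and the two-case analysis (pinning the $0$-runs to Bob's $0$'s in order, forcing the heavy $a^{\alpha}$ run to sit inside a block that lacks the letter $a$) are identical to the paper's construction up to notation. Your treatment is in fact slightly more careful in two places: you explicitly separate out the $m=2$ case (where the large-alphabet indexing bound degenerates), and you invoke both the $0$- and $m$-anchors rather than only the $0$-anchors, which makes the ``pinning'' step marginally cleaner.
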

\begin{proof}
	Without loss of generality, we assume $\alpha$ and $n/\alpha$
        are positive integers.  We prove the lower bound by reduction
        from indexing over a large alphabet. Suppose Alice has a
        length $k=n/\alpha$ string $x=(x_1, x_2, \ldots, x_k)\in
        \Sigma\backslash\{0,m\}$, and that Bob has both an index
        $i\in[k]$ and a number $y_i\in \Sigma\backslash\{0,m\}$. Bob
        wishes to determine whether $x_i = y_i$. The $(1 - (m +
        1)^{-1})$-accurate one-way communication complexity of this
        problem is $\Omega(k \cdot \log |\Sigma|) = \Omega(n / \alpha
        \cdot \log m)$ \cite{jw13}. We now reduce this indexing
        problem to $\dtep_1^\alpha$.

        Alice converts the $j$-th character $x_j$ to a string $Z(x_j)$
        of length at most $n(m+3\alpha-3)/\alpha$, by defining
	\begin{align*}
	Z(x_j) &:= \big(0^{\alpha}, 1, 2, \ldots, x_j-1, x_j^{\alpha}, x_j +1,  \ldots, m-1, m^{\alpha}\big),
	\end{align*}
	where for a letter $l$ and length $t$, $l^{t}$ represents $t$ copies of $l$ concatenated together.
	Alice then constructs $A(x):=Z(x_1)\circ Z(x_2)\circ \ldots \circ Z(x_k)$. 
	Bob constructs \[
	B(i):=[0,m]^{i-1}\circ Y(y_i)\circ [0,m]^{k-i},
	\]
	where $[0, m]^{t}$ denotes the string $(0, 1, \ldots, m)$ repeated $t$ times, and $Y(y):= (0,1,\ldots, y-1, y+1, \ldots, m)$.
	
	Let us consider the case $x_i \neq y_i$.
	In this case, we claim $\dtw(A(x), B(i)) \le 1$. 
	To show this we can first find a correspondence that maps every $ Z(x_{i'})$ to $[0, m]$ with cost $0$ except for the $i$-th block. 
	Moreover, it is easy to see that $\dtw(Z(x_i), Y(y_i))\le 1$. 
	Therefore, there exists an correspondence that has cost at most $1$.

	Next, if $x_i = y_i$, the claim is that, for any
        correspondence $(\wb{a}, \wb{b})$ of $(A(x), B(i))$ we have
        $\|\wb{a}- \wb{b}\|_1 \ge \alpha$.  Suppose for contradiction
        that there exists a correspondence $(\wb{a}, \wb{b})$ such
        that $\|\wb{a}- \wb{b}\|_1 < \alpha$. In order so that
        $\|\wb{a}- \wb{b}\|_1 < \alpha$, it must be that each of the
        $0$s in $B(i)$ are matched by the correspondence to $0$s from
        at most one of $A(x)$'s runs. Otherwise, the $0$ in $B(i)$
        would have to also be matched with at least $\alpha$ $m$s, a
        contradiction.  Moreover, each of the $\alpha$-letter runs of
        $0$s in Alice's string $A(x)$ must be matched with at least
        one $0$ from Bob's string $B(i)$, since otherwise we would
        again necessarily have $\|\wb{a}- \wb{b}\|_1 \ge
        \alpha$. Because each $0$ in Bob's string matches with at most
        one run of $0$s in Alice's string, and each run of $0$s in
        Alice's string is matched with at least one $0$ in Bob's
        string, it follows that for each $j$ the $j$-th run of $0$s in
        Alice's string matches with the $j$-th $0$ in Bob's string
        $B(i)$. However, this prevents the run of $\alpha$ $x_i$s in
        Alice's $Z(x_i)$ from being matched with any $x_i$'s from
        Bob's string $B(i)$, forcing $\dtw(A(x), B(i)) \ge \alpha$, a
        contradiction.
\end{proof}

\section{Proof of Theorem \ref{thmmainthmgeneralized}}\label{secappendix2}

To prove Theorem \ref{thmmainthmgeneralized}, we first prove the
following lemma.

\begin{lemma}[Protocol of $(r,\alpha)$-gap DTW]
  \label{lemma:r_a_gap_general}
Let $(\Sigma, d)$ be a finite metric space of at most polynomial
size. Suppose that $(\Sigma, d)$ is efficiently $\sigma$-separable for
some $1 \le \sigma \le O(\log n)$. Then for any $\delta \in (0,1)$ and
$r \ge 1$, there exits an efficient $(1-\delta)$-accurate one-way
protocol for the $(r,\alpha)$-gap $\dtw(\Sigma^{\le n})$ problem with
space complexity $O(n\alpha^{-1} \sigma \log \alpha \cdot \log n \log
\delta^{-1})$ and an inefficient $(1-\delta)$-accurate one-way
protocol with space complexity $O(n\alpha^{-1} \sigma \log \alpha
\cdot \log n \cdot \log \delta^{-1})$.
\end{lemma}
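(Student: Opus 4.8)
The plan is to reduce $(r,\alpha)$-gap DTW to a boosted instance of the bounded $\alpha'$-DTW primitive of Proposition~\ref{propboundedKDTW} with $\alpha'=\Theta(\alpha/\sigma)$, by ``snapping'' both inputs according to a random low-diameter partition of $\Sigma$; this is the $\sigma$-separable analogue of the $r$-simplification argument of Lemma~\ref{lemgap}, with the new feature that snapping is randomized and may inflate $\dtw$ by a $\Theta(\sigma)$ factor. First dispose of the easy regime: if $\alpha$ is below a sufficiently large constant times $\sigma$, Alice sends all of $x$, which costs $O(n\log n)=O(\sigma n\alpha^{-1}\log n)$ bits; so assume $\alpha$ exceeds that constant times $\sigma$, fix an integer $\alpha'=\Theta(\alpha/\sigma)$ in $[2,\alpha]$ (with a small enough constant, to be used below) and set the diameter parameter $\Delta=r/8$. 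A single ``trial'' runs as follows. Using public randomness, Alice and Bob sample the same $\Delta$-bounded $\sigma$-separating partition $P$ of $\Sigma$ (possible in $\poly(|\Sigma|)$ time since $\Sigma$ is efficiently $\sigma$-separable) together with a representative $\rho(Q)\in Q$ of each part $Q$; Alice replaces each letter $l$ of $x$ by $\rho(P(l))$ to get $\tilde x$, and Bob does the same to get $\tilde y$. They then run the bounded $\alpha'$-DTW protocol $\Pi$ of Proposition~\ref{propboundedKDTW} on $(\tilde x,\tilde y)$ over $(\Sigma,d)$, outputting $1$ if $\Pi$ returns ``Fail'' and, if $\Pi$ returns a number $Z$, outputting $1$ when $Z\ge nr/4$ and $0$ otherwise.

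Condition on $\Pi$ being correct (probability $1-\poly(n^{-1})$). For soundness, suppose $\dtw(x,y)>nr$. Since every representative lies in a part of diameter at most $\Delta$, the triangle inequality gives $d(l_1,l_2)\le d(\rho(P(l_1)),\rho(P(l_2)))+2\Delta$ for all $l_1,l_2$; pulling a minimum-length optimal correspondence of $(\tilde x,\tilde y)$ (length $\le 2n$) back to a correspondence of $(x,y)$ by undoing the letterwise relabeling then yields $\dtw(x,y)\le \dtw(\tilde x,\tilde y)+2\Delta\cdot 2n=\dtw(\tilde x,\tilde y)+nr/2$, so $\dtw(\tilde x,\tilde y)>nr/2$; hence $\Pi$ either fails or returns $Z\ge \dtw(\tilde x,\tilde y)>nr/4$, and we correctly output $1$. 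For completeness, suppose $\dtw(x,y)\le nr/\alpha$, fix an optimal correspondence $C$ of $(x,y)$, and push it forward through the relabeling to a correspondence $C'$ of $(\tilde x,\tilde y)$. An edge of $C$ between $l_1,l_2$ contributes $0$ to $C'$ if $l_1,l_2$ fall in the same part and at most $d(l_1,l_2)+2\Delta$ otherwise, and $\sigma$-separation gives $\Pr[l_1,l_2\text{ in different parts}]\le \sigma\,d(l_1,l_2)/\Delta$; so the expected number $N$ of ``separated'' edges is at most $\sigma\,\dtw(x,y)/\Delta\le 8\sigma n/\alpha$ and $\mathbb{E}[\mathrm{cost}(C')]\le \dtw(x,y)+2\Delta\cdot\mathbb{E}[N]\le 3\sigma nr/\alpha$. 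Because the separated edges are exactly the edges of $C'$ with distinct endpoints, $\dtw_0(\tilde x,\tilde y)\le N$ and $\dtw(\tilde x,\tilde y)\le \mathrm{cost}(C')$, so by Markov's inequality and a union bound, with probability at least $2/3$ we have both $\dtw_0(\tilde x,\tilde y)\le n/\alpha'$ and $\alpha'\cdot\dtw(\tilde x,\tilde y)<nr/4$ --- this is exactly where the choice $\alpha'=\Theta(\alpha/\sigma)$ with a sufficiently small constant (and $\Delta=r/8$) is used. On this event $\Pi$ must return a valid $Z$ with $\dtw(\tilde x,\tilde y)\le Z<\alpha'\,\dtw(\tilde x,\tilde y)<nr/4$, so we correctly output $0$. (In the remaining range $nr/\alpha<\dtw(x,y)\le nr$ both outputs are valid.) Thus each trial is correct with probability at least $2/3-\poly(n^{-1})\ge 3/5$.

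To finish, Alice runs $t=\Theta(\log\delta^{-1})$ independent trials using fresh public randomness, sends the concatenation of the $t$ messages, and Bob outputs the majority answer; since each trial is correct with probability bounded away from $1/2$, a Chernoff bound makes the majority correct with probability $\ge 1-\delta$. Using the efficient version of Proposition~\ref{propboundedKDTW} in each trial costs $O(n(\alpha')^{-1}\log\alpha'\log n)=O(\sigma n\alpha^{-1}\log\alpha\log n)$ bits and keeps the protocol polynomial time (sampling $P$, snapping, and running $\Pi$ are all polynomial), giving the stated efficient bound $O(\sigma n\alpha^{-1}\log\alpha\log n\log\delta^{-1})$; using the inefficient $(1-\poly(n^{-1}))$-accurate version of Proposition~\ref{propboundedKDTW} in each trial gives the inefficient protocol within the stated bound.

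I expect the completeness analysis to be the main obstacle: one must control how much randomized snapping can inflate both $\dtw$ and $\dtw_0$ of a fixed optimal correspondence, show via the $\sigma$-separation property that the inflation is only an $O(\sigma)$ factor in expectation, and then calibrate $\alpha'$ and $\Delta$ so that this loss is absorbed and a clean constant-factor gap remains for the bounded $\alpha'$-DTW primitive to detect --- while also handling that snapping can merge adjacent runs (dealt with by passing correspondences back and forth through the letterwise relabeling) and that a single snapping succeeds only with constant probability (dealt with by the $O(\log\delta^{-1})$-fold repetition, which is the source of the $\log\delta^{-1}$ factor in the bound).
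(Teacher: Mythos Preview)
Your proposal is correct and follows essentially the same approach as the paper's proof: snap both strings to representatives of a random $\sigma$-separating low-diameter partition, run the bounded $\alpha'$-DTW protocol of Proposition~\ref{propboundedKDTW} with $\alpha'=\Theta(\alpha/\sigma)$ on the snapped strings, threshold the output, analyze soundness deterministically via the diameter bound and completeness in expectation via the $\sigma$-separation property plus Markov, and amplify by $\Theta(\log\delta^{-1})$ independent repetitions. The only differences are cosmetic (you take $\Delta=r/8$ and threshold at $nr/4$, whereas the paper takes diameter $c_1 r$ and threshold $nr/(2\alpha')$); if anything your explicit constants make the gap between the two cases slightly cleaner to verify.
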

\begin{proof}
Suppose Alice is given $x = (x_1, \ldots, x_{n_1}) \in \Sigma^{\le n}$
and Bob is given $y = (y_1, \ldots, y_{n_2}) \in \Sigma^{\le n}$.
Without loss of generality, we assume $\alpha \ge c \sigma$ for some
sufficiently large constant $c$, since otherwise Alice can send $x$
directly using $O(n \log n) \le O(n / \alpha \cdot \log^2 n)$ bits. In
what follows, we describe an efficient $0.51$-accurate protocol for
$(r, \alpha)$-gap DTW. The boost of probability to $1-\delta$ follows
from standard amplification arguments.

Using the fact that $(\Sigma, d)$ is efficiently $\sigma$-separable,
we first construct a partition $P$ of $\Sigma$ with diameter $l$ for
some $l=c_1r$ for $c_1$ a sufficiently large constant. The partition
$P$ satisfies the property that for $x, y \in \Sigma$, the probability
that $x$ and $y$ are in separate parts of $P$ is at most $\sigma \cdot
d(x, y) / l$. Each part of the partition $P$ is represented by an
arbitrarily selected point in it, and for each point $x \in \Sigma$,
we use $z(x)$ to denote the representative point of the part of $P$
which contains $x$. Alice and Bob both determine $z$ using shared
random bits.

Define $\wt{x} \in \Sigma^{n_1}$ and $\wt{y} \in \Sigma^{n_2}$ by
$\wt{x}_i = z(x_i)$ and $\wt{y}_i =z(y_i)$ for every $i$. Note that
Alice is able to compute $\wt{x}$ and Bob is able to compute $\wt{y}$
without any communication.

We now describe our protocol for $(r, \alpha)$-gap DTW. Define
$$\alpha' = \alpha c_1 / (82 \sigma).$$ Using
Proposition~\ref{propboundedKDTW}, Alice and Bob run a $0.99$-accurate
protocol for the bounded $\alpha'$-DTW problem on $\wt{x}$ and
$\wt{y}$. Note that the time-efficient version of this protocol uses space
$$O(n \alpha'^{-1} \log \alpha' \cdot \log n) = O(n\alpha^{-1} \sigma \log \alpha \cdot \log n),$$
and that the time-inefficient version of this protocol uses space
$$O(n \alpha'^{-1} \cdot \log n) = O(n\alpha^{-1} \sigma \cdot \log n).$$
If this protocol
outputs ``Fail'', then Bob returns ``1''.  Otherwise the protocol
outputs a number $Z$.  Bob outputs $1$ if $Z\ge nr/(2\alpha')$ and ``0''
otherwise.
	
To show the correctness, it suffices to show that with probability at
least $.9$, if $\dtw(x, y) \le nr / \alpha$, then $\dtw_0(\wt{x},
\wt{y}) \le n / \alpha'$ and $\dtw(\wt{x}, \wt{y}) \le
\frac{nr}{2\alpha'}$, and that if $\dtw(x, y) > nr$, then
$\dtw(\wt{x}, \wt{y}) > nr/2$.

	We complete the proof with case analysis.
	\paragraph{Case 1}
	Suppose that $\dtw(x,y)\le nr/\alpha$. We wish to show that with probability at least $.9$,
        \begin{equation}
          \dtw_0(\wt{x}, \wt{y}) \le n / \alpha',
          \label{eqprop1}
        \end{equation}
        and
        \begin{equation}
          \dtw(\wt{x}, \wt{y}) \le \frac{nr}{2\alpha'}.
          \label{eqprop2}
        \end{equation}

	Let $C^*=(\wb{x}, \wb{y})$ be an optimal DTW
        correspondence between $x$ and $y$ and assume without loss of
        generality that $C^*$ is of length no greater than $2n$. Let
        $E$ be the set indices of edges of $C^*$ being cut by the partition,
        i.e.,
	\[
	E=\{i\in [|\wb{x}|]: \wb{\wt{x}}_i\neq  \wb{\wt{y}}_i\}.
	\]   Let $E_1$ be the set of indices of edges in $C^*$ with length at most $l$ and
	$E_2$ be the set of indices of edges with length greater than $l$, i.e.,
	\[
	E_1=\{i\in [|\wb{x}|]: d(\wb{x}_i,\wb{y}_i)\le l\} \quad\text{and}\quad E_2=\{i\in[|\wb{y}|]: d(\wb{x}_i, \wb{y}_i)> l\}.
	\]
	Because $\dtw(x, y) \le nr / \alpha$,
	\[
	|E_2| \le \frac{nr}{\alpha l} = \frac{n}{c_1\alpha}.
	\]
	For each $i\in E_1$, the corresponding edge it is cut by the grid with probability at most $\sigma \cdot d(\wb{x}_i,\wb{y}_i)/l$. Thus we can compute the expected size of $E$ as
	\[
	\EE[|E|] \le \sum_{i\in E_1} \frac{\sigma \cdot d(\wb{x}_i,\wb{y}_i)}{l} + |E_2| \le \frac{\sigma \cdot \dtw(x,y)}{l} + |E_2| \le  \frac{2 n\cdot \sigma}{c_1\alpha}.
	\]
	By Markov's inequality, with probability at least $0.9$,
	\begin{equation}
	  |E| \le \frac{20n\cdot \sigma}{c_1\alpha},
          \label{eqEboundedsize}
	\end{equation}
	which we condition on for the rest of this case. Notice that
        $\dtw_0(\wt{x}, \wt{y}) \le |E|$, meaning that
        \eqref{eqEboundedsize} allows us to bound $\dtw_0(\wt{x},
        \wt{y})$ by
        $$\dtw_0(\wt{x}, \wt{y}) \le \frac{20n\sigma}{c_1\alpha} < n / \alpha',$$
        which establishes \eqref{eqprop1}, as desired.

        Moreover, for each edge being cut, its length is increased by
        at most $2l$. Thus, we have,
	\[
	  \dtw(\wt{x}, \wt{y}) \le \dtw(x, y) + 2l \cdot \dtw_0(x, y),
       	  \]
          which by \eqref{eqEboundedsize}, implies that
          
          \begin{align*}  
	    \dtw(\wt{x}, \wt{y}) & \le \dtw(x, y) + 2l \cdot \frac{20n\cdot \sigma}{c_1\alpha} \\
                                 & \le nr / \alpha + 2c_1 r \cdot \frac{20n\cdot \sigma}{c_1\alpha} \\
            & \le \frac{41nr\cdot \sigma}{\alpha} \le \frac{nr}{2\alpha'}, \\
          \end{align*}
       	  
            establishing \eqref{eqprop2}, as desired.
	
	\paragraph{Case 2}
	Next, we consider the case that $\dtw(A, B) >  r n$. In this case, we wish to show that
        \begin{equation}
          \dtw(\wt{x}, \wt{y}) > nr/2.
          \label{eqprop3}
        \end{equation}
        Consider an optimal correspondence $\wt{D^*}$ between $\wt{x}$
        and $\wt{y}$ which, without loss of generality, has at most
        $2n$ edges.  Note that $\wt{D^*}$ also defines a
        correspondence $D^*$ between $x$ and $y$. Each edge in
        $\wt{D^*}$ costs at most $2l$ less than the corresponding edge
        in $D^*$. Since $D^*$ has at most $2n$ edges, it follows that
        
        \begin{align*}
          \dtw(\wt{x}, \wt{y}) & \ge \dtw(x, y) - 2n \cdot 2l  \\
                               & \ge rn - 4c_1 nr.
        \end{align*}
        For $c_1$ small enough, we get that $\dtw(\wt{x}, \wt{y}) >
        nr/2$, establishing \eqref{eqprop3}, as desired.
\end{proof}

The proof of Theorem \ref{thmmainthmgeneralized} also require a
slightly strengthened version of Lemma \ref{lemma:gap_to_approx},
provided next.
\begin{lemma}[From Gap DTW to Approx.]
	\label{lemma:gap_to_approx2}
	Let $\delta\in(0,1)$ be an error parameter and suppose
        $\Sigma$ is a metric space with polynomially bounded aspect
        ratio $\kappa$. Suppose for any $r\ge 1$ and $1 \le \alpha\le n$
        there is a $(1-\delta)$-accurate one-way protocol for $(r,
        \alpha)$-gap DTW$(\Sigma^{\le n})$ with space
        complexity $s_1(r, \alpha, \delta)$, and a
        $(1-\delta)$-accurate one-way protocol for bounded
        $\alpha$-DTW($\Sigma^{\le n}$) with space complexity
        $s_2(\alpha, \delta)$, then there is a
        $(1-\delta)$-accurate one-way protocol for 
        $\alpha$-DTW$(\Sigma^{\le n})$ with space
        complexity \[ O\Bigg[\sum_{i=0}^{\lceil\log
            (2n\kappa\alpha)\rceil}s_1\bigg(2^i, \frac{\alpha}{2},
          \frac{\delta}{c \log \log n}\bigg) + s_2\bigg[\alpha,
            \frac{\delta}{2}\bigg]\Bigg],
	\]
        for some constant $c$.

         Moreover, if the protocols for $(r, \alpha)$-gap DTW and
         bounded $\alpha$-DTW are efficient, then so is the protocol
         for $\alpha$-DTW.
\end{lemma}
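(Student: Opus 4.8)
The plan is to keep the protocol of Lemma~\ref{lemma:gap_to_approx} essentially unchanged on the sending side and to replace Bob's linear scan by a binary search. Alice still sends $\wt{\sk}(x)$ for a $(1-\delta/2)$-accurate bounded $\alpha$-DTW protocol $\wt\Pi=(\wt{\sk},\wt F)$, together with sketches $\sk_i(x)$ for $(2^i,\alpha/2)$-gap DTW protocols $\Pi_i=(\sk_i,F_i)$ for every $i\in\{0,1,\dots,\lceil\log(2n\kappa\alpha)\rceil\}$; the only change is that each $\Pi_i$ need only be $(1-\delta/(c\log\log n))$-accurate. Since $\kappa$ and $\alpha$ are $\poly(n)$, there are only $\lceil\log(2n\kappa\alpha)\rceil+1=O(\log n)$ scales, which is what produces the claimed space bound. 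On Bob's side, if $\wt F(\wt{\sk}(x),y)$ is not ``Fail'' he returns it, exactly as before. Otherwise, rather than scanning $i=0,1,2,\dots$ for the smallest index with $F_i(\sk_i(x),y)=0$, Bob binary-searches: he keeps an interval $[L,R]$, initialized with a virtual left endpoint $L=-1$ (under the convention ``$F_{-1}=1$'') and $R=\lceil\log(2n\kappa\alpha)\rceil$, probes $F_j$ at $j=\lfloor(L+R)/2\rfloor$, sets $R\leftarrow j$ if the answer is $0$ and $L\leftarrow j$ if it is $1$, and repeats until $R=L+1$; he also probes $F_R$ once at the top index. This costs $O(\log\log(n\kappa\alpha))=O(\log\log n)$ probes in total.

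For correctness I would first let $Q$ be the set of indices Bob actually probes, so $|Q|=O(\log\log n)$, and union-bound: $\wt\Pi$ together with all $\Pi_i$ for $i\in Q$ are simultaneously correct with probability at least $1-\delta/2-|Q|\cdot\delta/(c\log\log n)\ge 1-\delta$ once $c$ is a large enough constant. Conditioning on this event, the binary search maintains the invariant ``$F_L=1$ or $L=-1$, and $F_R=0$'': it is preserved at every step by construction, and it is valid at initialization because the top-scale protocol, when correct, outputs $0$ --- indeed $\dtw(x,y)\le 2n\kappa$, whereas at scale $\lceil\log(2n\kappa\alpha)\rceil$ the answer $1$ is valid only for much larger distances, by the definition of $(2^i,\alpha/2)$-gap DTW. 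So the search terminates with an adjacent pair $(L,L+1)$ such that $F_L=1$ (or $L=-1$) and $F_{L+1}=0$.

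The finish is then exactly as in Lemma~\ref{lemma:gap_to_approx}. Because $\wt F$ returned ``Fail'' we have $\dtw_0(x,y)>n/\alpha$, hence $\dtw(x,y)\ge\dtw_0(x,y)>n/\alpha$ (using that the smallest distance in $\Sigma$ is, WLOG, $1$). Decoding the two correct answers $F_L=1$ and $F_{L+1}=0$ via the definition of $(2^i,\alpha/2)$-gap DTW --- and, in the edge case $L=-1$, replacing the first of these by the bound $\dtw(x,y)>n/\alpha$ just derived --- confines $\dtw(x,y)$ to an interval of multiplicative width $O(\alpha)$. Bob returns the same estimate the protocol of Lemma~\ref{lemma:gap_to_approx} would return, with $i=L+1$ playing the role of ``the smallest index with answer $0$''; the approximation bound follows from the identical computation. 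Efficiency is immediate, since Bob evaluates only $O(\log\log n)$ of the $F_i$'s plus one $\wt F$, and Alice's sketches are built exactly as in Lemma~\ref{lemma:gap_to_approx}.

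The main obstacle, and where I expect to spend the most care, is that the sequence $F_0,F_1,\dots$ is not globally monotone: for a block of roughly $\log\alpha$ consecutive scales both answers are valid, so a correct protocol may report them there in an arbitrary pattern, and a naive binary search will generally fail to find the true first index with answer $0$. The key point to make is that this index is never needed --- \emph{any} adjacent pair $(L,L+1)$ with $F_L=1$ and $F_{L+1}=0$ works, since decoding these two correct answers already localizes $\dtw(x,y)$ to a factor-$O(\alpha)$ window, which is precisely the precision an $\alpha$-approximation (assembled from $\alpha/2$-accurate primitives, as in the original proof) is allowed. Verifying that the loop invariant really survives the ambiguous block, and that the terminal window has multiplicative width $O(\alpha)$ no matter where termination lands inside that block, is the technical heart of the argument.
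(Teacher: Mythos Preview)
Your proposal is correct and essentially identical to the paper's own proof: Alice sends the same collection of sketches with the reduced per-scale accuracy $1-\delta/(c\log\log n)$, and Bob replaces the linear scan by a binary search for an adjacent pair $(i-1,i)$ with $F_{i-1}=1$ and $F_i=0$, then invokes the union bound over the $O(\log\log n)$ probed indices. Your explicit discussion of why the search succeeds despite the non-monotone ``ambiguous block'' (any adjacent $1\!\to\!0$ transition suffices, not the first one) is a point the paper leaves implicit, but the argument and the returned estimate $2^{L+1}n/\alpha$ match.
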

\begin{proof}
	Let $c$ be a constant whose value is selected to be
        sufficiently large. Without loss of generality the smallest
        distance in $\Sigma$ is $1$ and the largest distance is
        $\kappa$. In order to solve $\alpha$-DTW, Alice runs
        essentially the same protocol as described in the proof of
        Lemma \ref{lemma:gap_to_approx}, although with slightly
        different parameters. In particular, she runs a
        $(1-\delta/2)$-accurate protocol $\wt{\Pi} =(\wt{\sk},
        \wt{F})$ for the bounded $\alpha$-DTW problem and
        $(1-\frac{\delta}{c \log \log n})$-accurate protocols $\Pi_i
          =(\sk_i, F_i)$ for the $(2^i, \alpha / 2)$-gap DTW problem
          on $x$ for each $i \in \{0, 1, \ldots,\lceil \log(2n\kappa\alpha)
          \rceil\}$.  Alice then sends $\wt{\sk}(x), {\sk}_1(x),
                  {\sk}_2(x), \ldots, {\sk}_{\lceil \log(2n\kappa\alpha)
                    \rceil}(x)$ to Bob.

         Bob's decoding procedure, however, is slightly more advanced
         then in the proof of Lemma \ref{lemma:gap_to_approx}. If
         $\wt{F}(\wt{\sk}(x), y)$ is not ``Fail'', then Bob returns
         $\wt{F}(\wt{\sk}(x), y)$ as the answer for $\alpha$-DTW.
         Otherwise, if $F_0(\sk_0(x), y) = 0$, he return $n
         / \alpha$. Finally, if $\wt{F}(\wt{\sk}(x), y)$ is ``Fail''
         and $F_0(\sk_0(x), y) = 1$, then Bob searches for a
         value of $i$ such that $F_{i - 1}(\sk_{i - 1}(x),
         y) = 1$ and $F_i(\sk_i(x), y) = 0$, and then
         returns $2^i n / \alpha$.

         Bob searches for $i$ via a binary search: He first examines
         ${F}_i({\sk_i}(x), y)$ for $i$ roughly half-way between
         $[0, \lceil \log(2n \kappa\alpha) \rceil]$; if
         ${F}_i({\sk_i}(x), y) = 0$, then he recurses on the
         first half of the region, and if ${F}_i({\sk_i}(x), y)
         = 1$, then he recurses on the second half of the
         region. Because ${F}_0({\sk_0}(x), y) = 1$ and
         ${F}_{\lceil \log(2n \kappa\alpha) \rceil}{\sk}_{\lceil
           \log(2n \kappa\alpha) \rceil}(x), y) = 0$ (as shown in the proof
         of Lemma \ref{lemma:gap_to_approx}), the binary search is
         guaranteed to succeed at finding some $i$ for which
         ${F}_{i - 1}({\sk_{i - 1}}(x), y) = 1$ and
         ${F}_i({\sk_i}(x), y) = 0$.

         By the same reasoning in the proof of
         \ref{lemma:gap_to_approx}, if all of the sketches
         $\wt{\sk}(x)$ and ${\sk}_i(x)$ are correct, then Bob will
         correctly solve $\alpha$-DTW. Rather than conditioning on all
         of the sketches being correct, however, we can now condition
         only on the $O(\log \log (n\kappa\alpha)) = O(\log \log n)$ sketches
         which Bob evaluates being correct. By the union bound, for
         $c$ sufficiently large, this occurs with probability at least
         $1 - \delta$, as desired.
	
	The space-complexity of the above protocol follows from a
        direct calculation, completing the proof.
\end{proof}

By combining Proposition \ref{propboundedKDTW}, Lemma
  \ref{lemma:r_a_gap_general} with $\delta$ set to $O(1 / \log \log
  n)$, and Lemma \ref{lemma:gap_to_approx2}, the proof of Theorem
  \ref{thmmainthmgeneralized} is complete.

\end{document}